\PassOptionsToPackage{dvipsnames}{xcolor}
\documentclass[journal,twoside,web]{ieeecolor}
\usepackage{generic}

\def\BibTeX{{\rm B\kern-.05em{\sc i\kern-.025em b}\kern-.08em
	T\kern-.1667em\lower.7ex\hbox{E}\kern-.125emX}}
\markboth{\hskip25pc IEEE TRANSACTIONS AND JOURNALS TEMPLATE}
{Author \MakeLowercase{\textit{et al.}}: Title}

\pdfminorversion=4

\usepackage{graphicx}
\usepackage{tikz}
\usepackage{pgfmath}
\usepackage{pgfplots}
\usepgfplotslibrary{statistics}
\pgfplotsset{compat=1.18}

\pgfplotsset{%
	compat=1.18,
	my boxplot/.style={
		boxplot,
		mark=x,
		boxplot/every box/.style={
			solid,
			draw=NavyBlue,
			fill=Black!15!White,
			line width=0.3mm,
		},
		boxplot/every whisker/.style={
			solid,
			draw=Black,
			line width=0.3mm,
		},
		boxplot/every median/.style={
			solid,
			draw=BrickRed,
			line width=0.3mm,
		},
		every mark/.style={
			draw=BrickRed!60!white,
			mark size=2pt,
			line width=0.1mm,
		},
	}
}

\usepackage[inline]{enumitem}
\usepackage[normalem]{ulem}

\usepackage{mathtools}
\usepackage{amssymb}
\usepackage{amsthm}

\theoremstyle{definition}
\newtheorem{remark}{Remark}

\theoremstyle{plain}
\newtheorem{theorem}{Theorem}
\newtheorem{proposition}{Proposition}
\newtheorem{corollary}{Corollary}
\newtheorem{lemma}{Lemma}

\newtheorem{definition}{Definition}

\newcommand{\R}{\mathbb{R}}
\newcommand{\N}{\mathbb{N}}

\newcommand{\st}{\,:\,}

\DeclareMathOperator{\diag}{diag}
\DeclareMathOperator{\col}{col}

\DeclarePairedDelimiter\norm{\vert}{\vert}
\DeclarePairedDelimiterX\inner[2]{\langle}{\rangle}{#1, #2}

\usepackage[hidelinks]{hyperref}

\usepackage[noadjust]{cite}

\begin{document}

\title{Kernel-Based Learning of \\ Stable Nonlinear Systems}

\author{
	Matteo Scandella,
	Michelangelo Bin, \IEEEmembership{Member, IEEE} and
	Thomas Parisini, \IEEEmembership{Fellow, IEEE}
	\thanks{This work has been supported by European Union's Horizon 2020 research and innovation programme under grant agreement no. 739551 (KIOS CoE).}
	\thanks{%
		Matteo Scandella is with the Department of Management, Information and Production Engineering, University of Bergamo, via Marconi 5, 24044, Dalmine (BG), Italy (email: \texttt{matteo.scandella@unibg.it}).
		Michelangelo Bin is with Department of Electrical, Electronic and Information Engineering, University of Bologna, Bologna, Italy (email: \texttt{michelangelo.bin@unibo.it}).
		During the conceptualization and first drafting of this article, M. Bin and M. Scandella were with the Department of Electrical and Electronic Engineering, Imperial College London, London SW7 2AZ, UK.
		Thomas Parisini is with the Department of Electrical and Electronic Engineering, Imperial College London, SW72AZ London, U.K.,
		He is also with the Department of Electronic Systems, Aalborg University, Denmark, and with
		the Department of Engineering and Architecture, University of Trieste, Italy
		(e-mail: \texttt{t.parisini@imperial.ac.uk}).
	}
}

\maketitle

\begin{abstract}
	Learning models of dynamical systems characterized by specific stability properties is of crucial importance in applications.
	Existing results mainly focus on linear systems or some limited classes of nonlinear systems and stability notions, and the general problem is still open.
	This article proposes a kernel-based nonlinear identification procedure to directly and systematically learn stable nonlinear discrete-time systems.
	In particular, the proposed method can be used to enforce, on the learned model, bounded-input-bounded-state stability, asymptotic gain, and input-to-state stability properties, as well as their incremental counterparts.
	To this aim, we build on the reproducing kernel theory and the Representer Theorem, which are suitably enhanced to handle stability constraints in the kernel properties and in the hyperparameters' selection algorithm.
	Once the methodology is detailed, and sufficient conditions for stability are singled out, the article reviews some widely used kernels and their applicability within the proposed framework.
	Finally, numerical results validate the theoretical findings showing, in particular, that stability may have a beneficial impact in long-term simulation with minimal impact on prediction.
\end{abstract}

\begin{IEEEkeywords}
	Nonlinear system identification,
	Incremental input-to-state stability,
	Bounded-input-bounded-state stability,
	Reproducing kernel Hilbert spaces,
	Kernel-based regularization
\end{IEEEkeywords}

\section{Introduction}
\label{sec:introduction}

\IEEEPARstart{A}{} key issue in system identification is to learn models that not only fit the observations, but also possess specific stability properties~\cite{Ljung2010a, Sjoeberg1995a}.
Indeed, stability is desirable in many applications as it provides robustness guarantees when the model is used for prediction or simulation, especially with a long time horizon.
However, most existing learning procedures do not guarantee that the learned model is stable, even when the observations are generated by a stable system.
In this article, we propose a learning approach for nonlinear systems that, instead, systematically guarantees a desired stability property on the learned models.

While classical identification and machine learning approaches do not deal with the problem of imposing stability on the learned models~\cite{Box2015a,Pintelon2012a, vanOverschee2012a}, in the case of linear time-invariant (LTI) systems such a problem is well-studied and many solutions exist.
Among others, it is worth mentioning methods based on ARMAX (AutoRegressive Moving Average eXogenous input) models~\cite{Soderstrom1981a,Nallasivam2011a}, subspace~\cite{Lacy2003a, Umenberger2016a, Umenberger2018a} and set-membership~\cite{Cerone2011a, Lauricella2020a} methods, estimation techniques working in the frequency domain~\cite{Lataire2016a}, and kernel-based approaches~\cite{Pillonetto2014a, Scandella2022a}.
Instead, in a nonlinear framework the problem is considerably less explored and fewer results are available;
moreover, the learning problem is exacerbated by the many available notions of stability and by the wide range of modeling techniques, which have led to more scattered results mainly tailored to particular cases.
For instance, there are specific methods guaranteeing stability for linear parameter-varying systems~\cite{Darwish2018a, Cerone2012a}, nonlinear finite impulse response systems~\cite{Qin1996a, Pillonetto2018a}, and linear switching systems~\cite{DeIuliis2022a}.
For more generic nonlinear structures, some works rely on neural networks~\cite{Rubio2007a, Bonassi2021a, Bonassi2022a};
more in detail~\cite{Bonassi2021a, Bonassi2022a} constrain the parameters of a neural network to impose different stability properties, although the stability constraints can easily be violated due to numerical problems arising from the imposition of constraints during the training phase.
Other approaches to guarantee stability in the nonlinear case rely on Koopman operators~\cite{Bevanda2022a, Khosravi2023c}, linear matrix inequality constraints~\cite{Umenberger2019a,Shakib2023a}, or
Gaussian process state-space models~\cite{Umlauft2020a,Xiao2020a,Beckers2016a};
in particular,~\cite{Umlauft2020a,Xiao2020a} develop a two-steps procedure that first identifies a potentially unstable model fitting the observations, and then learns a virtual control law modifying the previously-estimated model to guarantee stability.
Finally,~\cite{vanWaarde2022a, vanWaarde2023a} propose a kernel-based procedure to learn contractive maps between function spaces, thereby learning stable input-output models.

In this article, we develop a kernel-based  learning approach to directly and systematically learn nonlinear stable discrete-time models.
The proposed approach is particularly inspired by~\cite{vanWaarde2022a, vanWaarde2023a}, although here we focus on predictors and state-space models instead of input-output maps, and we target more general stability notions.
Moreover, unlike most of the aforementioned works, the proposed method is systematic, is not tailored on a specific model structure, and the stability constraints are robust to numerical errors in the solution procedure.
Furthermore, we target multiple notions of stability of primary interest in control theory, i.e., bounded-input-bounded-state (BIBS) stability~\cite{Andriano1997a}, asymptotic gain (AG)~\cite{Sontag1996a}, input-to-state stability (ISS)~\cite{sontag_smooth_1989,jiang_input--state_2001}, and their incremental counterparts $\delta$BIBS stability, $\delta$AG, and $\delta$ISS~\cite{Angeli2002a}.
The enforced stability guarantees refer both to the exogenous input of the system and to the prediction error.
The former is the main motivation of this article while, as commented in Section~\ref{sec:problem} and shown in the numerical validation, stability with respect to the prediction error may lead to a considerable benefit when the learned model is used for simulation over a long (potentially infinite) time horizon.

Kernel-based approaches are popular in system identification and machine learning because they allow to systematically enforce desired properties on the learned models by appropriately shaping the kernel function~\cite{Dinuzzo2015a,Micchelli2006a,Pillonetto2011a} or the regularization term~\cite{Mazzoleni2019b,Formentin2019a,Formentin2021a}.
More specifically, for LTI models, kernel-based techniques enable identifying bounded-input-bounded-output (BIBO) stable models without a priori knowledge of the model's order~\cite{Pillonetto2010a,Pillonetto2014a,Scandella2022a}.
The drawback of these methods is that they exploit the properties of the impulse response of an LTI system and its relation to BIBO stability, thereby ruling out possible extensions to the nonlinear case.
Existing kernel-based approaches for nonlinear systems are instead based on the estimation of input-output operators~\cite{vanWaarde2022a, vanWaarde2023a, DallaLibera2021a} or, alternatively, on the direct learning of the predictor function of the state~\cite{Umlauft2020a,Xiao2020a} or of the output~\cite{Pillonetto2011a, Pillonetto2018a, Mazzoleni2020b, Bhujwalla2016a, DallaLibera2021a} instead of the full model.
The methodology proposed in this article falls into the latter category.
In particular, we equip the technique used in~\cite{Pillonetto2011a,Mazzoleni2020b} with constraints on the learning algorithm and the hyperparameters' selection method.
In this way, the desired stability properties are guaranteed on the learned model while, at the same time, the predictor selection is optimized by fitting the available dataset.
Moreover, since the proposed methodology is an extension of the classical kernel approach, it is not tied to a single specific kernel structure nor to a specific hyperparameters' selection method. Instead, it can easily be adapted to different model-assessment rationales, e.g., empirical Bayes~\cite{MacKay1992a} or cross-validation techniques~\cite{Golub1979a,Allen1974a,Stone1974a}, and can be used with different kernel functions.

However, the price to pay for such a generality is that the procedure may lead to a complex bilevel optimization problem that, in some cases, can be challenging to solve numerically.
Moreover, the conditions enforced to guarantee stability may be conservative in general, and their enforcement may lead to a loss of prediction performance compared to the unconstrained (but possibly unstable) case.
These drawbacks worsen with the predictor's order, so as the proposed method may not scale well with the models' dimension.
All these critical aspects, however, concern individual parts of the overall procedure that can be addressed in specific application scenarios without affecting the general methodology.

\subsubsection*{Paper organization}
Section~\ref{sec:problem} lays down the problem formulation.
Section~\ref{sec:method} briefly recalls the regularized kernel method, and introduces the proposed methodology as an extension of it.
Section~\ref{sec:stability} shows how the proposed method can be used to learn BIBS stable and ISS models.
Similarly, Section~\ref{sec:δstability} considers the case of $\delta$BIBS stability and $\delta$ISS.
Section~\ref{sec:kernels} analyzes some popular kernels within the proposed framework.
Finally, Section~\ref{sec:sim} presents a numerical validation of the proposed approach.

\subsubsection*{Notations}
$\R$ and $\N$ denote the set of real and natural numbers, respectively ($0 \in \N$).
Given $n,m \in \N$, $\R^{n \times m}$ denotes the set of $n\times m$ matrices and $\N_{\ge m} \coloneq \{ x \in \N \st x \ge m \}$.
Tuples of real numbers and column vectors are used interchangeably.
Given $n \in \N$, $0_n \coloneq (0,\dots, 0) \in \R^n$, and $I_n \in \R^{n \times n}$ denotes the identity matrix.
Given two matrices $A$ and $B$, $\col(A,B)$ and $\diag(A,B)$ denote, respectively, their vertical and diagonal concatenations whenever they are well-defined.
$\norm*{x}$ denotes the Euclidean norm of $x \in \R^n$.
Let $T\subseteq\N$; we denote by $\R^T$ the set of real-valued sequences $x:T \to \R$ indexed by $T$.
For each $x \in \R^T$ and $i,j \in T$, $x_i$ is the $i$th element of $x$ and $x_{j:i} \coloneq (x_j, \ldots , x_i)$.
Furthermore, $\norm*{x}_\infty \coloneq \sup_{t\in T} \norm*{x_t} \in [0, \infty]$.
A function $\gamma : [0, \infty) \to [0, \infty)$ is of class $\mathcal{K}$ if it is continuous, strictly increasing, and $\gamma(0)=0$.
A function $\beta : [0, \infty) \times [0, \infty) \to [0, \infty)$ is of class $\mathcal{KL}$ if, for each $t \in [0, \infty)$, $\beta(\cdot, t)$ is of class $\mathcal{K}$, and for each $s\in[0,\infty)$, $\beta(s,\cdot)$ is continuous,  decreasing, and $\lim_{t\to\infty} \beta(s, t) = 0$.

\section{Problem Formulation}
\label{sec:problem}

Given a discrete-time system $\Psi \subseteq \R^{\N} \times \R^{\N}$ relating \emph{input} sequences $u \in \R^\N$ to \emph{output} sequences $y \in \R^\N$,
we consider the general problem of estimating from data a \emph{stable} model of $\Psi$ that can be used for simulation and control purposes.
While several \emph{simulation error methods (SEM)} can be found in the literature that optimize the model selection for simulation~\cite{Pillonetto2025a,Forgione2021a,Piroddi2008a,Krikelis2024a}, these techniques come with considerable issues:
\begin{enumerate*}[label={\roman*)}]
	\item the additional complexity of the underlying optimization problem (which is not convex),
	\item the need for a fixed simulation horizon,
	\item the consequent lack of guarantees of error accumulation in infinite-horizon simulation, and
	\item the absence of stability guarantees.
\end{enumerate*}
Instead, we approach the problem at hand as a \emph{prediction-error estimation method with stability constraints}.
In this way, we retain a simpler selection method, and we are able to impose stability constraints that also provide guarantees on error accumulation during simulation over infinite time horizons.

More specifically, given an input-output pair $(u,y) \in \Psi$, we define the \emph{prediction sequence} $\hat{y}^{\mathrm{pre}} \in \R^\N$ such that,
\begin{equation*}
	\forall t \in \N_{\ge m},
	\hspace{3ex}
	\hat{y}^{\mathrm{pre}}_t \coloneq f( \xi_t, u_t ),
\end{equation*}
where $\hat{y}^{\mathrm{pre}}_t$ is the prediction of $y_t$, $f : \R^{2m+1} \to \R$ is a function to be selected based on the available data, $m \in \N$, and, for each $t \in \N_{\ge m}$, $\xi_t \coloneq ( y_{t-m : t-1}, u_{t-m : t-1} ) \in \R^{2m}$ collects past values of the output $y$ and the input $u$.
We refer to $m$ as the \emph{model order}, to $f$ as the \emph{predictor}, and we define the \emph{prediction error} sequence as $e \coloneq y - \hat{y}^{\mathrm{pre}} \in \R^\N$.
Once a predictor $f$ is selected, an estimated model $\hat{\Psi}$ of $\Psi$ employable for the purpose of simulation and control can be simply obtained using the predictor $f$ iteratively as a multiple-steps ahead prediction dynamics.
Namely, given any input $u$ of System $\Psi$,  a sequence $\hat{y} \in \R^\N$ is produced by the estimated model $\hat{\Psi}$ if, for each $t \in \N_{\ge m}$, $\hat{y}_t \coloneq f ( \hat{\xi}_{t}, u_{t} )$ where $\hat{\xi}_{t} \coloneq ( \hat{y}_{t-m: t-1}, u_{t-m: t-1} )$.

The aim of the paper is to devise a method to select $f$ in such a way that $\hat{\Psi}$ is a good model and, in addition, as a system with input $u$ and output $\hat{y}$, is stable in some suitable sense.
The specific notions of stability on which this paper focuses are formally defined later in Sections~\ref{sec:stability} and~\ref{sec:δstability} by making reference to an \emph{auxiliary dynamical system} defined as follows
\begin{equation*}
	\Sigma(f) \, :
	\hspace{3ex}
	x_{t+1} = A x_t + G f( x_t, v_t ) + G \varepsilon_t + H v_t,
\end{equation*}
in which $x_t\in\R^{2m}$ is the state, $\varepsilon_t \in \R$ and $v_t \in \R$ are two exogenous inputs, and where $A \coloneq \diag(S,S)$, $G \coloneq \col(B,0_m)$, and $H \coloneq \col(0_m, B)$, with
\begin{align*}
	S & \coloneq
	\begin{bmatrix}
		0_{m-1} & I_{m-1}\\
		0 & 0^\top_{m-1}
	\end{bmatrix},
	&
	B & \coloneq
	\begin{bmatrix}
		0_{m-1} \\
		1
	\end{bmatrix}.
\end{align*}
In formalizing the required stability notions, we use $\Sigma(f)$ instead of directly $\hat{\Psi}$ to capture, at the same time, both the stability requirements we want to enforce on the estimated model $\hat{\Psi}$ with respect to the input $u$, and also the stability properties that lead to guarantees on the lack of accumulation of prediction errors in simulation.
Both points are illustrated hereafter.

First, we notice that, by construction of the matrices $A$, $G$ and $H$, the sequence $\hat{\xi}$ in a solution of $\Sigma(f)$ with $\varepsilon=0$ and $v=u$.
Namely, $\Sigma(f)$ with $\varepsilon=0$ and $v=u$, and with output $\hat{y}_t = f(x_t,u_t)$, is a \emph{state-space model of $\hat{\Psi}$}.
As a consequence, requiring stability of $\Sigma(f)$ with respect to the initial conditions and to the input $v$ implies an analogous stability property for the estimated model $\hat{\Psi}$ with respect to $u$.

Next, regarding stability guarantees for the lack of prediction-error accumulation in simulation, we notice that, for every $(u,y) \in \Psi$ and every $t \ge m$, the sequence $\xi$ is a solution of $\Sigma(f)$ with $v=u$ and $\varepsilon=e$ because, by the definition of $e$, $y_t = f(\xi_t, u_t) + e_t$ for every $t\in\N$.
Hence, $\Sigma(f)$, with inputs $v=u$ and $\varepsilon=e$, and output $y_t = f(x_t,u_t) + e_t$, is a \emph{state-space model of $\Psi$}.
Therefore, $\Sigma(f)$ with $v=u$ links the process $\Psi$ with its estimated model $\hat{\Psi}$ as, for different inputs $\varepsilon$, it describes both of them.
Since the input $v$ is the same in both cases, the difference between $y$ and its estimate $\hat{y}$ can then only be attributed to the different initialization and the different choice of $\varepsilon$.
As a consequence, requiring stability of $\Sigma(f)$ with respect to the input $\varepsilon$ means providing deterministic guarantees on the extent by which past prediction errors may affect future ones.
To better illustrate this latter point, consider the case in which $\Sigma(f)$ is $\delta$ISS (see Definition~\ref{def:δISS}).
Since $\xi$ is a solution of $\Sigma(f)$ with $v=u$ and $\varepsilon=e$, and $\hat{\xi}$ is a solution of $\Sigma(f)$ with $v=u$ and $\varepsilon=0$, then $\delta$ISS guarantees that $\xi - \hat{\xi}$, and thus $y - \hat{y}$, is bounded by a continuous function of the \emph{maximum} prediction error attained before $t$.
Hence, past prediction errors do not accumulate indefinitely during simulation, as only the maximum prediction error matters in the bound.

To learn a predictor $f$ that fits the available data and satisfies the stability constraints, we assume to have at our disposal a \emph{dataset} $\mathcal{D} \coloneq (\bar{u}_i, \bar{y}_i)_{i=1}^n$, which is a collection of $n \in \N_{> m}$ data points $(\bar{u}_i, \bar{y}_i)$ taken from an input-output pair of $\Psi$ starting from an unknown time $t_0 \in \N$.
In particular, for all $i=\{1, \ldots, n\}$, $(\bar{u}_i, \bar{y}_i) \coloneq (u_{t_0 + i - 1}, y_{t_0 + i - 1})$, in which $(u,y) \in \Psi$.
Then, in qualitative terms, our goal is to devise an algorithm that selects a predictor $f : \R^{2m+1} \to \R$ in such a way that
\begin{enumerate*}[label={\textbf{(P\arabic*)}}, ref=\textbf{(P\arabic*)}]
	\item \label{prob:main:small} for every input-output pair $(u,y) \in \Psi$, the corresponding error sequence $e$ is ``small'', and
	\item \label{prob:main:stability} $\Sigma(f)$ has a given desired stability property with respect to both $v$ and $\varepsilon$.
\end{enumerate*}
The fitting performance requirement~\ref{prob:main:small} will be approached in Section~\ref{sec:method} as a regularized least-squares problem, as is typical in kernel-based identification.
The stability requirement~\ref{prob:main:stability} is instead addressed by the insertion of suitable constraints in the identification procedure, and it will refer to several specific stability properties.
In particular, Section~\ref{sec:stability} focuses on BIBS stability and ISS, while Section~\ref{sec:δstability} is devoted to their incremental counterparts $\delta$BIBS stability and $\delta$ISS.

\section{Methodology}
\label{sec:method}

This section introduces the proposed learning method as an extension of the kernel-based regularized regression for the selection of the predictor $f$ enabling the possibility of enforcing desired stability conditions on $\Sigma(f)$.
Here, we only present the general methodology, deferring the embedding of specific stability properties to the subsequent sections.

\subsection{Regularized kernel-based learning}
\label{sec:method:classic}

The kernel-based regularized regression is a well-established technique in both machine learning~\cite{Scholkopf2018a, Vapnik1998a, Wahba1990a} and system identification~\cite{Pillonetto2011a, Pillonetto2014a, Lataire2017a, Pillonetto2022b, Chen2012a} selecting the predictor function $f$ within a hypothesis set of candidates according to a predetermined criterion, usually expressed as an optimization problem.
The hypothesis set is a Reproducing Kernel Hilbert Space (RKHS)~\cite{Aronszajn1950a, Saitoh2016a} generated by a function $k_\eta : \R^{2m+1} \times \R^{2m+1} \to \R$ defined by a parameter $\eta \in \Phi_k \subseteq \R^{n_\eta}$ ($n_\eta \in \N$).
We denote by $\mathcal{H}(k_\eta)$ the RKHS associated with $k_\eta$, while $\norm*{\,\cdot\,}_{k_\eta}$ and $\inner*{\cdot}{\cdot}_{k_\eta}$ denote the norm and the inner product on $\mathcal{H}(k_\eta)$, respectively.
We refer to $k$ as the \emph{kernel structure}, and to $k_\eta$ as the actual \emph{kernel function} defined using the parameter $\eta$ and the structure $k$.
Furthermore, we restrict $\Phi_k$ to contain only values of $\eta$ that make $k_\eta$ a valid kernel function, namely, such that $k_\eta$ is symmetric and positive semidefinite~\cite{Rasmussen2006a,Aronszajn1950a}.
Given $z \in \R^{2m+1}$, we let $k_\eta^z : \R^{2m+1} \to \R$ be the \emph{representer function} (or \emph{kernel slice}) of $z$, defined as $k_\eta^z (\cdot) \coloneq k_\eta(\cdot, z)$.

Given a kernel structure $k$ and a dataset $\mathcal{D}$, the predictor $f$ is chosen as the solution of the optimization problem
\begin{equation} \label{eq:opt_c:classic}
	\min_{f \in \mathcal{H}(k_\eta)}
	\hspace{3ex}
	\sum_{i=m+1}^n \big( f(\bar{z}_i) - \bar{y}_i \big)^2 + \beta \norm*{f}_{k_\eta}^2,
\end{equation}
where $(\beta, \eta) \in (0, \infty) \times \Phi_k$ are parameters to be tuned and, for each $t \in \{m+1, \ldots ,n\}$, $\bar{z}_t \coloneq ( \bar{y}_{t-m : t-1}, \bar{u}_{t-m : t} ) \in \R^{2m+1}$ is taken from the dataset $\mathcal{D}$.
The first term of the cost function weights how well the model fits the dataset.
Hence, selecting $f$ according to~\eqref{eq:opt_c:classic} addresses~\ref{prob:main:small} as is typical in kernel-based methods~\cite{Pillonetto2011a,Mazzoleni2022a}.
Instead, the second term is a regularization term balancing the complexity of $f$ and the adherence to the available data.
It is well-known that~\eqref{eq:opt_c:classic} has a unique solution~\cite[Prop.~8]{DeVito2004a} of the form
\begin{equation} \label{eq:def:f_c}
	f \coloneq \sum_{i=m+1}^n c_i k_\eta^{\bar{z}_i} \ \in \ \mathcal{H}(k_\eta) \, ,
\end{equation}
for some $c = c_{m+1:n} \in \R^{n-m}$.
In particular, the following result holds.
\begin{theorem}[Representer Theorem~\cite{Argyriou2014a}]
	\label{thm:representer:classic}
	There exists $c \in \R^{n-m}$ such that $f$, defined in~\eqref{eq:def:f_c}, solves~\eqref{eq:opt_c:classic}.
\end{theorem}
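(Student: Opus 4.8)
The plan is to invoke the classical orthogonality argument behind the Representer Theorem, specialized to the regularized least-squares cost in~\eqref{eq:opt_c:classic}. First I would let $V$ be the subspace of $\mathcal{H}(k_\eta)$ spanned by the representer functions $k_\eta^{\bar z_{m+1}}, \ldots, k_\eta^{\bar z_n}$. Being finite-dimensional, $V$ is closed in $\mathcal{H}(k_\eta)$, so $\mathcal{H}(k_\eta) = V \oplus V^\perp$ and every $g \in \mathcal{H}(k_\eta)$ decomposes uniquely as $g = g_\parallel + g_\perp$ with $g_\parallel \in V$ and $g_\perp \in V^\perp$.

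Second, I would show that replacing $g$ by its projection $g_\parallel$ never increases the cost. By the reproducing property, for each $i \in \{m+1, \ldots, n\}$ we have $g(\bar z_i) = \inner*{g}{k_\eta^{\bar z_i}}_{k_\eta} = \inner*{g_\parallel}{k_\eta^{\bar z_i}}_{k_\eta} = g_\parallel(\bar z_i)$, because $k_\eta^{\bar z_i} \in V$ is orthogonal to $g_\perp$; hence the data-fitting term $\sum_{i=m+1}^n (g(\bar z_i) - \bar y_i)^2$ depends on $g$ only through $g_\parallel$. At the same time, the Pythagorean identity yields $\norm*{g}_{k_\eta}^2 = \norm*{g_\parallel}_{k_\eta}^2 + \norm*{g_\perp}_{k_\eta}^2 \ge \norm*{g_\parallel}_{k_\eta}^2$. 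Since $\beta > 0$, adding the two contributions shows that the cost at $g_\parallel$ is no larger than the cost at $g$ (and strictly smaller unless $g_\perp = 0$).

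Finally, I would combine this with the known fact~\cite[Prop.~8]{DeVito2004a} that~\eqref{eq:opt_c:classic} admits a unique minimizer $f^\star$: the previous step gives that $f^\star_\parallel$ is also a minimizer, so by uniqueness $f^\star = f^\star_\parallel \in V$. Because the elements of $V$ are precisely the functions $\sum_{i=m+1}^n c_i k_\eta^{\bar z_i}$ with $c \in \R^{n-m}$, this produces the desired coefficient vector $c$. Alternatively, to keep the argument self-contained, one can establish existence of a minimizer directly: the restriction of the cost to the finite-dimensional space $V$ is continuous and coercive --- the coercivity coming precisely from the term $\beta\norm*{\cdot}_{k_\eta}^2$ with $\beta > 0$ --- hence it attains a minimum on $V$ by the Weierstrass theorem, and the projection inequality above promotes this minimizer to a global one over all of $\mathcal{H}(k_\eta)$.

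I do not expect a substantive obstacle: the two points that deserve a line of justification are that $V$ is closed so that the orthogonal projection is well defined (automatic since $\dim V < \infty$) and that the component $g_\perp$ genuinely drops out of the fit term (immediate from the reproducing property, as the fit depends on $g$ only through the point evaluations $g(\bar z_i)$). The regularizer $\beta\norm*{\cdot}_{k_\eta}^2$ with $\beta > 0$ is what simultaneously guarantees strict convexity (hence uniqueness) and, in the self-contained variant, existence of the minimizer.
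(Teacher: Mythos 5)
Your argument is correct: the orthogonal-decomposition argument (project onto the span of the kernel slices, note the fit term only sees point evaluations while the norm can only decrease), combined with either the cited existence/uniqueness result or your self-contained coercivity-on-$V$ variant, is exactly the standard proof of this theorem. The paper itself merely cites~\cite{Argyriou2014a} for this statement, but the identical decomposition into $\mathcal{H}_0$ and $\mathcal{H}_\bot$ is what it uses in Appendix~\ref{thm:representer:stable_c::proof} to prove the constrained extension (Theorem~\ref{thm:representer:stable_c}), so your approach is essentially the same as the paper's.
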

More specifically, by using the properties of the RKHS and Theorem~\ref{thm:representer:classic}, it is possible to show that the optimal solution is obtained with $c = \big(K_\eta + \beta I_{n-m}\big)^{-1} \bar{y}$, where $K_\eta \in \R^{(n-m) \times (n-m)}$ is the matrix with the $(i,j)$th element given by $k_\eta(\bar{z}_{m+i}, \bar{z}_{m+j})$, and $\bar{y} \coloneq \bar{y}_{m+1:n} \in \R^{n-m}$.

The shape of the kernel function $k_\eta$ determines the properties of the selected predictor $f$.
By suitably choosing the kernel structure $k$, one can embed in all the functions of the RKHS (hence, in $f$) some prior knowledge such as continuity, smoothness, or square-integrability.
In the case of linear systems, the problem of selecting a good kernel structure is well-studied, and it is possible to select a kernel structure guaranteeing desirable properties including asymptotic stability~\cite{Dinuzzo2015a,Chen2012a}.
The nonlinear case is instead less studied (see, e.g.,~\cite{Pillonetto2011a, Umlauft2020a, Mazzoleni2020b}), and the proposed kernels are typically not as well-motivated as in the linear case.
Moreover, it is typically not possible to embed stability constraints directly within $k$ (see Remark~\ref{rem:k_stability}).
Instead, as it will be clarified in the following, the key to impose stability turns out to be the regularization term in~\eqref{eq:opt_c:classic}, which favors functions with smaller norm.

Determining $f$ by solving~\eqref{eq:opt_c:classic}, requires the selection of the hyperparameters $\beta$ and $\eta$.
Their value is of primary importance because it determines the trade-off between complexity of the model and fitting performance, thereby causing or avoiding over- and under-fitting.
In the literature, this problem is also known as \emph{model selection} (or \emph{assessment}), and it is historically solved using techniques based on cross-validation~\cite{Herzberg1969a,Rasmussen2006a}.
In particular, \emph{k-fold cross-validation}~\cite{Stone1974a,Rasmussen2006a}, \emph{leave-one-out cross-validation}~\cite{Allen1974a,Wahba1990a,Rasmussen2006a}, and \emph{generalized cross-validation}~\cite{Golub1979a} are popular methods that can be used in a wide range of applications.
More recently, also \emph{Empirical Bayes (EB)}~\cite{MacKay1992a} methods have shown to be very efficient~\cite[Sec.~5.4.1]{Rasmussen2006a}.
All the previous techniques have successfully been used in kernel-based identification for both linear systems~\cite{Pillonetto2015b,Mu2018a,Mu2018b} and nonlinear ones~\cite{Pillonetto2011a, Mazzoleni2022a, Formentin2019a} when no guarantee is provided on the stability properties of the identified models, and they all boil down to the minimization of a cost function.
Indeed, in general, one can cast the model selection problem in terms of the following optimization problem
\begin{equation} \label{eq:hp_sel_gen:classic}
	\min_{\beta \in [\iota, \infty), \, \eta \in \Phi_k}
	\hspace{3ex}
	J(\beta, \eta),
\end{equation}
in which the cost function $J : (0,\infty) \times \Phi_k \to \R$ defines the actual method implemented, and $\iota \in (0, \infty)$ is a parameter introduced to make the feasibility set closed.
It is important to notice that $J$ in~\eqref{eq:hp_sel_gen:classic} typically depends on the dataset $\mathcal{D}$, and often explicitly on the solution of~\eqref{eq:opt_c:classic}.
Therefore,~\eqref{eq:opt_c:classic} and~\eqref{eq:hp_sel_gen:classic} often form a \emph{bilevel optimization problem}~\cite{bard_practical_1998} whose numerical solution may require a complex iterative procedure solving~\eqref{eq:opt_c:classic} and~\eqref{eq:hp_sel_gen:classic} multiple times.

\subsection{Modifying the kernel method for stability guarantees}
\label{sec:method:stable}

The method explained so far in Section~\ref{sec:method:classic} only concerns Point~\ref{prob:main:small}, but it does not address Point~\ref{prob:main:stability}, since no specific stability property is a priori guaranteed on the learned model $\Sigma(f)$.
To address Point~\ref{prob:main:stability}, we modify the optimization problems~\eqref{eq:opt_c:classic} and~\eqref{eq:hp_sel_gen:classic} to enforce a specific stability property on the learned model.
In particular, we first restrict the kernel hyperparameter $\eta$ to belong to a set $\Omega_k \subseteq \Phi_k$, called \emph{viability set}, whose definition determines the actual desired stability property on $\Sigma(f)$.
The construction of the viability set $\Omega_k$ is discussed in detail in Sections~\ref{sec:stability} and~\ref{sec:δstability}.
Once $\Omega_k$ is given, the hyperparameter selection problem~\eqref{eq:hp_sel_gen:classic} is modified as follows
\begin{equation} \label{eq:hp_sel_gen:stable}
	\min_{\beta \in [\iota, \infty), \, \eta \in \Omega_k}
	\hspace{3ex}
	J(\beta, \eta).
\end{equation}
It is worth noticing that the proposed approach does not rely on a specific cost function $J$ used in~\eqref{eq:hp_sel_gen:classic}.
Hence, it is possible to use all the various model assessment methods discussed in Section~\ref{sec:method:classic} that can be written in the form of~\eqref{eq:hp_sel_gen:classic}.

After solving~\eqref{eq:hp_sel_gen:stable}, we modify the optimization problem~\eqref{eq:opt_c:classic} by adding a constraint on the norm of $f$.
In particular, we consider the following constrained problem
\begin{equation} \label{eq:opt_c:stable}
	\left\{
		\begin{aligned}
			&\min_{f \in \mathcal{H}(k_\eta)}
			\hspace{3ex}
			\sum_{i=m+1}^n \big( f(\bar{z}_i) - \bar{y}_i \big)^2 + \beta \norm*{f}_{k_\eta}^2
			\\
			&\hspace{2ex}\textnormal{subject to}
			\hspace{3ex}
			m \norm*{f}_{k_\eta}^2 \le \chi
		\end{aligned}
	\right. \, ,
\end{equation}
in which $\chi \in (0,1)$ is a parameter less than, but close to $1$ needed to ensure that the feasibility set is closed (which is required for the numerical solution of the optimization problem~\cite[Ch.~2]{bertsekas_nonlinear_2016}).

Theorem~\ref{thm:representer:stable_c} below (proved in Appendix~\ref{thm:representer:stable_c::proof}) extends Theorem~\ref{thm:representer:classic} to the above constrained case by guaranteeing that~\eqref{eq:opt_c:stable} always has at least one solution and by providing the analytical expression of one of them.
This theorem uses the function $\gamma : (0, \infty) \to \R$ defined by
\begin{equation} \label{eq:def:gamma}
	\gamma(\alpha)
	\coloneq
	m \bar{y}^\top \left( K_\eta + \alpha I_n \right)^{-1} K_\eta \left( K_\eta + \alpha I_n \right)^{-1} \bar{y} - \chi.
\end{equation}
\begin{theorem} \label{thm:representer:stable_c} %
	Problem~\eqref{eq:opt_c:stable} is solvable, and a solution is given by~\eqref{eq:def:f_c} with
	\begin{equation} \label{eq:vec_c:stable}
		c = \big(K_\eta + \max(\bar{\alpha}, \beta) I_{n-m}\big)^{-1} \bar{y}
	\end{equation}
	in which $\bar{\alpha} \in [0, \infty)$ is such that $\gamma(\bar{\alpha}) = 0$, if it exists, or
	$\bar{\alpha} = 0$ otherwise.
\end{theorem}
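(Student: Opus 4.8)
The statement asserts two things: that the constrained problem~\eqref{eq:opt_c:stable} is solvable, and that a solution has the closed form~\eqref{eq:def:f_c}–\eqref{eq:vec_c:stable}. The plan is to treat~\eqref{eq:opt_c:stable} as a convex program and analyze it through its KKT/Lagrangian structure. First I would note that the objective in~\eqref{eq:opt_c:stable} is continuous and strictly convex in $f$, and the feasibility set $\{f \in \mathcal{H}(k_\eta) \st m\|f\|_{k_\eta}^2 \le \chi\}$ is a closed, bounded (hence weakly compact, since $\mathcal{H}(k_\eta)$ is a Hilbert space), nonempty convex set — it contains $f = 0$. By the direct method (weak lower semicontinuity of the convex continuous objective on a weakly compact set), a minimizer exists and is unique. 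This disposes of solvability.

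For the analytical form, I would set up the Lagrangian $L(f,\lambda) = \sum_{i}(f(\bar z_i)-\bar y_i)^2 + \beta\|f\|_{k_\eta}^2 + \lambda(m\|f\|_{k_\eta}^2 - \chi)$ with multiplier $\lambda \ge 0$. Since the constraint is a convex inequality and Slater's condition holds ($f=0$ is strictly feasible as $\chi>0$), strong duality and the KKT conditions characterize the optimum. For fixed $\lambda$, minimizing $L(\cdot,\lambda)$ is exactly an unconstrained regularized least-squares problem of the form~\eqref{eq:opt_c:classic} with regularization weight $\beta + \lambda m$ in place of $\beta$; hence by Theorem~\ref{thm:representer:classic} and the stated formula for its solution, the minimizer is~\eqref{eq:def:f_c} with $c = (K_\eta + (\beta+\lambda m) I_{n-m})^{-1}\bar y$. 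It remains to pin down the optimal $\lambda$. Writing $\alpha \coloneq \beta + \lambda m \ge \beta$, the squared norm of the candidate is $\|f\|_{k_\eta}^2 = c^\top K_\eta c = \bar y^\top (K_\eta+\alpha I)^{-1} K_\eta (K_\eta+\alpha I)^{-1}\bar y$, which (for the $(n-m)$-dimensional problem) is exactly $(\gamma(\alpha)+\chi)/m$ with $\gamma$ as in~\eqref{eq:def:gamma}. One checks $\gamma$ is continuous and strictly decreasing on $(0,\infty)$ with $\gamma(\alpha)\to -\chi < 0$ as $\alpha\to\infty$. Two cases arise by complementary slackness: if the unconstrained solution ($\lambda=0$, i.e. $\alpha=\beta$) is already feasible, i.e. $\gamma(\beta)\le 0$, then it is optimal and we take $\bar\alpha = 0$, so $\max(\bar\alpha,\beta)=\beta$; otherwise $\gamma(\beta)>0$, the constraint is active, and by the intermediate value theorem there is a unique $\bar\alpha>\beta$ with $\gamma(\bar\alpha)=0$ — but the statement's $\bar\alpha$ is defined as \emph{any} nonnegative root of $\gamma$, which (when $\gamma(\beta)>0$) must satisfy $\bar\alpha>\beta$, so again $\max(\bar\alpha,\beta)=\bar\alpha$ gives the active-constraint solution with $\lambda = (\bar\alpha-\beta)/m \ge 0$, consistent with KKT. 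In both cases~\eqref{eq:vec_c:stable} is recovered.

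The main obstacle I anticipate is not any single computation but the careful bookkeeping between the root $\bar\alpha$ (as abstractly defined in the theorem: a nonnegative zero of $\gamma$ if one exists, else $0$) and the correct regularization level $\max(\bar\alpha,\beta)$: one must verify that $\max(\bar\alpha,\beta)$ always produces a KKT-feasible point, i.e. that it never \emph{under}-regularizes below $\beta$ and never leaves the constraint violated. Concretely, the subtle point is showing that when a root $\bar\alpha \ge 0$ exists it is automatically $\ge$ the threshold at which the constraint becomes active (so that clamping from below by $\beta$ via $\max(\cdot,\beta)$ is harmless), and that when no root exists $\gamma$ is everywhere negative so the unconstrained solution is feasible. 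A secondary technical point is the dimension mismatch in~\eqref{eq:def:gamma}, which is written with $I_n$ and $\bar y\in\R^n$ whereas the effective problem is $(n-m)$-dimensional; I would treat $K_\eta$, $\bar y$ consistently as the $(n-m)$-dimensional objects indexed from $m+1$ to $n$, matching~\eqref{eq:def:f_c} and~\eqref{eq:vec_c:stable}, and flag this as a typographical identification. Everything else — existence via weak compactness, the reduction to regularized least squares for fixed multiplier, monotonicity of $\gamma$ — is routine.
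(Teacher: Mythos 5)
Your proposal is correct and arrives at the paper's formula through the same KKT/complementary-slackness bookkeeping (the function $\gamma$, the identification $\alpha=\beta+\lambda m$, monotonicity of $\gamma$, and the two-case analysis), but the route to the finite representation and to existence is genuinely different. The paper never invokes infinite-dimensional duality: it first proves a \emph{constrained} representer theorem from scratch via the orthogonal decomposition $\mathcal{H}(k_\eta)=\mathcal{H}_0\oplus\mathcal{H}_\bot$, observing that discarding the component in $\mathcal{H}_\bot$ decreases both the cost and the constraint functional, which reduces~\eqref{eq:opt_c:stable} to a finite-dimensional convex program in $c$; there the KKT conditions are sufficient, an explicit KKT point with $\alpha=\max(\bar\alpha,\beta)$ is exhibited (cases $\beta\le\bar\alpha$ and $\beta>\bar\alpha$), and solvability comes for free, with no separate compactness argument. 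You instead prove existence abstractly (weak compactness of the ball plus weak lower semicontinuity), then work with the Lagrangian directly in the RKHS, using Slater's condition and strong duality to produce a multiplier, and you obtain the representer form by applying the unconstrained Theorem~\ref{thm:representer:classic} to the Lagrangian subproblem with inflated regularization $\beta+\lambda m$. Your route is shorter because it reuses Theorem~\ref{thm:representer:classic}, but it silently relies on infinite-dimensional convex-duality facts (existence of a multiplier, the saddle-point property, uniqueness of the Lagrangian minimizer to identify it with the primal optimum) that the paper's finite-dimensional reduction makes elementary. Two small points: $\gamma$ need not be strictly decreasing in general (e.g., $K_\eta\bar y=0$ gives $\gamma\equiv-\chi$); non-increasing, as the paper uses, suffices, and strictness is only available—and only needed—when $\gamma(\beta)>0$. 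Also, in your first case the phrase ``we take $\bar\alpha=0$'' should be replaced by the observation that when $\gamma(\beta)\le 0$ any root of $\gamma$ lies at or below $\beta$ (by monotonicity), so $\max(\bar\alpha,\beta)=\beta$ regardless of how the theorem's $\bar\alpha$ is chosen; you flag exactly this bookkeeping and it does close as you indicate. Your reading of~\eqref{eq:def:gamma} with the $(n-m)$-dimensional $K_\eta$ and $\bar y$ matches the paper's intent (the $I_n$ there is a typo the paper itself repeats in its proof).
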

Since $\norm*{f}_{k_\eta}^2 = c^\top K_\eta c$ (see the proof of Theorem~\ref{thm:representer:stable_c}), Equation~\eqref{eq:vec_c:stable} implies that the constraint on the RKHS norm in~\eqref{eq:opt_c:stable} is potentially translated to a lower bound on the regularization parameter.
This is in line with the typical effect of increasing the regularization parameter, which forces the optimal predictor to have a lower norm.

In Sections~\ref{sec:stability} and~\ref{sec:δstability}, we show that, by defining a suitable viability set $\Omega_k$, the model obtained using the proposed procedure is guaranteed to be BIBS stable, ISS, $\delta$BIBS stable and/or $\delta$ISS.
To this aim, we follow a rationale similar to~\cite{vanWaarde2022a,vanWaarde2023a}, also used in the preliminary conference version of this article~\cite{Scandella2023b}:
We first derive a sufficient condition on the predictor $f$ guaranteeing that $\Sigma(f)$ has the desired stability property.
Then, we define the viability set $\Omega_k$ to ensure that, for each $\eta\in\Omega_k$, the predictor $f$, obtained by solving~\eqref{eq:hp_sel_gen:stable} and~\eqref{eq:opt_c:stable}, satisfies such a sufficient condition.
We stress that, as in the classic kernel method described in Section~\ref{sec:method:classic}, also Problems~\eqref{eq:hp_sel_gen:stable} and~\eqref{eq:opt_c:stable} may give rise to a bilevel optimization problem with the consequent complications, since $J$ presents the same dependency on the dataset and/or on the solution of~\eqref{eq:opt_c:stable} as before.
Nevertheless, we underline that the viability set $\Omega_k$ does not depend on the dataset.
Therefore, if~\eqref{eq:hp_sel_gen:stable} is only solved for feasibility and not optimality, Problems~\eqref{eq:hp_sel_gen:stable} and~\eqref{eq:opt_c:stable} can be solved once and sequentially.

\section{Learning BIBS stable and ISS Models}
\label{sec:stability}

This section shows how the methodology proposed in Section~\ref{sec:method:stable} can be used to guarantee that the learned predictor is BIBS stable, ISS, or has a \emph{weak Asymptotic Gain} (wAG) property.

\subsection{Stability notions and results}
\label{sec:stability:notions}

The notions of BIBS stability, ISS, and wAG are formalized hereafter.
\begin{definition}[\cite{Andriano1997a}] \label{def:BIBS}
	System $\Sigma(f)$ is said to be \emph{BIBS stable} if for every $q \in [0,\infty)$, there exists $w \in [0,\infty)$, such that every solution $(x, v, \varepsilon)$ of $\Sigma(f)$ satisfies
	\begin{equation*}
		\norm*{x_0} \le q, \norm*{v}_\infty \le q, \norm*{\varepsilon}_\infty \le q
		\Rightarrow \norm*{x}_\infty \le w .
	\end{equation*}
\end{definition}
\begin{definition}[\cite{sontag_smooth_1989,jiang_input--state_2001}] \label{def:ISS}
	System $\Sigma(f)$ is said to be \emph{ISS} if there exist a class-$\mathcal{KL}$ function $\alpha$ and two class-$\mathcal{K}$ functions $\kappa_{\mathrm{u}}$ and $\kappa_{\mathrm{e}}$ such that every solution $(x, v, \varepsilon)$ of $\Sigma(f)$ satisfies
	\begin{equation*}
		\forall t \in \N,
		\hspace{1ex}
		\norm*{x_t}
		\le \alpha \left( \norm*{x_0}, t \right)
		+ \kappa_{\mathrm{u}} \left( \norm*{v_{0:t}}_\infty \right)
		+ \kappa_{\mathrm{e}} \left( \norm*{\varepsilon_{0:t}}_\infty \right).
	\end{equation*}
\end{definition}
\begin{definition} \label{def:wAG}
	System $\Sigma(f)$ is said to have the \emph{wAG} property if there exist $b \in [0, \infty)$ and a class-$\mathcal{K}$ function $\kappa$ such that every solution $(x, v, \varepsilon)$ of $\Sigma(f)$ satisfies
	\begin{equation*}
		\limsup_{t\to\infty} \norm*{x_t} \le b + \kappa \left(\limsup_{t\to\infty} \max( \norm*{v_t}, \norm*{\varepsilon_t} ) \right).
	\end{equation*}
	In particular, $b$ and $\kappa$ are called the \emph{asymptotic bias} and \emph{asymptotic gain}, respectively.
	If $b=0$, we say that $\Sigma(f)$ has the \emph{asymptotic gain (AG)} property~\cite{Sontag1996a}.
\end{definition}
The following proposition establishes some sufficient conditions on the predictor $f$ guaranteeing that $\Sigma(f)$ has one or more of the considered stability properties.
\begin{proposition} \label{prop:stability:f} %
	Suppose that there exist $\mu \in [0,1)$, $\rho \in [0,\infty)$, $d \in [0,\infty)$, and a class-$\mathcal{K}$ function $\omega$ such that, for every $x\in\R^{2m}$ and $v\in\R$, the following conditions hold
	\begin{subequations} \label{eq:stability:cond}
		\begin{align}
			\norm*{x}^2 + \norm*{v}^2 \ge \rho
			&
			\Rightarrow \norm*{f(x,v)}^2 \le \frac{\mu}{m} \norm*{x}^2 + \omega(\norm*{v}),
			\label{eq:stability:cond:contractive}
			\\
			\norm*{x}^2 + \norm*{v}^2 < \rho
			&
			\Rightarrow \norm*{f(x,v)} \le d.
			\label{eq:stability:cond:boundedness}
		\end{align}
	\end{subequations}
	Then:
	\begin{enumerate}[label=\textbf{\Alph*.}, ref=\textbf{\Alph*}]
		\item \label{prop:stability:f:BIBS} $\Sigma(f)$ is BIBS stable.
		\item \label{prop:stability:f:wAG} If $f$ is continuous, $\Sigma(f)$ has the wAG property with asymptotic bias $b(\rho)$, where $b(\cdot)$ is a class-$\mathcal{K}$ function of $\rho$.
		\item \label{prop:stability:f:ISS} If~\eqref{eq:stability:cond} holds with $\rho = 0$, $\Sigma(f)$ is ISS.
	\end{enumerate}
\end{proposition}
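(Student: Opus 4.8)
The plan is to exploit the shift-register structure of $\Sigma(f)$. Writing $x_t=\col(p_t,q_t)$ with $p_t,q_t\in\R^m$, the recursion shifts the entries of $p_t$ and $q_t$ by one position and writes $f(x_t,v_t)+\varepsilon_t$ and $v_t$, respectively, into their last slots; since $A=\diag(S,S)$ satisfies $A^m=0$, after $m$ steps the state consists entirely of freshly injected data, which gives the exact identity $\norm*{x_{t+m}}^2=\sum_{j=0}^{m-1}\norm*{f(x_{t+j},v_{t+j})+\varepsilon_{t+j}}^2+\sum_{j=0}^{m-1}\norm*{v_{t+j}}^2$. The two cases in~\eqref{eq:stability:cond} together give $\norm*{f(x,v)}^2\le\tfrac{\mu}{m}\norm*{x}^2+\omega(\norm*{v})+d^2$ for all $(x,v)$; combining this with $(a+b)^2\le(1+\lambda)a^2+(1+\lambda^{-1})b^2$ and setting $V_t\coloneq\norm*{x_t}^2$ and $\mu'\coloneq(1+\lambda)\mu$, I obtain the delay-difference inequality $V_{t+m}\le\tfrac{\mu'}{m}\sum_{j=0}^{m-1}V_{t+j}+D_t$, where $D_t$ collects the terms $\omega(\norm*{v_{t+j}})$, $\norm*{v_{t+j}}^2$, $\norm*{\varepsilon_{t+j}}^2$ ($j=0,\dots,m-1$) plus the constant $m(1+\lambda)d^2$. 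Picking $\lambda>0$ small enough that $\mu'<1$, the $m$ weights $\tfrac{\mu'}{m}$ sum to $\mu'<1$, so this is a contractive delay inequality forced by $D_t$; the first $m$ samples are controlled separately through $\norm*{x_{t+1}}\le\norm*{x_t}+\norm*{f(x_t,v_t)}+\norm*{\varepsilon_t}+\norm*{v_t}$, which follows from $\norm*{A},\norm*{G},\norm*{H}\le1$.

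For claim~\ref{prop:stability:f:BIBS}, partition $\N$ into blocks of length $m$ and let $M_k$ be the maximum of $V_t$ over the $k$-th block. Since any window $\{t,\dots,t+m-1\}$ meets at most two consecutive blocks, the delay inequality gives $V_{t+m}\le\mu'\max(M_k,M_{k+1})+D_t$ for $t$ in block $k$; maximizing over block $k$ and distinguishing whether $M_{k+1}$ exceeds $M_k$, a short case analysis yields $M_{k+1}\le\mu'M_k+\tfrac{1}{1-\mu'}\max_{t\in\text{block }k}D_t$. Iterating this scalar recursion shows $\sup_k M_k<\infty$ as soon as $\norm*{x_0}$, $\norm*{v}_\infty$, $\norm*{\varepsilon}_\infty$ are bounded ($M_0$ being bounded via the one-step estimate above), which is exactly BIBS stability.

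For claim~\ref{prop:stability:f:ISS}, when $\rho=0$ condition~\eqref{eq:stability:cond:boundedness} is vacuous and~\eqref{eq:stability:cond:contractive} holds globally, so the constant $m(1+\lambda)d^2$ vanishes from $D_t$ and $D_t$ is bounded by a class-$\mathcal{K}$ function of $\norm*{v_{0:t-1}}_\infty$ plus a class-$\mathcal{K}$ function of $\norm*{\varepsilon_{0:t-1}}_\infty$. Retaining the homogeneous term in the block recursion gives $V_t\le c\,(\mu')^{t/m}\bigl(\norm*{x_0}^2+\text{early-input terms}\bigr)+\tfrac{1}{1-\mu'}\sup_{s\le t}D_s$; taking square roots and distributing the resulting sums over the three contributions produces $\norm*{x_t}\le\alpha(\norm*{x_0},t)+\kappa_{\mathrm{u}}(\norm*{v_{0:t}}_\infty)+\kappa_{\mathrm{e}}(\norm*{\varepsilon_{0:t}}_\infty)$ with $\alpha(s,t)=c'\,s\,(\mu')^{t/(2m)}$ of class $\mathcal{KL}$ and $\kappa_{\mathrm{u}},\kappa_{\mathrm{e}}$ of class $\mathcal{K}$, i.e.\ ISS.

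For claim~\ref{prop:stability:f:wAG}, assume $f$ continuous so that $d$ may be taken as $\max_{\norm*{x}^2+\norm*{v}^2\le\rho}\norm*{f(x,v)}$, non-decreasing in $\rho$ and equal to $0$ at $\rho=0$ (since then $f(0,0)=0$). If $\limsup_{t\to\infty}\max(\norm*{v_t},\norm*{\varepsilon_t})=\infty$ the bound is trivial; otherwise, taking $\limsup_{t\to\infty}$ in the delay inequality and using shift-invariance of $\limsup$, subadditivity, and the continuity and monotonicity of $\omega$ and of $s\mapsto s^2$, I get $R\le\mu'R+m(1+\lambda)d^2+\tilde\kappa\bigl(\limsup_{t\to\infty}\max(\norm*{v_t},\norm*{\varepsilon_t})\bigr)$ for $R\coloneq\limsup_{t\to\infty}V_t$ and some class-$\mathcal{K}$ function $\tilde\kappa$; solving for $R$ and taking square roots gives the wAG estimate with asymptotic gain $\sqrt{\tilde\kappa(\cdot)/(1-\mu')}$ of class $\mathcal{K}$ and asymptotic bias $b(\rho)=\sqrt{m(1+\lambda)d^2/(1-\mu')}$. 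The conceptual step is the reduction to the delay inequality; the main obstacle is the bookkeeping that follows — turning the block-wise geometric decay of $M_k$ into bounds valid for \emph{every} $t$ with the correct $\mathcal{KL}$/$\mathcal{K}$ structure, controlling the first $m$ samples uniformly, and, for claim~\ref{prop:stability:f:wAG}, checking that continuity of $f$ makes $b(\rho)$ a genuine class-$\mathcal{K}$ function of $\rho$.
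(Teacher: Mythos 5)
Your argument is correct in substance, but it takes a genuinely different route from the paper's. The paper proves only the incremental version (Proposition~\ref{prop:δstability:f}) and invokes it for Proposition~\ref{prop:stability:f}: it works with the weighted quadratic function $V(x)=x^\top Px$, $P=\diag(1,\dots,m,1,\dots,m)$, which satisfies $A^\top PA-P=-I_{2m}$, $G^\top PG=H^\top PH=m$ and $\norm*{x}^2\le V(x)\le m\norm*{x}^2$; it then derives a one-step dissipation inequality via Young's inequality, proves (δ)BIBS by a contradiction argument on level sets of $V$, obtains the w(δ)AG bound through two technical lemmas on uniformly continuous functions plus an appeal to \cite[Lem.~II.1]{Sontag1996a}, and gets (δ)ISS by observing that $V$ is an ISS-Lyapunov function. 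You instead exploit the deadbeat structure $S^m=0$: the exact flushing identity for $\norm*{x_{t+m}}^2$, combined with the merged bound $\norm*{f(x,v)}^2\le\frac{\mu}{m}\norm*{x}^2+\omega(\norm*{v})+d^2$ and a Young split, yields the finite-memory (Halanay-type) inequality $V_{t+m}\le\frac{\mu'}{m}\sum_{j<m}V_{t+j}+D_t$ with $\mu'<1$, resolved by the block-maximum recursion $M_{k+1}\le\mu'M_k+\tfrac{1}{1-\mu'}\bar D_k$. This is essentially the ``unrolled'' counterpart of the paper's Lyapunov computation ($P$ is exactly the weighting encoding that each injected sample persists at most $m$ steps); it buys explicit constants, avoids the contradiction argument, and for the bias needs only ordinary continuity on a compact ball rather than the uniform-continuity lemmas (which the paper needs because it treats the incremental case). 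The paper's route, in exchange, directly yields an ISS-Lyapunov certificate (used for part C via the cited $\delta$ISS result) and transfers verbatim to the incremental statements.

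Three bookkeeping points should be made explicit, none of which invalidates the approach. First, for ISS your unrolled bound on $V_t$ involves $\bar D_j$'s containing inputs up to time $t+m-2$; to match Definition~\ref{def:ISS} invoke causality (apply the bound with inputs truncated after time $t-1$, which leaves $x_{0:t}$ unchanged). Second, in the wAG step you divide by $1-\mu'$ after taking limsups, which presupposes $R=\limsup_t V_t<\infty$; this follows from part~\ref{prop:stability:f:BIBS} only after restarting the system at a time beyond which the inputs are bounded by their limsup plus one — the same causality/restart device the paper delegates to \cite[Lem.~II.1]{Sontag1996a}. Third, your $b(\rho)$ built from $\max_{\norm*{(x,v)}^2\le\rho}\norm*{f(x,v)}$ is continuous and non-decreasing but not automatically strictly increasing, and it vanishes at $\rho=0$ only because the $\rho=0$ hypothesis forces $f(0_{2m+1})=0$; when the hypothesis holds only for some $\rho>0$ with $f(0_{2m+1})\ne0$ one must majorize by a class-$\mathcal{K}$ function — a harmless step that the paper's omitted non-incremental adaptation requires as well.
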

The proof of Proposition~\ref{prop:stability:f} is omitted, since it follows similar arguments of the proof of Proposition~\ref{prop:δstability:f} stated in the next section.

\subsection{Learning with stability constraints}
\label{sec:stability:estimation}

To show that the procedure explained in Section~\ref{sec:method:stable} can be used to guarantee that the learned models are BIBS stable and/or ISS, we proceed in two steps:
\begin{enumerate}[label=\textbf{S\arabic*.},ref=\textbf{S\arabic*}]
	\item \label{proc:stability:step:1}
	We derive a condition on the kernel function ensuring that all the elements of the associated RKHS satisfy Conditions~\eqref{eq:stability:cond} with $\rho,\mu,d \in [0, \infty)$ and $\omega$ of class-$\mathcal{K}$.
	\item \label{proc:stability:step:2} We design the viability set $\Omega_k$ in order to guarantee that the predictor $f$ selected by solving~\eqref{eq:opt_c:stable} satisfies Conditions~\eqref{eq:stability:cond} with $\mu\in[0,1)$.
\end{enumerate}
Steps~\ref{proc:stability:step:1} and~\ref{proc:stability:step:2} guarantee that every feasible predictor satisfies the assumptions of Proposition~\ref{prop:stability:f} thus yielding a stable $\Sigma(f)$.
\begin{remark}\label{rem:k_stability}
	We stress that satisfaction of the assumptions of Proposition~\ref{prop:stability:f} cannot be achieved in one step by just shaping the kernel structure $k$.
	In particular, the only RKHS whose functions all satisfy~\eqref{eq:stability:cond:contractive} is the trivial space, which only contains the $0$ function.
	Indeed, if for some $\eta\in\Phi_k$, $f\in\mathcal{H}(k_\eta)\setminus\{0\}$ satisfies~\eqref{eq:stability:cond:contractive} with $\mu \in (0,1)$, then $\mu^{-1} f \in\mathcal{H}(k_\eta)$ may not.
\end{remark}
Proposition~\ref{prop:stability:kernel} (proved in Appendix~\ref{prop:stability:kernel::proof}) fulfills Step~\ref{proc:stability:step:1} by providing sufficient conditions on a kernel function $k_\eta$ for which~\eqref{eq:stability:cond} holds for all functions $f\in\mathcal{H}(k_\eta)$, although with $\mu$ possibly larger than $1$.
\begin{proposition} \label{prop:stability:kernel} %
	Let $k$ be a kernel structure and $\eta \in \Phi_k$.
	Assume that there exist $\nu ,s\in [0, \infty)$ such that
	\begin{subequations} \label{eq:stability:kernel:cond}
		\begin{align}
			\forall a \in \R^{2m+1},
			& & \norm*{a}^2 \ge \nu
			& \Rightarrow k_\eta(a,a) \le \norm*{a}^2,
			\label{eq:stability:kernel:cond:contractive}
			\\
			\forall a \in \R^{2m+1},
			& & \norm*{a}^2 < \nu
			& \Rightarrow k_\eta(a,a) \le s.
			\label{eq:stability:kernel:cond:boundedness}
		\end{align}
	\end{subequations}
	Then, every function $f \in \mathcal{H}(k_\eta)$ satisfies conditions~\eqref{eq:stability:cond} with $\mu = m \norm*{f}_{k_\eta}^2$, $\rho = \nu$, $d = \sqrt{s} \norm*{f}_{k_\eta}$ and $\omega(p) = \norm*{f}_{k_\eta}^2 p^2$.
\end{proposition}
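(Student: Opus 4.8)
The plan is to lean on the single defining feature of an RKHS, the reproducing property, combined with the Cauchy--Schwarz inequality, and then simply read off the constants from the two regimes singled out in~\eqref{eq:stability:kernel:cond}. Concretely, for any $f\in\mathcal{H}(k_\eta)$ and any $a\in\R^{2m+1}$ one has $f(a)=\inner*{f}{k_\eta^a}_{k_\eta}$ and $\norm*{k_\eta^a}_{k_\eta}^2=\inner*{k_\eta^a}{k_\eta^a}_{k_\eta}=k_\eta^a(a)=k_\eta(a,a)$, so Cauchy--Schwarz yields the pointwise bound $\norm*{f(a)}^2\le\norm*{f}_{k_\eta}^2\,k_\eta(a,a)$, valid for \emph{every} element of the RKHS. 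This is the only non-elementary ingredient, and it is precisely what is needed to obtain a statement that holds uniformly over $\mathcal{H}(k_\eta)$, as required by Step~\ref{proc:stability:step:1}.

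Next I would specialize to $a=(x,v)$, so that $\norm*{a}^2=\norm*{x}^2+\norm*{v}^2$, and split on the threshold $\nu$. When $\norm*{x}^2+\norm*{v}^2\ge\nu$, condition~\eqref{eq:stability:kernel:cond:contractive} gives $k_\eta(a,a)\le\norm*{x}^2+\norm*{v}^2$, and the pointwise bound becomes $\norm*{f(x,v)}^2\le\norm*{f}_{k_\eta}^2\norm*{x}^2+\norm*{f}_{k_\eta}^2\norm*{v}^2$, which is exactly~\eqref{eq:stability:cond:contractive} with $\mu=m\norm*{f}_{k_\eta}^2$, $\rho=\nu$, and $\omega(p)=\norm*{f}_{k_\eta}^2 p^2$. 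When instead $\norm*{x}^2+\norm*{v}^2<\nu$, condition~\eqref{eq:stability:kernel:cond:boundedness} gives $k_\eta(a,a)\le s$, hence $\norm*{f(x,v)}\le\sqrt{s}\,\norm*{f}_{k_\eta}=d$, which is~\eqref{eq:stability:cond:boundedness}. This closes the argument and, via Proposition~\ref{prop:stability:f}, delivers the desired stability properties once $\mu<1$ is also enforced (Step~\ref{proc:stability:step:2}).

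The only point requiring a little care is the membership $\omega\in\mathcal{K}$: the map $\omega(p)=\norm*{f}_{k_\eta}^2 p^2$ is continuous, strictly increasing and vanishes at $0$ provided $\norm*{f}_{k_\eta}>0$. In the degenerate case $f=0$ one has $f\equiv 0$, so~\eqref{eq:stability:cond} holds trivially for any admissible choice of constants, and this case can be dispatched separately at the outset (or one can add an arbitrarily small positive quadratic term to $\omega$ to make the statement uniform). Beyond this bookkeeping there is no genuine obstacle: the proposition is essentially the reproducing inequality rewritten in the language of~\eqref{eq:stability:cond}, with all the substantive content deferred to the kernel hypotheses~\eqref{eq:stability:kernel:cond}, whose verification for concrete kernel structures is carried out in Section~\ref{sec:kernels}.
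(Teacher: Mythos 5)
Your proposal is correct and follows essentially the same argument as the paper: the reproducing property plus Cauchy--Schwarz gives the pointwise bound $\norm*{f(a)}^2 \le \norm*{f}_{k_\eta}^2 k_\eta(a,a)$, after which~\eqref{eq:stability:kernel:cond} is applied case by case with $a=(x,v)$ to read off the stated constants. Your extra remark on the degenerate case $f=0$ (where $\omega$ fails strict monotonicity) is a harmless refinement the paper does not spell out.
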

Let $k_\eta$ be a kernel function satisfying the assumptions of Proposition~\ref{prop:stability:kernel}.
Then, each $f\in\mathcal{H}(k_\eta)$ satisfies~\eqref{eq:stability:cond} with $\mu = m \norm*{f}_{k_\eta}^2$.
Since the constraint $m \norm{f}_{k_\eta}^2 \le \chi < 1$ is already present in~\eqref{eq:opt_c:stable}, then Step~\ref{proc:stability:step:2} is fulfilled if we can properly define the viability set $\Omega_k$ in such a way that all the solutions of~\eqref{eq:hp_sel_gen:stable} satisfy the assumptions of Proposition~\ref{prop:stability:kernel}.
This leads to the following definition.
\begin{definition} \label{def:ρ-viability}
	Let $k$ be a kernel structure.
	Given $\rho \in [0, \infty]$, we define the \emph{$\rho$-viability set} of $k$ as
	\begin{equation*}
		\Theta^\rho_k \coloneq
		\big\{
			\eta \in \Phi_k \st
			\exists \nu \in [0, \rho] \cap \R,\, \exists s \in [0, \infty),\, \textnormal{\eqref{eq:stability:kernel:cond} holds}
		\big\}.
	\end{equation*}
	If $\Theta^\rho_k \ne \varnothing$, we say that the kernel structure $k$ is \emph{$\rho$-viable}.
\end{definition}
Finally, joining Steps~\ref{proc:stability:step:1} and~\ref{proc:stability:step:2} and using Proposition~\ref{prop:stability:f} directly leads to Theorem~\ref{thm:BIBS-ISS} below, which is the main result of the present section.
\begin{theorem} \label{thm:BIBS-ISS}
	Let $\rho \in [0,\infty]$, $k$ be a $\rho$-viable kernel structure, and $(\beta, \eta)$ be an optimal solution of~\eqref{eq:hp_sel_gen:stable} obtained with $\Omega_k = \Theta^\rho_k$.
	Let $f$ be the optimal solution of~\eqref{eq:opt_c:stable}.
	Then, $\Sigma(f)$ is BIBS stable.
	If, in addition, $f$ is continuous, then $\Sigma(f)$ has the wAG property with asymptotic bias $b(\rho)$, where $b(\cdot)$ is a class-$\mathcal{K}$ function of $\rho$.
	Finally, if $\rho=0$, then $\Sigma(f)$ is ISS.
\end{theorem}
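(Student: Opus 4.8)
The plan is to chain together the three results that have already been established—Proposition~\ref{prop:stability:f}, Proposition~\ref{prop:stability:kernel}, and Theorem~\ref{thm:representer:stable_c}—exactly along the two-step scheme \ref{proc:stability:step:1}--\ref{proc:stability:step:2} described just before the statement. The only real content of the proof is verifying that the hypotheses line up correctly and that the bookkeeping on the constants ($\mu$, $\rho$, $d$, $\omega$, $\nu$, $s$) is consistent.

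\emph{Step 1: from the viability set to the kernel conditions.} Since $k$ is $\rho$-viable and the hyperparameter selection~\eqref{eq:hp_sel_gen:stable} is run with $\Omega_k = \Theta^\rho_k$, the feasibility set is nonempty, so an optimal solution $(\beta,\eta)$ exists with $\eta \in \Theta^\rho_k$. By the very definition of $\Theta^\rho_k$ (Definition~\ref{def:ρ-viability}), there then exist $\nu \in [0,\rho]\cap\R$ and $s\in[0,\infty)$ for which the kernel conditions~\eqref{eq:stability:kernel:cond} hold. Hence Proposition~\ref{prop:stability:kernel} applies to $k_\eta$, and every $f\in\mathcal{H}(k_\eta)$ satisfies~\eqref{eq:stability:cond} with $\mu = m\norm*{f}_{k_\eta}^2$, $\rho$ replaced by $\nu \le \rho$, $d = \sqrt{s}\,\norm*{f}_{k_\eta}$, and $\omega(p) = \norm*{f}_{k_\eta}^2\,p^2$ (a class-$\mathcal{K}$ function).

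\emph{Step 2: from the norm constraint to $\mu<1$.} Now let $f$ be the optimal solution of~\eqref{eq:opt_c:stable}. By Theorem~\ref{thm:representer:stable_c} such an $f$ exists (the problem is solvable) and, being feasible, satisfies the constraint $m\norm*{f}_{k_\eta}^2 \le \chi$. Since $\chi \in (0,1)$, we get $\mu = m\norm*{f}_{k_\eta}^2 \le \chi < 1$, so $\mu \in [0,1)$ as required. Therefore $f$ meets all the hypotheses of Proposition~\ref{prop:stability:f} with the constants identified in Step~1 (noting that~\eqref{eq:stability:cond} holding with threshold $\nu\le\rho$ in particular implies it holds with threshold $\rho$, which is all parts~\ref{prop:stability:f:BIBS} and~\ref{prop:stability:f:wAG} need; and that $\rho=0$ forces $\nu=0$ for the ISS claim). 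Invoking Proposition~\ref{prop:stability:f}\ref{prop:stability:f:BIBS} gives BIBS stability of $\Sigma(f)$; if $f$ is continuous, part~\ref{prop:stability:f:wAG} gives the wAG property with asymptotic bias $b(\rho)$ for a class-$\mathcal{K}$ function $b$; and if $\rho=0$ then $\nu=0$, so~\eqref{eq:stability:cond} holds with zero threshold and part~\ref{prop:stability:f:ISS} gives ISS.

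The argument is essentially a concatenation, so there is no deep obstacle; the point requiring the most care is the asymptotic-bias bookkeeping in the wAG case—Proposition~\ref{prop:stability:kernel} produces threshold $\nu$, which may be strictly smaller than $\rho$, so one must check that the map $\rho \mapsto b(\rho)$ claimed in the theorem can indeed be taken monotone in $\rho$ (it suffices that $b$ from Proposition~\ref{prop:stability:f} is class-$\mathcal{K}$ and nondecreasing in the threshold, so that replacing $\nu$ by the larger $\rho$ only enlarges the bias bound). The other mild subtlety is that~\eqref{eq:hp_sel_gen:stable} must be guaranteed to have an optimal solution; this follows from $\rho$-viability (nonempty feasible set), closedness of $\Omega_k=\Theta^\rho_k$ and of $[\iota,\infty)$, together with whatever lower-semicontinuity/coercivity of $J$ is implicitly assumed in~\eqref{eq:hp_sel_gen:classic}—so that part of the proof is just a pointer back to the standing assumptions on $J$.
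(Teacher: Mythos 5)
Your proposal is correct and follows exactly the route the paper intends: the paper gives no separate proof of Theorem~\ref{thm:BIBS-ISS}, stating that it follows directly by joining Steps~\ref{proc:stability:step:1}--\ref{proc:stability:step:2} (Definition~\ref{def:ρ-viability} plus Proposition~\ref{prop:stability:kernel}, and the constraint $m\norm*{f}_{k_\eta}^2\le\chi<1$ in~\eqref{eq:opt_c:stable} via Theorem~\ref{thm:representer:stable_c}) with Proposition~\ref{prop:stability:f}, which is precisely your chain. Your extra remarks on $\nu\le\rho$ versus $\rho$ and on monotonicity of $b$ are sound bookkeeping that the paper leaves implicit.
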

It is important to observe that a solution of~\eqref{eq:hp_sel_gen:stable} exists only if $\Theta^\rho_k \ne \varnothing$, namely, if $k$ is $\rho$-viable for the chosen $\rho\in[0,\infty]$.
$\rho$-viability of some the most commonly used kernels is discussed in Section~\ref{sec:kernels}, where we also characterize their $\rho$-viability set.

We observe that $\infty$-viability is sufficient for BIBS stability, while $0$-viability is needed for ISS.
Since $\rho\mapsto \Theta^{\rho}_k$ is increasing, then achieving BIBS stability is easier than ISS, which matches with the fact that ISS is a stricter stability condition.
Moreover, $\rho$-viability for $\rho\in(0,\infty)$ can be interpreted as an ``intermediate'' stability property between ISS and BIBS stability.
Finally, we underline that Theorem~\ref{thm:BIBS-ISS} also states that $\Sigma(f)$ yields the wAG property if $k$ is $\rho$-viable for some $\rho$ and $f$ is continuous.
In this respect, we point out that, since $f$ is a linear combination of representer functions, $f$ is continuous when the functions $k(z, \cdot)$ are continuous for each $z \in \R^{2m+1}$.
\begin{remark} \label{rem:iss:BIBS}
	Regarding the BIBS stability requirement, we stress that, in principle, BIBS stability may be obtained whenever the predictor is a bounded function. 
	Hence, every bounded kernel produces a BIBS stable system~\cite[Prop.~11]{Pillonetto2018a}.
	However, the bound achievable in this way corresponds to a limit case of Proposition~\ref{prop:stability:f} in which $\rho=\infty$.
	As clear from the proof of the proposition, this gives no information on how the amplitude of the input affects that of the output.
	The result of Theorem~\ref{thm:BIBS-ISS}, instead, provides a bound composed of two terms, the first depending on $\rho$ and independent of the input, and the second given by a class-$\mathcal{K}$ function of the norm of the input.
	Therefore, for $\rho=0$, Theorem~\ref{thm:BIBS-ISS} also provides a characterization of \emph{continuity} of the input-output operator associated with the predictor.
	For $\rho>0$, a bias is introduced scaling with $\rho$, thereby providing a combined effect of continuity and uniform boundedness.
\end{remark}
\begin{remark} \label{rem:iss:equilibrium}
	$0$-viability, needed for ISS, implies $f(0_{2m+1}) = 0$.
	Conversely, ISS implies that $(x, v, \varepsilon)$, with $x_t = 0_{2m}$ and $\varepsilon_t = v_t = 0$ for all $t \in \N$, is a solution of $\Sigma(f)$, which in turn implies $f(0_{2m+1}) = 0$.
	Therefore, Theorem~\ref{thm:BIBS-ISS} only characterizes ISS with respect to the origin.
	This property is also reflected in Condition~\eqref{eq:stability:cond:contractive} that implies $k(0_{2m+1},0_{2m+1}) = 0$ and, therefore, $f(0_{2m+1}) = 0$ for each $f \in \mathcal{H}(k_\eta)$~\cite[Prop.~2.3]{Saitoh2016a}.
	Imposing ISS with respect to a different \emph{known} equilibrium pair $(x^\star,v^\star)$ is possible by applying the preliminary data transformation $(x,v) \mapsto (x-x^\star, v-v^\star)$ before using Theorem~\ref{thm:BIBS-ISS}.
\end{remark}

\section{Learning $\delta$BIBS stable and $\delta$ISS Models}
\label{sec:δstability}

This section shows how the methodology proposed in Section~\ref{sec:method:stable} can be used to guarantee that the learned predictor is $\delta$BIBS stable, $\delta$ISS, and/or has a \emph{weak incremental asymptotic gain} (w$\delta$AG) property.
We follow the same line of reasoning of Section~\ref{sec:stability} and provide the ``incremental version'' of the results therein.

\subsection{Incremental stability notions and results}
\label{sec:δstability:notions}

Consider two solutions $(x^a,v^a,\varepsilon^a)$ and $(x^b,v^b,\varepsilon^b)$ of $\Sigma(f)$ and let $\tilde{x} \coloneq x^a - x^b$, $\tilde{v} \coloneq v^a - v^b$, and $\tilde{\varepsilon} \coloneq \varepsilon^a - \varepsilon^b$.
The notions of $\delta$BIBS stability, $\delta$ISS, and w$\delta$AG are formalized in the following definitions.
\begin{definition} \label{def:δBIBS}
	System $\Sigma(f)$ is said to be \emph{$\delta$BIBS stable} if, for every $q \in [0,\infty)$, there exists $w \in [0,\infty)$, such that every two solutions $(x^a,v^a,\varepsilon^a)$ and $(x^b,v^b,\varepsilon^b)$ of $\Sigma(f)$ satisfy
	\begin{equation*}
		\norm*{\tilde{x}_0} \le q, \norm*{\tilde{v}}_\infty \le q, \norm*{\tilde{\varepsilon}}_\infty \le q
		\Rightarrow \norm*{\tilde{x}}_\infty \le w.
	\end{equation*}
\end{definition} 
\begin{definition}[\cite{Angeli2002a}] \label{def:δISS}
	System $\Sigma(f)$ is said to be \emph{$\delta$ISS} if there exist a class-$\mathcal{KL}$ function $\alpha$ and two class-$\mathcal{K}$ functions $\kappa_{\mathrm{u}}$ and $\kappa_{\mathrm{e}}$ such that every two solutions $(x^a,v^a,\varepsilon^a)$ and $(x^b,v^b,\varepsilon^b)$ of $\Sigma(f)$ satisfy
	\begin{equation*}
		\forall t \in \N,
		\hspace{1ex}
		\norm*{\tilde{x}_t}
		\le \alpha \left( \norm*{\tilde{x}_0}, t \right)
		+ \kappa_{\mathrm{u}} \left( \norm*{\tilde{v}_{0:t}}_\infty \right)
		+ \kappa_{\mathrm{e}} \left( \norm*{\tilde{\varepsilon}_{0:t}}_\infty \right).
	\end{equation*}
\end{definition}
\begin{definition} \label{def:δwAG}
	System $\Sigma(f)$ is said to have \emph{w$\delta$AG} property if there exist $b \in [0, \infty)$ and a class-$\mathcal{K}$ function $\kappa$ such that every two solutions $(x^a,v^a,\varepsilon^a)$ and $(x^b,v^b,\varepsilon^b)$ of $\Sigma(f)$
	satisfy
	\begin{equation*}
		\limsup_{t\to\infty} \norm*{\tilde{x}_t}
		\le b + \kappa\left(\limsup_{t\to\infty} \max\left(\norm*{\tilde{v}_t}, \norm*{\tilde{\varepsilon}_t}\right)\right).
	\end{equation*}
	In particular, $b$ and $\kappa$ are called \emph{incremental asymptotic bias} and \emph{incremental asymptotic gain}, respectively.
	If $b=0$, we say that $\Sigma(f)$ has the \emph{incremental asymptotic gain ($\delta$AG)} property.
\end{definition}
Proposition~\ref{prop:δstability:f} (proved in Appendix~\ref{prop:δstability:f::proof}) stated below parallels Proposition~\ref{prop:stability:f} by giving sufficient conditions on the predictor $f$ guaranteeing that $\Sigma(f)$ is $\delta$BIBS stable, $\delta$ISS, and has the w$\delta$AG property.
\begin{proposition} \label{prop:δstability:f} %
	Suppose that there exist $\mu \in [0,1)$, $\rho \in [0,\infty)$, $d \in [0,\infty)$, and a class-$\mathcal{K}$ function $\omega$ such that, for all $x^a,x^b\in\R^{2m}$ and $v^a,v^b\in\R$, the following hold
	\begin{subequations} \label{eq:δstability:cond}
		\begin{align}
			\norm{\tilde{x}}^2 + \norm{\tilde{v}}^2 \ge \rho
			& \Rightarrow
			\norm{\tilde{f}}^2 \le \frac{\mu}{m} \norm{\tilde{x}}^2 + \omega(\norm{\tilde{v}}),
			\label{eq:δstability:cond:contractive}
			\\
			\norm{\tilde{x}}^2 + \norm{\tilde{v}}^2 < \rho
			& \Rightarrow \norm{\tilde{f}} \le d,
			\label{eq:δstability:cond:boundedness}
		\end{align}
	\end{subequations}
	where $\tilde{x} = x^a - x^b$, $\tilde{v} = v^a - v^b$, and $\tilde{f} = f(x^a,\tilde{v}^a) - f(x^b,\tilde{v}^b)$.
	Then:
	\begin{enumerate}[label=\textbf{\Alph*.}, ref=\textbf{\Alph*}]
		\item \label{prop:δstability:f:BIBS} $\Sigma(f)$ is $\delta$BIBS stable.
		\item \label{prop:δstability:f:wAG} If $f$ is uniformly continuous, $\Sigma(f)$ has the w$\delta$AG property with asymptotic incremental bias $b(\rho)$, where $b(\cdot)$ is a class-$\mathcal{K}$ function of $\rho$.
		\item \label{prop:δstability:f:ISS} If~\eqref{eq:δstability:cond} holds with $\rho = 0$, $\Sigma(f)$ is $\delta$ISS.
	\end{enumerate}
\end{proposition}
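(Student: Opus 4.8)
The plan is to exploit the companion-type structure of $\Sigma(f)$ so as to collapse its analysis into an $m$-step recursion for $\norm*{\tilde x_t}^2$, and then run a discrete Gr\"onwall-type estimate on that recursion. Three facts about $A=\diag(S,S)$, $G=\col(B,0_m)$, $H=\col(0_m,B)$ do the work: (i) $A$ is nilpotent, $A^m=0$; (ii) $\norm*{Az}\le\norm*{z}$ for every $z\in\R^{2m}$ and $\norm*{Gw}=\norm*{Hw}=\norm*{w}$ for every $w\in\R$; (iii) for $i=0,\dots,m-1$, $A^{m-1-i}G$ and $A^{m-1-i}H$ are exactly the canonical basis vectors $e_{i+1}$ and $e_{m+1+i}$ of $\R^{2m}$. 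Setting $\tilde x=x^a-x^b$, $\tilde v=v^a-v^b$, $\tilde\varepsilon=\varepsilon^a-\varepsilon^b$, $\tilde f_t=f(x^a_t,v^a_t)-f(x^b_t,v^b_t)$, linearity gives $\tilde x_{t+1}=A\tilde x_t+G(\tilde f_t+\tilde\varepsilon_t)+H\tilde v_t$; iterating $m$ steps, killing $A^m\tilde x_t$ by (i) and computing the norm by (iii), one gets the exact identity
\[
\norm*{\tilde x_{t+m}}^2=\sum_{i=0}^{m-1}\big(\tilde f_{t+i}+\tilde\varepsilon_{t+i}\big)^2+\sum_{i=0}^{m-1}\tilde v_{t+i}^2 ,\qquad t\ge 0 .
\]
Proposition~\ref{prop:stability:f} is proved identically, with one solution in place of the difference of two.

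Next, the two implications~\eqref{eq:δstability:cond:contractive}--\eqref{eq:δstability:cond:boundedness} are merged into the single global bound $\tilde f_t^2\le\frac{\mu}{m}\norm*{\tilde x_t}^2+\omega(\norm*{\tilde v_t})+d^2$, valid for all $t$ (inside the region of~\eqref{eq:δstability:cond:contractive} this is immediate; outside it follows from~\eqref{eq:δstability:cond:boundedness} since the $d^2$ term dominates). Substituting into the identity and applying Young's inequality with a free parameter, $(\tilde f_{t+i}+\tilde\varepsilon_{t+i})^2\le(1+\delta)\tilde f_{t+i}^2+(1+\delta^{-1})\tilde\varepsilon_{t+i}^2$, gives
\[
\norm*{\tilde x_{t+m}}^2\le\frac{(1+\delta)\mu}{m}\sum_{i=0}^{m-1}\norm*{\tilde x_{t+i}}^2+w_t\le\lambda\max_{0\le i<m}\norm*{\tilde x_{t+i}}^2+w_t ,
\]
with $\lambda:=(1+\delta)\mu$ and $w_t:=(1+\delta)\sum_{i=0}^{m-1}\!\big(\omega(\norm*{\tilde v_{t+i}})+d^2\big)+(1+\delta^{-1})\sum_{i=0}^{m-1}\tilde\varepsilon_{t+i}^2+\sum_{i=0}^{m-1}\tilde v_{t+i}^2$ collecting the inputs on the window $\{t,\dots,t+m-1\}$. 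Since $\mu<1$, we fix $\delta>0$ so that $\lambda<1$ — the one place where $\mu\in[0,1)$ is essential. In parallel, from (ii) and the square root of the global bound, $\norm*{\tilde x_{j+1}}\le(1+\sqrt{\mu/m})\norm*{\tilde x_j}+(\text{inputs})$, which iterated $m$ times bounds $\max_{0\le j<m}\norm*{\tilde x_j}^2$ by $C\norm*{\tilde x_0}^2+C'\big(\omega(\cdot)+d^2+\norm*{\tilde v_{0:m-1}}_\infty^2+\norm*{\tilde\varepsilon_{0:m-1}}_\infty^2\big)$ for constants $C,C'$ depending only on $m,\mu$.

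A strong induction on $t$ (writing $t=qm+r$ and using $\lambda<1$) turns the contraction inequality into $\norm*{\tilde x_t}^2\le\lambda^{\lfloor t/m\rfloor}\max_{0\le i<m}\norm*{\tilde x_i}^2+\frac{1}{1-\lambda}\max_{0\le s\le t-m}w_s$ for all $t$; the $w_s$ appearing here involve inputs only at times $<t$, which is what produces the causal $\norm*{\cdot}_{0:t}$ dependence required by Definition~\ref{def:δISS}. Combining with the first-$m$-steps bound, taking square roots via $\sqrt{a+b}\le\sqrt a+\sqrt b$, and extracting a geometric factor $(\lambda^{1/(2m)})^{t}$ from $\lambda^{\lfloor t/m\rfloor}$, the three claims fall out: \textbf{(A)} with $\norm*{\tilde x_0}$, $\norm*{\tilde v}_\infty$, $\norm*{\tilde\varepsilon}_\infty$ all $\le q$, one has $\sup_t w_t<\infty$ and the right-hand side is bounded by a constant $w(q)$, hence $\norm*{\tilde x}_\infty\le w(q)$, i.e.\ $\delta$BIBS stability; \textbf{(C)} if $\rho=0$ then~\eqref{eq:δstability:cond:boundedness} is vacuous and~\eqref{eq:δstability:cond:contractive} holds globally, so one takes $d=0$ and reads off $\norm*{\tilde x_t}\le\alpha(\norm*{\tilde x_0},t)+\kappa_{\mathrm u}(\norm*{\tilde v_{0:t}}_\infty)+\kappa_{\mathrm e}(\norm*{\tilde\varepsilon_{0:t}}_\infty)$ with $\alpha(s,t):=c\,(\lambda^{1/(2m)})^{t}s$ of class $\mathcal{KL}$ and $\kappa_{\mathrm u}(p):=c'(\sqrt{\omega(p)}+p)$, $\kappa_{\mathrm e}(p):=c''p$ of class $\mathcal K$, i.e.\ $\delta$ISS.

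For \textbf{(B)} the estimate is asymptotic. Assuming $M^\star:=\limsup_{t}\max(\norm*{\tilde v_t},\norm*{\tilde\varepsilon_t})<\infty$ (else the claim is trivial), a tail application of part~(A) gives $\ell:=\limsup_t\norm*{\tilde x_t}^2<\infty$; passing $\limsup_{t\to\infty}$ through $\norm*{\tilde x_{t+m}}^2\le\frac\lambda m\sum_{i=0}^{m-1}\norm*{\tilde x_{t+i}}^2+w_t$ yields $\ell\le\lambda\ell+\limsup_t w_t$, so $\ell\le(1-\lambda)^{-1}\limsup_t w_t$, and continuity/monotonicity of $\omega$ give $\limsup_t w_t\le m\big[(1+\delta)\omega(M^\star)+(1+\delta)d^2+(2+\delta^{-1})(M^\star)^2\big]$. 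The delicate point — and the main obstacle of the whole proof — is that the $d$ furnished by~\eqref{eq:δstability:cond:boundedness} is not a priori controlled by $\rho$, so using it verbatim would yield a bias depending on $d$ rather than a class-$\mathcal K$ function of $\rho$. This is exactly where \emph{uniform} continuity of $f$ enters: when $\norm*{\tilde x}^2+\norm*{\tilde v}^2<\rho$ the arguments $(x^a,v^a)$ and $(x^b,v^b)$ lie within distance $\sqrt\rho$, so $\norm*{\tilde f}\le\omega_f(\sqrt\rho)$ for a modulus of continuity $\omega_f$ of $f$; dominating $\omega_f$ by a class-$\mathcal K$ function, one uses $d:=\omega_f(\sqrt\rho)$, which vanishes as $\rho\to0$. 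Substituting and taking a square root gives $\limsup_t\norm*{\tilde x_t}\le b(\rho)+\kappa(M^\star)$ with $b(\rho):=\big(\tfrac{m(1+\delta)}{1-\lambda}\big)^{1/2}\omega_f(\sqrt\rho)$ of class $\mathcal K$ in $\rho$ (and $b(0)=0$, consistently with (C)) and $\kappa$ of class $\mathcal K$, which is w$\delta$AG. Apart from this continuity point, the only remaining care is keeping all $\mathcal K$/$\mathcal{KL}$ bounds causal, which the windowed form of $w_t$ makes automatic.
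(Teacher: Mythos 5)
Your proof is correct, and it takes a genuinely different route from the paper's. The paper argues through the weighted quadratic Lyapunov function $V(x)=x^\top P x$ with $P=\diag(1,\dots,m,1,\dots,m)$: it derives a one-step dissipation inequality via Young's inequality while keeping the case split of~\eqref{eq:δstability:cond}, proves $\delta$BIBS stability by a contradiction argument with auxiliary constants, obtains w$\delta$AG through two technical lemmas on uniformly continuous functions, and settles $\delta$ISS by observing that $(x,y)\mapsto V(x-y)$ is a $\delta$ISS Lyapunov function and invoking an external sufficiency theorem. You instead exploit nilpotency of the shift structure ($A^m=0$, $A^{m-1-i}G$ and $A^{m-1-i}H$ being orthonormal basis vectors) to unroll the error dynamics exactly over an $m$-step window, merge the two hypotheses into the single global bound $\tilde f^2\le\frac{\mu}{m}\norm*{\tilde x}^2+\omega(\norm*{\tilde v})+d^2$, and turn the resulting $m$-step contraction $\norm*{\tilde x_{t+m}}^2\le\lambda\max_{0\le i<m}\norm*{\tilde x_{t+i}}^2+w_t$, $\lambda=(1+\delta)\mu<1$, into an explicit geometric estimate by induction; this delivers $\delta$BIBS directly, a fully constructive $\mathcal{KL}/\mathcal{K}$ bound for $\delta$ISS without citing any Lyapunov theorem, and the w$\delta$AG bias via the modulus of continuity $d=\omega_f(\sqrt{\rho})$, which is the same uniform-continuity idea underlying the paper's $\omega_1$ construction but used more directly. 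What each approach buys: yours gives explicit exponential-type gains, avoids both the contradiction argument and the external $\delta$ISS reference, and isolates cleanly why $\mu<1$ and uniform continuity are needed; the paper's Lyapunov route yields the standard one-step dissipation inequality that it reuses (e.g., it is the template for Proposition~\ref{prop:stability:f}). Two trivial points to patch in a final write-up: when $\mu=0$ you get $\lambda=0$, so replace $\lambda$ by, say, $\max\{\lambda,1/2\}$ before defining $\alpha(s,t)=c\,\lambda^{t/(2m)}s$, otherwise $\alpha(\cdot,t)$ is not of class $\mathcal{K}$ for $t>0$; and the domination of $\omega_f$ by a class-$\mathcal{K}$ function (needed for $b(\cdot)$ to be class-$\mathcal{K}$) deserves its one-line justification, which is essentially what the paper's Lemma~\ref{lem:continuity_sup} supplies.
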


\subsection{Learning with incremental stability constraints}
\label{sec:δstability:estimation}

To show that the procedure explained in Section~\ref{sec:method:stable} can be used to guarantee that the learned models are $\delta$BIBS stable and/or $\delta$ISS, we follow the same line of reasoning of Section~\ref{sec:stability:estimation}, and we proceed in two steps (cf.~\ref{proc:stability:step:1}-\ref{proc:stability:step:2}):
\begin{enumerate}[label=\textbf{$\delta$S\arabic*.},ref=\textbf{$\delta$S\arabic*}]
	\item \label{proc:δstability:step:1} We derive a condition on the kernel function ensuring that all the elements of the associated RKHS satisfy Conditions~\eqref{eq:δstability:cond} with $\rho,\mu,d \in [0, \infty)$ and $\omega$ of class-$\mathcal{K}$.
	\item \label{proc:δstability:step:2} We design the viability set $\Omega_k$ in order to guarantee the predictor $f$ selected by solving~\eqref{eq:opt_c:stable} satisfies Conditions~\eqref{eq:δstability:cond} with $\mu\in[0,1)$.
\end{enumerate}
Step~\ref{proc:δstability:step:1} is fulfilled by the following proposition, which is analogous to Proposition~\ref{prop:stability:kernel}.
\begin{proposition} \label{prop:δstability:kernel} %
	Let $k$ be a kernel structure and $\eta \in \Phi_k$.
	Assume that there exist $\nu ,s\in [0, \infty)$ such that
	\begin{subequations} \label{eq:δstability:kernel:cond}
		\begin{align}
			\forall a,b \in \R^{2m+1},
			& & \norm*{a-b}^2 \ge \nu
			& \Rightarrow h_\eta(a,b) \le \norm*{a-b}^2,
			\label{eq:δstability:kernel:cond:contractive}
			\\
			\forall a,b \in \R^{2m+1},
			& & \norm*{a-b}^2 < \nu
			& \Rightarrow h_\eta(a,b) \le s,
			\label{eq:δstability:kernel:cond:boundedness}
		\end{align}
	\end{subequations}
	with $h_\eta(a,b) \coloneq k_\eta(a,a) - 2k_\eta(a,b) + k_\eta(b, b)$.
	Then, every function $f \in \mathcal{H}(k_\eta)$ satisfies Conditions~\eqref{eq:δstability:cond} with $\mu = m| f |_{k_\eta}^2$, $\rho = \nu$, $d = \sqrt{s} \norm*{f}_{k_\eta}$, and $\omega(p) = \norm*{f}_{k_\eta}^2 p^2$.
\end{proposition}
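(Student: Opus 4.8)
The plan is to observe that $h_\eta(a,b)$ is precisely the squared RKHS distance $\norm*{k_\eta^a - k_\eta^b}_{k_\eta}^2$ between the two representer functions, and then to bound the pointwise increment of an arbitrary $f\in\mathcal{H}(k_\eta)$ by this distance through the reproducing property and the Cauchy--Schwarz inequality; the remainder is the incremental transcription of the proof of Proposition~\ref{prop:stability:kernel}.

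Concretely, I would fix $f\in\mathcal{H}(k_\eta)$, points $x^a,x^b\in\R^{2m}$, $v^a,v^b\in\R$, and collect the arguments into $a\coloneq\col(x^a,v^a)$ and $b\coloneq\col(x^b,v^b)$ in $\R^{2m+1}$, so that $a-b=\col(\tilde x,\tilde v)$ and hence $\norm*{a-b}^2=\norm*{\tilde x}^2+\norm*{\tilde v}^2$. By the reproducing property $f(a)=\inner*{f}{k_\eta^a}_{k_\eta}$ and $f(b)=\inner*{f}{k_\eta^b}_{k_\eta}$, so $\tilde f = f(a)-f(b)=\inner*{f}{k_\eta^a-k_\eta^b}_{k_\eta}$, and Cauchy--Schwarz gives $\norm*{\tilde f}^2\le\norm*{f}_{k_\eta}^2\,\norm*{k_\eta^a-k_\eta^b}_{k_\eta}^2$. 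Expanding the last factor with $\inner*{k_\eta^a}{k_\eta^b}_{k_\eta}=k_\eta(a,b)$ identifies it with $k_\eta(a,a)-2k_\eta(a,b)+k_\eta(b,b)=h_\eta(a,b)$, whence $\norm*{\tilde f}^2\le\norm*{f}_{k_\eta}^2\,h_\eta(a,b)$.

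It then remains to split according to whether $\norm*{a-b}^2=\norm*{\tilde x}^2+\norm*{\tilde v}^2$ is at least $\nu$ or smaller, recalling $\rho=\nu$. If $\norm*{\tilde x}^2+\norm*{\tilde v}^2\ge\nu$, assumption~\eqref{eq:δstability:kernel:cond:contractive} gives $h_\eta(a,b)\le\norm*{a-b}^2=\norm*{\tilde x}^2+\norm*{\tilde v}^2$, hence $\norm*{\tilde f}^2\le\norm*{f}_{k_\eta}^2\norm*{\tilde x}^2+\norm*{f}_{k_\eta}^2\norm*{\tilde v}^2$, which is exactly~\eqref{eq:δstability:cond:contractive} with $\mu=m\norm*{f}_{k_\eta}^2$ and $\omega(p)=\norm*{f}_{k_\eta}^2 p^2$ (a class-$\mathcal{K}$ function when $f\ne 0$; the case $f=0$ being trivial). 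If instead $\norm*{\tilde x}^2+\norm*{\tilde v}^2<\nu$, assumption~\eqref{eq:δstability:kernel:cond:boundedness} gives $h_\eta(a,b)\le s$, hence $\norm*{\tilde f}\le\sqrt{s}\,\norm*{f}_{k_\eta}=d$, which is~\eqref{eq:δstability:cond:boundedness}. I do not expect a genuine obstacle here; the only places needing a little care are the bookkeeping identity $\norm*{a-b}^2=\norm*{\tilde x}^2+\norm*{\tilde v}^2$, which makes the kernel-argument threshold $\nu$ transfer verbatim to the $(\tilde x,\tilde v)$ threshold $\rho$, and the verification that the quadratic $\omega$ qualifies as a class-$\mathcal{K}$ function.
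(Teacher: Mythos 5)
Your proposal is correct and follows essentially the same route as the paper's proof: the reproducing property plus Cauchy--Schwarz to get $\norm*{\tilde f}^2 \le \norm*{f}_{k_\eta}^2 h_\eta(a,b)$, identifying $h_\eta$ as the squared RKHS distance of the kernel slices, and then the same case split as in Proposition~\ref{prop:stability:kernel}. The only (harmless) addition is your remark about $\omega$ being class-$\mathcal{K}$ when $f\ne 0$, which the paper leaves implicit.
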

The function $h_\eta$ defined in Proposition~\ref{prop:δstability:kernel} is the squared kernel metric of $k_\eta$~\cite[Sec.~4.2]{Steinwart2008a}.
Hence, Condition~\eqref{eq:δstability:kernel:cond} in equivalent to asking that the canonical feature map of $k_\eta$ is non-expansive for points whose distance is large, and incrementally bounded for points that are closer.

As before, Step~\ref{proc:δstability:step:2} is fulfilled if we can properly define the viability set $\Omega_k$ in such a way that all the solutions of~\eqref{eq:hp_sel_gen:stable} satisfy the assumptions of Proposition~\ref{prop:δstability:kernel}.
This leads to the following definition (cf. Definition~\ref{def:ρ-viability}).
\begin{definition} \label{def:ρ-δviability}
	Let $k$ be a kernel structure.
	Given $\rho \in [0, \infty] $, we define the \emph{$\rho$-$\delta$viability set} of $k$ as
	\begin{equation*}
		\Delta^\rho_k \coloneq
		\left\{
			\eta \in \Phi_k \st
			\exists \nu \in [0, \rho] \cap \R,\, \exists s \in [0, \infty),\, \textnormal{\eqref{eq:δstability:kernel:cond} holds}
		\right\}.
	\end{equation*}
	If $\Delta^\rho_k \ne \varnothing$, we say that the kernel structure $k$ is \emph{$\rho$-$\delta$viable}.
\end{definition}
The $\rho$-$\delta$viability of some of the most commonly used kernels is discussed in Section~\ref{sec:kernels} along with their viability sets.
Joining~\ref{proc:δstability:step:1} and~\ref{proc:δstability:step:2} and using Proposition~\ref{prop:δstability:kernel} directly leads to Theorem~\ref{thm:δBIBS-δISS} below (analogous to Theorem~\ref{thm:BIBS-ISS}).
\begin{theorem}\label{thm:δBIBS-δISS}
	Let $\rho \in [0,\infty]$, $k$ be a $\rho$-$\delta$viable kernel structure, and $(\beta, \eta)$ be an optimal solution of~\eqref{eq:hp_sel_gen:stable} obtained with $\Omega_k = \Delta^\rho_k$.
	Let $f$ be the optimal solution of~\eqref{eq:opt_c:stable}.
	Then, $\Sigma(f)$ is $\delta$BIBS stable.
	If, in addition, $f$ is uniformly continuous, then $\Sigma(f)$ has the w$\delta$AG property with asymptotic incremental bias $b(\rho)$, where $b(\cdot)$ is a class-$\mathcal{K}$ function of $\rho$.
	Finally, if $\rho=0$, then $\Sigma(f)$ is $\delta$ISS.
\end{theorem}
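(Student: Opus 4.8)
The plan is to obtain Theorem~\ref{thm:δBIBS-δISS} by combining Proposition~\ref{prop:δstability:kernel}, which transfers the kernel-level conditions~\eqref{eq:δstability:kernel:cond} into the predictor-level conditions~\eqref{eq:δstability:cond} for \emph{every} $f\in\mathcal{H}(k_\eta)$ (albeit with a contraction constant $\mu=m\norm*{f}_{k_\eta}^2$ that is not yet known to be $<1$), with Proposition~\ref{prop:δstability:f}, which converts~\eqref{eq:δstability:cond} with $\mu\in[0,1)$ into the desired incremental stability properties of $\Sigma(f)$. The bridge between the two is the norm constraint $m\norm*{f}_{k_\eta}^2\le\chi<1$ built into Problem~\eqref{eq:opt_c:stable}: this is exactly what forces $\mu$ strictly below $1$ for the selected predictor, so that Proposition~\ref{prop:δstability:f} becomes applicable.

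Concretely, I would first note that $\rho$-$\delta$viability of $k$ means $\Delta^\rho_k\ne\varnothing$, so with $\Omega_k=\Delta^\rho_k$ the feasible set of~\eqref{eq:hp_sel_gen:stable} is nonempty and any optimal (indeed, any merely feasible) pair $(\beta,\eta)$ has $\eta\in\Delta^\rho_k$; by Definition~\ref{def:ρ-δviability} this yields $\nu\in[0,\rho]\cap\R$ and $s\in[0,\infty)$ for which~\eqref{eq:δstability:kernel:cond} holds, with $\nu$ finite even if $\rho=\infty$. Proposition~\ref{prop:δstability:kernel} then gives that every $f\in\mathcal{H}(k_\eta)$ satisfies~\eqref{eq:δstability:cond} with $\rho$ replaced by $\nu$, $\mu=m\norm*{f}_{k_\eta}^2$, $d=\sqrt{s}\,\norm*{f}_{k_\eta}$, and $\omega(p)=\norm*{f}_{k_\eta}^2p^2$. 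Next, Theorem~\ref{thm:representer:stable_c} guarantees that~\eqref{eq:opt_c:stable} is solvable; taking $f$ to be a solution, feasibility gives $\mu=m\norm*{f}_{k_\eta}^2\le\chi<1$, hence $\mu\in[0,1)$, while $\nu\in[0,\infty)$, $d\in[0,\infty)$, and $\omega$ is of class $\mathcal{K}$ (the degenerate case $\norm*{f}_{k_\eta}=0$ forces $f=0$, for which every conclusion is immediate, so one may assume $\norm*{f}_{k_\eta}>0$).

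It then remains to invoke Proposition~\ref{prop:δstability:f} with $\rho=\nu$: Part~\ref{prop:δstability:f:BIBS} gives $\delta$BIBS stability of $\Sigma(f)$; if $f$ is uniformly continuous, Part~\ref{prop:δstability:f:wAG} gives the w$\delta$AG property with incremental bias $b(\nu)$ for a class-$\mathcal{K}$ function $b$, and since $\nu\le\rho$ and $b$ is increasing, $b(\nu)\le b(\rho)$, so the bias is a class-$\mathcal{K}$ function of $\rho$; finally, if $\rho=0$ then $\nu\in[0,0]\cap\R=\{0\}$, so $\nu=0$ and~\eqref{eq:δstability:cond} holds with $\rho=0$, whence Part~\ref{prop:δstability:f:ISS} yields $\delta$ISS. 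I expect no serious obstacle: the argument is a direct concatenation of the two propositions, and the only point requiring care is the bookkeeping of the kernel-level threshold $\nu$ against the viability parameter $\rho$ — verifying that $\nu$ remains finite when $\rho=\infty$, collapses to $0$ when $\rho=0$, and that monotonicity of $b$ lets the bias $b(\nu)$ be absorbed into $b(\rho)$.
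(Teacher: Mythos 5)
Your proposal is correct and follows exactly the route the paper intends: the paper gives no separate proof of Theorem~\ref{thm:δBIBS-δISS}, stating that it follows directly from joining Steps~\ref{proc:δstability:step:1}--\ref{proc:δstability:step:2}, i.e.\ Proposition~\ref{prop:δstability:kernel} applied with $\eta\in\Delta^\rho_k$, the constraint $m\norm*{f}_{k_\eta}^2\le\chi<1$ in~\eqref{eq:opt_c:stable} forcing $\mu<1$, and then Proposition~\ref{prop:δstability:f}. Your additional bookkeeping (solvability via Theorem~\ref{thm:representer:stable_c}, finiteness of $\nu$ when $\rho=\infty$, $\nu=0$ when $\rho=0$, monotonicity of $b$ giving bias $b(\nu)\le b(\rho)$, and the trivial case $f=0$) is exactly the care needed to make that "directly leads to" rigorous.
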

Since $\rho \mapsto \Delta_k^\rho$ is increasing, the same relationship among the conditions for BIBS stability and ISS discussed in Section~\ref{sec:stability:estimation} holds for their incremental version as well.
Moreover, since $f$ is a linear combination of kernel functions, it is uniformly continuous when so are the functions $k_\eta^z$ for each $z \in \R^{2m+1}$.
Finally, we underline that similar conclusions of those given in Remark~\ref{rem:iss:BIBS} apply to $\delta$BIBS stability as well.

\section{Kernel Viability}
\label{sec:kernels}

In this section, we consider some of the most commonly used kernels, and we derive their $\rho$-viability and $\rho$-$\delta$viability sets.
Before delving into specific kernels, it is possible to state two general results and draw some preliminary considerations.
Specifically, Proposition~\ref{prop:k:cont-uni_cont} below states that, if the kernel is continuous (resp. uniformly continuous), Condition~\eqref{eq:stability:kernel:cond:boundedness} (resp. Condition~\eqref{eq:δstability:kernel:cond:boundedness}) is always satisfied and $\rho$-viability only depends on Condition~\eqref{eq:stability:kernel:cond:contractive} (resp. the $\rho$-$\delta$viability only depends on~\eqref{eq:δstability:kernel:cond:contractive}).
The proof of Proposition~\ref{prop:k:cont-uni_cont} directly follows by the fact that continuous functions are locally bounded, and a similar property holds for uniformly continuous functions (see Lemma~\ref{lem:f_unif_continuous_incr_bounded} in Appendix~\ref{prop:δstability:f::proof:unif-cont}).
Proposition~\ref{prop:k:bounded}, instead, states that all bounded kernels are $\infty$-viable and $\infty$-$\delta$viable.
\begin{proposition}[Continuous kernels] \label{prop:k:cont-uni_cont}
	Let $k$ be a kernel structure such that, for all $\eta \in \Phi_k$, the kernel function $k_\eta$ is continuous (resp. uniformly continuous).
	Then, for each $\rho \in [0, \infty]$, it holds that
	$\Theta^\rho_k =
	\left\{
	\eta \in \Phi_k \st
	\exists \nu \in [0, \rho] \cap \R, \textnormal{~\eqref{eq:stability:kernel:cond:contractive} holds}
	\right\}$
	(resp. $\Delta^\rho_k =
	\left\{
	\eta \in \Phi_k \st
	\exists \nu \in [0, \rho] \cap \R, \textnormal{~\eqref{eq:δstability:kernel:cond:contractive} holds}
	\right\}$).
\end{proposition}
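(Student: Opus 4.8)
The plan is to prove each of the two set equalities by double inclusion. In both cases the inclusion ``$\subseteq$'' is immediate from the definitions of $\Theta^\rho_k$ and $\Delta^\rho_k$: any $\eta$ in one of these sets already comes with an admissible $\nu\in[0,\rho]\cap\R$ for which the \emph{full} condition \eqref{eq:stability:kernel:cond} (resp.\ \eqref{eq:δstability:kernel:cond}) holds, in particular its contractive part \eqref{eq:stability:kernel:cond:contractive} (resp.\ \eqref{eq:δstability:kernel:cond:contractive}). All the work therefore lies in the inclusion ``$\supseteq$'': given $\eta\in\Phi_k$ and some $\nu\in[0,\rho]\cap\R$ for which only the contractive condition is assumed, one must produce an $s\in[0,\infty)$ for which the boundedness condition \eqref{eq:stability:kernel:cond:boundedness} (resp.\ \eqref{eq:δstability:kernel:cond:boundedness}) also holds, so that $\eta$ lands in $\Theta^\rho_k$ (resp.\ $\Delta^\rho_k$). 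Note that $\nu$ is necessarily finite, since $[0,\rho]\cap\R$ equals either $[0,\rho]$ or $[0,\infty)$.

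For the continuous case, the map $a\mapsto k_\eta(a,a)$ is continuous on $\R^{2m+1}$, being the composition of $k_\eta$ with the continuous diagonal embedding $a\mapsto(a,a)$. The closed ball $\{a\in\R^{2m+1}\st\norm*{a}\le\sqrt\nu\}$ is compact, so this map attains a finite maximum there, which I take as $s$; it satisfies $s\ge k_\eta(0_{2m+1},0_{2m+1})\ge 0$ by positive semidefiniteness of $k_\eta$, hence $s\in[0,\infty)$. Since $\norm*{a}^2<\nu$ implies $\norm*{a}\le\sqrt\nu$, one gets $k_\eta(a,a)\le s$ for every such $a$, i.e.\ \eqref{eq:stability:kernel:cond:boundedness} holds; together with the assumed \eqref{eq:stability:kernel:cond:contractive} this gives $\eta\in\Theta^\rho_k$.

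The uniformly continuous case is the substantive one, and the main obstacle is that the boundedness condition \eqref{eq:δstability:kernel:cond:boundedness} now has to hold on the \emph{unbounded} set $\{(a,b)\st\norm*{a-b}^2<\nu\}$ — a slab around the diagonal — so compactness is of no help. The remedy is to exploit that $\sqrt{h_\eta}$ is the kernel metric of $k_\eta$, hence satisfies the triangle inequality, and to combine this with a chaining argument; this is exactly the content of Lemma~\ref{lem:f_unif_continuous_incr_bounded}, which records that uniformly continuous maps are ``incrementally bounded''. Concretely, writing $h_\eta(a,b)=\bigl(k_\eta(a,a)-k_\eta(a,b)\bigr)+\bigl(k_\eta(b,b)-k_\eta(a,b)\bigr)$ and noting that $(a,a)$ and $(b,b)$ each sit at Euclidean distance $\norm*{a-b}$ from $(a,b)$ in $\R^{2m+1}\times\R^{2m+1}$, uniform continuity of $k_\eta$ (applied with tolerance $1$) yields a $\delta>0$ such that $\norm*{a-b}<\delta$ forces $h_\eta(a,b)<2$, i.e.\ $\sqrt{h_\eta(a,b)}<\sqrt2$. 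Then, for arbitrary $a,b$ with $\norm*{a-b}^2<\nu$ (the case $\nu=0$ being vacuous), I split the segment from $a$ to $b$ into $N\coloneq\lceil\sqrt\nu/\delta\rceil$ equal sub-segments with endpoints $p_0=a,\dots,p_N=b$, each of length $\norm*{a-b}/N<\delta$, and apply the triangle inequality for $\sqrt{h_\eta}$ to obtain $\sqrt{h_\eta(a,b)}\le\sum_{j=1}^N\sqrt{h_\eta(p_{j-1},p_j)}<N\sqrt2$, whence $h_\eta(a,b)<2N^2$. Choosing $s\coloneq 2N^2$ establishes \eqref{eq:δstability:kernel:cond:boundedness}, and together with the assumed \eqref{eq:δstability:kernel:cond:contractive} this yields $\eta\in\Delta^\rho_k$, completing the proof.
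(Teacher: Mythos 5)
Your proof is correct and follows essentially the same route as the paper: the inclusion $\subseteq$ is definitional, the continuous case is handled by local boundedness of $a\mapsto k_\eta(a,a)$ on the compact ball of radius $\sqrt{\nu}$, and the uniformly continuous case rests on the fact that uniformly continuous functions have bounded increments over bounded distances. The only cosmetic difference is that the paper obtains the bound on $h_\eta$ over the slab $\{(a,b) \st \norm*{a-b}^2<\nu\}$ by invoking Lemma~\ref{lem:f_unif_continuous_incr_bounded} directly for $k_\eta$ on the product space (bounding $k_\eta(a,a)-k_\eta(a,b)$ and $k_\eta(b,b)-k_\eta(a,b)$ separately), whereas you re-derive the same chaining mechanism yourself via the triangle inequality of the kernel pseudometric $\sqrt{h_\eta}$.
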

\begin{proposition} [Bounded kernels] \label{prop:k:bounded}
	Let $k$ be a kernel structure such that, for each $\eta \in \Phi_k$, $k_\eta$ is bounded.
	Then, $k$ is $\infty$-viable and $\infty$-$\delta$viable with $\Theta^\infty_k = \Delta^\infty_k = \Phi_k$.
\end{proposition}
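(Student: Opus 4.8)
The plan is to produce, for an arbitrary $\eta \in \Phi_k$, an explicit witnessing pair $(\nu,s) \in [0,\infty)^2$ certifying that $\eta$ belongs to both $\Theta^\infty_k$ and $\Delta^\infty_k$. Since $\Theta^\infty_k \subseteq \Phi_k$ and $\Delta^\infty_k \subseteq \Phi_k$ hold by Definitions~\ref{def:ρ-viability} and~\ref{def:ρ-δviability}, and since for $\rho = \infty$ the requirement $\nu \in [0,\rho] \cap \R = [0,\infty)$ imposes no ceiling on $\nu$, this immediately yields $\Theta^\infty_k = \Delta^\infty_k = \Phi_k$; the asserted $\infty$-viability and $\infty$-$\delta$viability then follow at once, since $\Phi_k \neq \varnothing$.

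First I would fix $\eta \in \Phi_k$ and set $M \coloneq \sup_{a \in \R^{2m+1}} k_\eta(a,a)$, which is finite by the boundedness hypothesis and nonnegative because $k_\eta$ is positive semidefinite. I would also record the consequence of the Cauchy--Schwarz inequality in $\mathcal{H}(k_\eta)$, namely $\norm*{k_\eta(a,b)} \le \sqrt{k_\eta(a,a)}\,\sqrt{k_\eta(b,b)} \le M$ for all $a,b \in \R^{2m+1}$ (so that, in particular, boundedness of the diagonal $a \mapsto k_\eta(a,a)$ is the same as boundedness of $k_\eta$ on the whole product space). For the $\rho$-viability part I would take $\nu = s = M$: if $\norm*{a}^2 \ge \nu$ then $k_\eta(a,a) \le M \le \norm*{a}^2$, which is~\eqref{eq:stability:kernel:cond:contractive}, while $k_\eta(a,a) \le M = s$ holds for every $a$ and in particular gives~\eqref{eq:stability:kernel:cond:boundedness}. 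Hence $\eta \in \Theta^\infty_k$.

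For the $\rho$-$\delta$viability part I would use that $h_\eta(a,b) = k_\eta(a,a) - 2k_\eta(a,b) + k_\eta(b,b) = \norm*{k_\eta^a - k_\eta^b}_{k_\eta}^2 \ge 0$, together with the cross-term bound above, to get $0 \le h_\eta(a,b) \le k_\eta(a,a) + 2\norm*{k_\eta(a,b)} + k_\eta(b,b) \le 4M$ for all $a,b$. Taking $\nu = s = 4M$ and repeating the identical two-case argument shows that both inequalities in~\eqref{eq:δstability:kernel:cond} hold, so $\eta \in \Delta^\infty_k$. As $\eta \in \Phi_k$ was arbitrary, the proof is complete.

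I do not anticipate a genuine obstacle here: the only step requiring a line of work is the passage, via Cauchy--Schwarz, from a bound on the diagonal of $k_\eta$ to a uniform bound on the kernel metric $h_\eta$; the remainder is the elementary observation that any uniform constant bound can be absorbed below the threshold simply by enlarging $\nu$, which is permissible precisely because $\rho = \infty$ places no upper limit on $\nu$.
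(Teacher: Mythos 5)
Your proof is correct and follows essentially the same route as the paper's: exhibit the uniform bound on the kernel as the witness $\nu=s$ for~\eqref{eq:stability:kernel:cond}, and the factor-of-four bound $h_\eta(a,b)\le 4M$ as the witness for~\eqref{eq:δstability:kernel:cond}, exploiting that $\rho=\infty$ places no ceiling on $\nu$. The only cosmetic difference is that you derive the off-diagonal bound from the diagonal one via Cauchy--Schwarz, whereas the paper takes the full bound $\norm*{k_\eta(a,b)}\le b_\eta$ directly from the hypothesis.
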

\begin{proof} \label{prop:k:bounded::proof}
	Let $b_\eta \in (0,\infty)$ be such that $\norm*{k_\eta (a,b)} \le b_\eta$ for all $a,b \in \R^{2m+1}$.
	Condition~\eqref{eq:stability:kernel:cond:boundedness} trivially holds with $\nu = s = b_\eta$.
	Similarly, also~\eqref{eq:stability:kernel:cond:contractive} holds with $\nu = b_\eta$ since $\norm*{a}^2 \ge c_\eta \Rightarrow \norm*{a}^2 \ge k_\eta(a,a)$ for all $a\in\R^{2m+1}$.
	Hence, $\Theta^\infty_k = \Phi_k$.
	Finally, $\Delta^\infty_k = \Phi_k$ can be proved similarly by noticing that, for every $a,b \in \R^{2m+1}$, $h_\eta(a,b) \le 4 b_\eta$.
\end{proof}

\subsection{Degenerate kernels}
\label{sec:kernels:degenerate}

This section considers a \emph{degenerate} kernel structure $k$~\cite[Def.~4.1]{Rasmussen2006a}, namely, such that, for each $\eta \in \Phi_k$, there exist $p_\eta \in \N$ and $\Gamma_\eta : \R^{2m+1} \to \R^{p_\eta}$, such that $k_\eta(a,b) = \Gamma_\eta(a)^\top \Gamma_\eta(b)$ for every $a, b \in \R^{2m+1}$.
Since $k_\eta(a,a) = \norm*{\Gamma_\eta(a)}^2$ and $h_\eta(a,b) = \norm*{\Gamma_\eta(a) - \Gamma_\eta(b)}^2$, Conditions~\eqref{eq:stability:kernel:cond} and~\eqref{eq:δstability:kernel:cond} can be easily rewritten in terms of $\Gamma_\eta$.
Moreover, $k_\eta$ is continuous or bounded if $\Gamma_\eta$ enjoys the same property.
Therefore, Proposition~\ref{prop:k:bounded} and the first statement of Proposition~\ref{prop:k:cont-uni_cont} can easily be rewritten in terms of $\Gamma_\eta$ instead of $k_\eta$.

A relevant degenerate kernel is the linear affine kernel analyzed in Proposition~\ref{prop:k:linear} below (proved in Appendix~\ref{prop:k:linear::proof}).
It defines the space of all linear affine functions.
Hence, it produces linear models.
Although there are more specialized methods for kernel-based identification of linear systems~\cite{Pillonetto2014a}, Proposition~\ref{prop:k:linear} shows that, for $0$-viability, the hyperparameter that regulates the affine part of the kernel needs to be $0$.
In this way, we guarantee that the space only contains linear functions.
Furthermore, the $0$-$\delta$viability set, the $\infty$-$\delta$viability set, and the $\infty$-viability are the same.
This is consistent with the fact that $\delta$ISS, $\delta$BIBS stability, and BIBS stability are all equivalent notions for linear systems.
\begin{proposition}[Linear affine kernel] \label{prop:k:linear} %
	Let $k$ be a kernel structure such that $\Phi_k = [0, \infty)^2$ and $k_{\eta}(a,b) = \tau a^\top b + \sigma$ for all $a,b \in \R^{2m+1}$ and $\eta = (\tau, \sigma) \in \Phi_k$.
	Then, for each $\rho \in [0, \infty]$, it follows that $\Theta^\rho_k = \{ (\tau, \sigma) \in \Phi_k \st \tau \in [0,1] \land \sigma \in [0, \rho(1-\tau)] \}$ and $\Delta_k^\rho = [0,1] \times [0,\infty)$.
\end{proposition}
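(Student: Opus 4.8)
The plan is to exploit the degeneracy of $k$: writing $k_\eta(a,b) = \Gamma_\eta(a)^\top \Gamma_\eta(b)$ with feature map $\Gamma_\eta(a) \coloneq \col(\sqrt{\tau}\, a, \sqrt{\sigma}) \in \R^{2m+2}$, we immediately get $k_\eta(a,a) = \tau\norm*{a}^2 + \sigma$ and, in one line, $h_\eta(a,b) = \norm*{\Gamma_\eta(a) - \Gamma_\eta(b)}^2 = \tau\norm*{a-b}^2$. The structural point is that $h_\eta$ is independent of $\sigma$, which is precisely what will make $\Delta_k^\rho$ independent of both $\sigma$ and $\rho$, in agreement with the equivalence of $\delta$BIBS stability, $\delta$ISS, and BIBS stability for linear systems.

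For $\Theta_k^\rho$, since each $k_\eta$ is a polynomial, hence continuous, Proposition~\ref{prop:k:cont-uni_cont} reduces membership to the contractive condition~\eqref{eq:stability:kernel:cond:contractive} alone (alternatively, \eqref{eq:stability:kernel:cond:boundedness} is verified by hand with $s = \tau\nu + \sigma$, which is finite because $\nu$ is). Thus $(\tau,\sigma)\in\Theta_k^\rho$ iff there is $\nu\in[0,\rho]\cap\R$ with $(1-\tau)\norm*{a}^2 \ge \sigma$ for every $a$ with $\norm*{a}^2 \ge \nu$. I would then split on $\tau$. If $\tau>1$: the left-hand side is negative for $a$ of large norm while $\sigma\ge 0$, so no such $\nu$ exists. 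If $\tau=1$: the condition is $0\ge\sigma$, which forces $\sigma=0$, and then any $\nu$ works. If $\tau<1$: the condition is equivalent to $\nu\ge\sigma/(1-\tau)$ (the ``only if'' by evaluating at $a$ with $\norm*{a}^2=\nu$), and an admissible $\nu\le\rho$ exists iff $\sigma\le\rho(1-\tau)$. Collecting the three cases yields $\Theta_k^\rho = \{(\tau,\sigma)\in\Phi_k : \tau\in[0,1] \text{ and } \sigma\in[0,\rho(1-\tau)]\}$.

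For $\Delta_k^\rho$, I would argue directly from $h_\eta(a,b)=\tau\norm*{a-b}^2$. Condition~\eqref{eq:δstability:kernel:cond:contractive} reads $(1-\tau)\norm*{a-b}^2\ge 0$ on the set $\{\norm*{a-b}^2\ge\nu\}$; because $\nu$ must be a finite real, this set contains pairs with arbitrarily large $\norm*{a-b}$, so the condition is satisfiable for some admissible $\nu$ iff $\tau\le 1$, in which case it already holds with $\nu=0$, and then~\eqref{eq:δstability:kernel:cond:boundedness} is vacuous (take $s=0$). Since $\nu=0\in[0,\rho]\cap\R$ for every $\rho\in[0,\infty]$, this gives $\Delta_k^\rho=[0,1]\times[0,\infty)$ for all $\rho$.

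I do not expect a genuine obstacle: once the feature map is written down, everything is an elementary case analysis. The only points that need care are at the boundary — showing that $\tau>1$ is truly infeasible (which requires exhibiting vectors of a prescribed norm, harmless since $2m+1\ge 1$), and checking that the degenerate endpoint $\tau=1$ (where $\rho(1-\tau)=0$ forces $\sigma=0$) together with the endpoints $\rho\in\{0,\infty\}$ are consistent with the stated sets, reading $[0,\rho(1-\tau)]$ as $[0,\infty)$ when $\rho=\infty$ and $\tau<1$.
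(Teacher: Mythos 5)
Your proposal is correct and follows essentially the same route as the paper's proof: reduce $\Theta_k^\rho$ to the contractive condition via continuity (Proposition~\ref{prop:k:cont-uni_cont}), then do the elementary case analysis on $\tau$ using $k_\eta(a,a)=\tau\norm*{a}^2+\sigma$, and for $\Delta_k^\rho$ observe $h_\eta(a,b)=\tau\norm*{a-b}^2$ so that the boundedness condition is trivial and the contractive one holds iff $\tau\le 1$. Deriving $h_\eta$ through the explicit feature map rather than by direct computation is only a presentational difference.
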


Another common degenerate kernel is the polynomial kernel analyzed in Proposition~\ref{prop:k:polynomial} below (proved in Appendix~\ref{prop:k:polynomial::proof}).
This kernel is widely used in the literature~\cite{Aguirre2002a,Karami2021a,Farina2012a} because it defines the space of polynomials with an arbitrary degree~\cite[Prop.~2.1]{Scholkopf2018a}.
Unfortunately, this kernel is only viable when it is reduced to the linear kernel previously analyzed.
\begin{proposition}[Polynomial kernel] \label{prop:k:polynomial} %
	Let $k$ be a kernel structure such that $\Phi_k = \{ 2, \cdots \} \subseteq \N$ and $k_{\eta}(a,b) = (a^\top b)^\eta$ for all $a,b \in \R^{2m+1}$ and $\eta \in \Phi_k$.
	Then, for every $\rho \in [0, \infty]$, $\Theta^\rho_k = \Delta^\rho_k = \varnothing$.
\end{proposition}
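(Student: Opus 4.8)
The plan is to show that the contractive conditions \eqref{eq:stability:kernel:cond:contractive} and \eqref{eq:δstability:kernel:cond:contractive} fail for \emph{every} admissible threshold, so that the boundedness conditions never even enter the picture and both viability sets are empty. The key elementary fact is that for this kernel $k_\eta(a,a) = (a^\top a)^\eta = \norm*{a}^{2\eta}$, and since $\eta \ge 2$ this grows strictly faster than $\norm*{a}^2$ once $\norm*{a} > 1$: for instance, $\norm*{a} \ge 2$ gives $k_\eta(a,a) = \norm*{a}^{2}\,\norm*{a}^{2(\eta-1)} \ge 4^{\eta-1}\norm*{a}^{2} > \norm*{a}^2$.

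First I would rule out $\rho$-viability. Fix $\eta \in \Phi_k$ and $\rho \in [0,\infty]$, and suppose towards a contradiction that $\eta \in \Theta^\rho_k$; then there is a \emph{real} number $\nu \in [0,\rho]$ for which \eqref{eq:stability:kernel:cond} holds. Pick $a \in \R^{2m+1}$ with $\norm*{a} = \max\{2,\sqrt{\nu}\}$, so that $\norm*{a}^2 \ge \nu$ and $\norm*{a}\ge 2$. By the fact recalled above $k_\eta(a,a) > \norm*{a}^2$, contradicting \eqref{eq:stability:kernel:cond:contractive}. Since $\nu$ is necessarily finite --- it lies in $[0,\rho]\cap\R$, which excludes $\infty$ even when $\rho=\infty$ --- no admissible $\nu$ exists, and hence $\Theta^\rho_k = \varnothing$.

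The same point $a$ settles $\rho$-$\delta$viability by testing \eqref{eq:δstability:kernel:cond:contractive} on the pair $(a,0_{2m+1})$. Using $k_\eta(a,0_{2m+1}) = (a^\top 0_{2m+1})^\eta = 0$ and $k_\eta(0_{2m+1},0_{2m+1})=0$, one gets $h_\eta(a,0_{2m+1}) = k_\eta(a,a) = \norm*{a}^{2\eta}$, whereas $\norm*{a-0_{2m+1}}^2 = \norm*{a}^2 \ge \nu$; the inequality $\norm*{a}^{2\eta} > \norm*{a}^2$ then violates \eqref{eq:δstability:kernel:cond:contractive}, once more for every real $\nu$, so $\Delta^\rho_k = \varnothing$ as well. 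There is no genuinely hard step here; the entire content of the proposition is that the thresholds $\nu$ in Definitions~\ref{def:ρ-viability} and~\ref{def:ρ-δviability} are required to be finite, while the super-linear growth of $k_\eta$ on the diagonal forces \eqref{eq:stability:kernel:cond:contractive} and \eqref{eq:δstability:kernel:cond:contractive} to fail outside the unit ball --- a region no finite $\nu$ can exclude.
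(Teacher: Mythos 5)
Your proof is correct and takes essentially the same route as the paper: both arguments rest on the super-linear growth of the kernel on the diagonal, the paper expressing it as $k_\eta(a,a)=\norm*{\Gamma_\eta(a)}^2$ via the degenerate-kernel feature map while you compute $k_\eta(a,a)=\norm*{a}^{2\eta}$ directly, and both conclude that no finite $\nu\in[0,\rho]\cap\R$ (even when $\rho=\infty$) can satisfy the contractive condition. Your explicit handling of $\Delta^\rho_k$ through the pair $(a,0_{2m+1})$, giving $h_\eta(a,0_{2m+1})=\norm*{a}^{2\eta}>\norm*{a}^2$, simply spells out what the paper leaves as ``similar arguments.''
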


Although Proposition~\ref{prop:k:polynomial} gives a somewhat negative result, nonlinear polynomial kernels (as well as other degenerate kernels) can always be modified to enable their usage to learn stable systems.
This is briefly discussed in Section~\ref{sec:kernels:composition}.

\subsection{Stationary Kernels}
\label{sec:kernels:stationary}

Stationary kernels enjoy many relevant properties (see~\cite[Sec.~4.2.1]{Rasmussen2006a} or~\cite[Sec.~4.4]{Scholkopf2018a} for more details), and they include many widely-used kernels such as the \emph{Gaussian Kernel}~\cite{Steinwart2006a} and all \emph{isotropic kernels}.
\begin{definition}
	A kernel structure $k$ is said to be stationary if, for every $\eta \in \Phi_k$, there exists a function $\bar{k}_\eta:\R^{2m+1}\to\R$, such that $k_\eta(a,b) = \bar{k}_\eta(a-b)$ for all $a,b \in \R^{2m+1}$.
\end{definition}
The following result characterizes viability of stationary kernels.
\begin{theorem} [Stationary kernels] \label{thm:k:stationary} %
	Let $k$ be a stationary kernel structure.
	Then, for each $\rho \in [0, \infty]$,
	\begin{align*}
		\Theta^\rho_k
		&
		= \{ \eta \in \Phi_k \st \bar{k}_\eta(0_{2m+1}) \le \rho \} ,\\
		\Delta^\rho_k
		&
		= \big\{
			\eta \in \Phi_k
			\st
			\forall z \in \R^{2m+1},
			\\ & \hspace{18mm}
			\norm*{z}^2 \ge \rho
			\Rightarrow
			2\bar{k}_\eta(0_{2m+1}) - 2\bar{k}_\eta(z) \le \norm*{z}^2
		\big\}.
	\end{align*}
	Moreover, $\Delta^\rho_k \supseteq \{ \eta \in \Phi_k \st 4 \bar{k}_\eta(0_{2m+1}) \le \rho \}$.
\end{theorem}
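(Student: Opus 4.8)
The plan is to exploit stationarity to reduce Conditions~\eqref{eq:stability:kernel:cond} and~\eqref{eq:δstability:kernel:cond} to elementary statements about $\bar{k}_\eta$, and then to read off the two sets directly from Definitions~\ref{def:ρ-viability} and~\ref{def:ρ-δviability}. First I would record two structural facts. Since $k_\eta(a,b) = \bar{k}_\eta(a-b)$, the diagonal is constant, $k_\eta(a,a) = \bar{k}_\eta(0_{2m+1})$ for all $a$, and $h_\eta(a,b) = 2\bar{k}_\eta(0_{2m+1}) - 2\bar{k}_\eta(a-b)$ depends on $(a,b)$ only through the difference $z = a-b$, which moreover ranges over all of $\R^{2m+1}$ as $(a,b)$ varies (e.g.\ $b = 0_{2m+1}$, $a = z$). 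Second, since every $\eta \in \Phi_k$ makes $k_\eta$ symmetric and positive semidefinite, the Cauchy--Schwarz inequality for reproducing kernels gives $|\bar{k}_\eta(z)| \le \bar{k}_\eta(0_{2m+1})$ for all $z$, while positive semidefiniteness gives $\bar{k}_\eta(0_{2m+1}) \ge 0$; combining these, $0 \le h_\eta(a,b) \le 4\bar{k}_\eta(0_{2m+1})$ uniformly in $a,b$.

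Next I would treat $\Theta^\rho_k$. Substituting the constant diagonal into~\eqref{eq:stability:kernel:cond}, Condition~\eqref{eq:stability:kernel:cond:boundedness} holds for every $\nu$ with the choice $s = \bar{k}_\eta(0_{2m+1})$, so it is vacuous; Condition~\eqref{eq:stability:kernel:cond:contractive} reads ``$\bar{k}_\eta(0_{2m+1}) \le \norm*{a}^2$ whenever $\norm*{a}^2 \ge \nu$'', and since $\norm*{a}^2$ attains every value in $[\nu,\infty)$ this is equivalent to $\bar{k}_\eta(0_{2m+1}) \le \nu$. Hence $\eta \in \Theta^\rho_k$ iff some $\nu \in [0,\rho]\cap\R$ satisfies $\bar{k}_\eta(0_{2m+1}) \le \nu$, which (using $\bar{k}_\eta(0_{2m+1}) \ge 0$) holds iff $\bar{k}_\eta(0_{2m+1}) \le \rho$: when $\rho < \infty$ take $\nu = \rho$, and when $\rho = \infty$ take the finite $\nu = \bar{k}_\eta(0_{2m+1})$. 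This is the first claimed identity.

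For $\Delta^\rho_k$ I would argue analogously. The uniform bound from the first step makes Condition~\eqref{eq:δstability:kernel:cond:boundedness} hold for every $\nu$ with $s = 4\bar{k}_\eta(0_{2m+1})$, so it too is vacuous; Condition~\eqref{eq:δstability:kernel:cond:contractive} becomes ``$2\bar{k}_\eta(0_{2m+1}) - 2\bar{k}_\eta(z) \le \norm*{z}^2$ whenever $\norm*{z}^2 \ge \nu$''. Thus $\eta \in \Delta^\rho_k$ iff this holds for some $\nu \in [0,\rho]\cap\R$. Since enlarging $\nu$ only discards constraints, for finite $\rho$ the choice $\nu = \rho$ is the least restrictive admissible one and reproduces exactly the stated set; for $\rho = \infty$ the stated set equals $\Phi_k$ because the implication is vacuous (no $z$ satisfies $\norm*{z}^2 \ge \infty$), and this matches $\Delta^\infty_k$ since the finite witness $\nu = 4\bar{k}_\eta(0_{2m+1})$ always works thanks to $2\bar{k}_\eta(0_{2m+1}) - 2\bar{k}_\eta(z) = h_\eta(z,0_{2m+1}) \le 4\bar{k}_\eta(0_{2m+1}) \le \norm*{z}^2$ there. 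The final inclusion then drops out: if $4\bar{k}_\eta(0_{2m+1}) \le \rho$, then for every $z$ with $\norm*{z}^2 \ge \rho$ one has $2\bar{k}_\eta(0_{2m+1}) - 2\bar{k}_\eta(z) = h_\eta(z,0_{2m+1}) \le 4\bar{k}_\eta(0_{2m+1}) \le \rho \le \norm*{z}^2$, so $\eta$ lies in the set characterised above.

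The argument is essentially bookkeeping once the two structural facts are in hand, so I do not expect a genuine obstacle; the only points that need care are tracking the quantifier ``$\exists\,\nu \in [0,\rho]\cap\R$'' through the two regimes $\rho < \infty$ and $\rho = \infty$, and invoking the reproducing-kernel Cauchy--Schwarz inequality to obtain a \emph{uniform} (not merely local) bound on $h_\eta$, which is precisely what trivialises the boundedness Conditions~\eqref{eq:stability:kernel:cond:boundedness} and~\eqref{eq:δstability:kernel:cond:boundedness} and also yields the last inclusion. As a sanity check, note that a stationary kernel is automatically bounded (by $\bar{k}_\eta(0_{2m+1})$), so the cases $\rho = \infty$ are also covered by Proposition~\ref{prop:k:bounded}, giving $\Theta^\infty_k = \Delta^\infty_k = \Phi_k$ in agreement with the formulas.
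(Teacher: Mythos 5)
Your proof is correct and follows essentially the same route as the paper's: you reduce Conditions~\eqref{eq:stability:kernel:cond} and~\eqref{eq:δstability:kernel:cond} to statements about $\bar{k}_\eta$ via the constant diagonal and $h_\eta(a,b)=2\bar{k}_\eta(0_{2m+1})-2\bar{k}_\eta(a-b)$, use the reproducing-kernel Cauchy--Schwarz bound $\norm*{\bar{k}_\eta(z)}\le\bar{k}_\eta(0_{2m+1})$ to dispose of the boundedness conditions and obtain the uniform bound $h_\eta\le 4\bar{k}_\eta(0_{2m+1})$, and exploit monotonicity in $\nu$ exactly as in Appendix~\ref{thm:k:stationary::proof}. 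Your explicit handling of the $\rho=\infty$ case (finite witnesses $\nu=\bar{k}_\eta(0_{2m+1})$ and $\nu=4\bar{k}_\eta(0_{2m+1})$) is a slightly more careful spelling-out of what the paper leaves implicit, but it is the same argument.
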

Theorem~\ref{thm:k:stationary} is proved in Appendix~\ref{thm:k:stationary::proof}.
A first immediate consequence of Theorem~\ref{thm:k:stationary} is that all stationary kernels are always $\infty$-viable and $\infty$-$\delta$viable and, hence, they can be used to learn BIBS stable and $\delta$BIBS stable models without additional restrictions on the hyperparameters.
This is formalized by the following corollary.
\begin{corollary}
	Let $k$ be a stationary kernel structure.
	Then, $\Theta^\infty_k = \Delta^\infty_k = \Phi_k$.
\end{corollary}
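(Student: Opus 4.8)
The plan is to obtain both equalities directly from Theorem~\ref{thm:k:stationary} by specializing $\rho = \infty$, and then to observe that the conditions on $\eta$ appearing on the right-hand sides become vacuous, so that each viability set coincides with the whole parameter set $\Phi_k$.

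First I would treat $\Theta^\infty_k$. Theorem~\ref{thm:k:stationary} gives $\Theta^\infty_k = \{ \eta \in \Phi_k \st \bar{k}_\eta(0_{2m+1}) \le \infty \}$. Since, for every $\eta \in \Phi_k$, $k_\eta$ is by assumption a valid (symmetric, positive semidefinite) kernel function, the diagonal value $\bar{k}_\eta(0_{2m+1}) = k_\eta(a,a)$ (for any $a\in\R^{2m+1}$) is a nonnegative \emph{finite} real number. Hence $\bar{k}_\eta(0_{2m+1}) \le \infty$ holds for every $\eta \in \Phi_k$, and therefore $\Theta^\infty_k = \Phi_k$.

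Next I would treat $\Delta^\infty_k$. The cleanest route is to use the last assertion of Theorem~\ref{thm:k:stationary}, namely $\Delta^\rho_k \supseteq \{ \eta \in \Phi_k \st 4\bar{k}_\eta(0_{2m+1}) \le \rho \}$. Taking $\rho = \infty$ and using again the finiteness of $\bar{k}_\eta(0_{2m+1})$, the set on the right equals $\Phi_k$, so $\Phi_k \subseteq \Delta^\infty_k \subseteq \Phi_k$ and thus $\Delta^\infty_k = \Phi_k$. Alternatively, one may argue directly from the explicit formula for $\Delta^\rho_k$: with $\rho = \infty$ the premise $\norm*{z}^2 \ge \infty$ is satisfied by no $z \in \R^{2m+1}$, so the implication defining membership in $\Delta^\infty_k$ is vacuously true for every $\eta \in \Phi_k$.

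There is essentially no obstacle in this argument; the only point deserving a word of care is the bookkeeping of the symbol $\infty$ in Definitions~\ref{def:ρ-viability} and~\ref{def:ρ-δviability} and in the statement of Theorem~\ref{thm:k:stationary}, which is entirely resolved by the elementary observation that the diagonal of any valid kernel takes finite values.
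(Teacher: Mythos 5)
Your argument is correct and is essentially the paper's: the corollary is stated there as an immediate consequence of Theorem~\ref{thm:k:stationary} with $\rho=\infty$, where the condition $\bar{k}_\eta(0_{2m+1})\le\infty$ is trivially true and the premise $\norm*{z}^2\ge\infty$ (or the inclusion $4\bar{k}_\eta(0_{2m+1})\le\infty$) makes membership in $\Delta^\infty_k$ hold for every $\eta\in\Phi_k$. Your remark on the finiteness of the kernel's diagonal is exactly the bookkeeping the paper leaves implicit.
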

A second relevant consequence of Theorem~\ref{thm:k:stationary} is that no non-trivial stationary kernel is $0$-viable.
Hence, stationary kernels are generally not suitable to learn ISS systems.
\begin{corollary} \label{cor:k:nonstationary_ISS}
	Let $k$ be a stationary kernel structure.
	Then, $\Theta^0_k = \{ \eta \in \Phi_k \st \forall a,b \in \R^{2m+1}, k_\eta(a,b) = 0 \}$, that is, no non-trivial stationary kernel is $0$-viable.
\end{corollary}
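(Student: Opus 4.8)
The plan is to apply Theorem~\ref{thm:k:stationary} with $\rho = 0$ and then unwind what the resulting condition forces on $\bar{k}_\eta$. First I would specialize the formula for $\Theta^\rho_k$: for $\rho = 0$ it reads $\Theta^0_k = \{\eta \in \Phi_k \st \bar{k}_\eta(0_{2m+1}) \le 0\}$. Since $k_\eta$ is a valid (positive semidefinite) kernel, $k_\eta(a,a) = \bar{k}_\eta(0_{2m+1}) \ge 0$ for every $a$, so the inequality $\bar{k}_\eta(0_{2m+1}) \le 0$ is equivalent to $\bar{k}_\eta(0_{2m+1}) = 0$. Hence $\Theta^0_k = \{\eta \in \Phi_k \st \bar{k}_\eta(0_{2m+1}) = 0\}$.

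The next step is to show that $\bar{k}_\eta(0_{2m+1}) = 0$ forces $k_\eta \equiv 0$. This is the heart of the argument, but it is short: positive semidefiniteness of $k_\eta$ means that for any two points $a, b$ the $2\times 2$ Gram matrix $\begin{bsmallmatrix} k_\eta(a,a) & k_\eta(a,b) \\ k_\eta(a,b) & k_\eta(b,b)\end{bsmallmatrix}$ is positive semidefinite. With $k_\eta(a,a) = k_\eta(b,b) = \bar{k}_\eta(0_{2m+1}) = 0$, a nonnegative determinant forces $k_\eta(a,b)^2 \le 0$, hence $k_\eta(a,b) = 0$. Since $a, b$ were arbitrary, $k_\eta(a,b) = 0$ for all $a, b \in \R^{2m+1}$, i.e. $\eta$ gives the trivial kernel. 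Conversely, the trivial kernel clearly satisfies $\bar{k}_\eta(0_{2m+1}) = 0$, so the two descriptions of $\Theta^0_k$ coincide, yielding $\Theta^0_k = \{\eta \in \Phi_k \st \forall a,b \in \R^{2m+1},\ k_\eta(a,b) = 0\}$ as claimed. The final sentence of the corollary — that no non-trivial stationary kernel is $0$-viable — is then just a restatement: $k$ is $0$-viable iff $\Theta^0_k \ne \varnothing$, and the only $\eta$ that can lie in $\Theta^0_k$ are those for which $k_\eta$ vanishes identically.

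There is really no serious obstacle here; the result is essentially a corollary in the literal sense, and the only ingredient beyond Theorem~\ref{thm:k:stationary} is the standard fact that a PSD kernel with vanishing diagonal is identically zero (this is also the content of the citation to \cite[Prop.~2.3]{Saitoh2016a} used in Remark~\ref{rem:iss:equilibrium}). If I wanted to be extra careful, I would note that the stationarity representation $k_\eta(a,b) = \bar{k}_\eta(a-b)$ guarantees the diagonal $k_\eta(a,a)$ is the \emph{constant} $\bar{k}_\eta(0_{2m+1})$, which is what makes the $2\times 2$ Gram argument go through uniformly over all pairs; without stationarity one would only kill $k_\eta$ on the set where its diagonal vanishes. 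I would present the proof in two lines invoking Theorem~\ref{thm:k:stationary} and the PSD-with-zero-diagonal fact, and then close with the one-sentence consequence about $0$-viability.
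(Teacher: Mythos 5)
Your proposal is correct and follows essentially the same route as the paper: specialize Theorem~\ref{thm:k:stationary} to $\rho=0$ and then note that a PSD kernel with identically zero diagonal must vanish everywhere. The paper justifies that last step by the Cauchy--Schwarz bound~\eqref{eq:barkz_le_bark0} already derived in the proof of Theorem~\ref{thm:k:stationary}, whereas you rederive it via the $2\times 2$ Gram-determinant argument, which is the same fact in a slightly more elementary guise.
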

\begin{proof}
	The claim directly follows from Theorem~\ref{thm:k:stationary} since $\bar{k}_\eta(0_{2m+1}) \le 0$ if and only if $k_\eta(a,a)=0$ for all $a\in\R^{2m+1}$, which is true if and only if $k_\eta(a,b)=0$ for all $a,b\in\R^{2m+1}$ (see~\eqref{eq:barkz_le_bark0}).
\end{proof}
Two important stationary kernels are the Gaussian Kernel (which defines a RKHS dense in the space of continuous functions~\cite{Micchelli2006a}) and the Mat\'{e}rn Kernel~\cite{Stein1999a}.
They are analyzed in Propositions~\ref{prop:k:gaussian} and~\ref{prop:k:matern} below (proved in Appendix~\ref{prop:k:gaussian::proof} and Appendix~\ref{prop:k:matern::proof}, respectively).
Proposition~\ref{prop:k:matern} only considers the first degree Matérn Kernel for simplicity, but analogous results can be obtained with larger degrees.
Both kernels are $0$-$\delta$viable, whilst the hyperparameters need to meet certain conditions for $\delta$ISS.
\begin{proposition}[Gaussian Kernel] \label{prop:k:gaussian} %
	Let $k$ be a stationary kernel structure such that $\Phi_k = [0, \infty)^3$ and, for every $z \in \R^{2m+1}$ and $\eta = (\tau, \gamma, \sigma) \in \Phi_k$, $\bar{k}_\eta(z) = \tau \exp ( - \gamma \norm*{z}^2 ) + \sigma$.
	Then, for each $\rho \in [0, \infty]$, we have
	\begin{align*}
		\Theta^\rho_k & = \{ (\tau, \gamma, \sigma) \in \Phi_k \st \tau + \sigma \le \rho \}, \\
		\Delta^\rho_k & =
		\begin{cases}
			\{ (\tau, \gamma, \sigma) \in \Phi_k \st 2 \tau \gamma \le 1 \},
			& \textnormal{if } \rho = 0,\\
			\left\{
			(\tau, \gamma, \sigma) \in \Phi_k \st
			v(\tau, \gamma) \le \rho
			\right\}
			& \textnormal{if } \rho \in (0, \infty),\\
			\Phi_k,
			& \textnormal{if } \rho = \infty,
		\end{cases}
	\end{align*}
	in which $v(\tau, \gamma) = 2\tau + \gamma^{-1} W\left( -2\gamma \tau e^{-2 \gamma \tau} \right)$ and $W$ is the principal branch of the Lambert W function~\cite{Corless1996a}.
	Additionally, $\Delta^\rho_k \supseteq \left[0, \frac{\rho}{2} \right] \times [0, \infty)^2$.
\end{proposition}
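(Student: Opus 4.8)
The plan is to specialise Theorem~\ref{thm:k:stationary} to the stationary representation $\bar{k}_\eta(z)=\tau\exp(-\gamma\norm*{z}^2)+\sigma$ and reduce both characterisations to a single-variable analysis. Since $\bar{k}_\eta(0_{2m+1})=\tau+\sigma$, the first identity in Theorem~\ref{thm:k:stationary} immediately yields the stated form of $\Theta^\rho_k$. For $\Delta^\rho_k$ I would compute $2\bar{k}_\eta(0_{2m+1})-2\bar{k}_\eta(z)=2\tau\bigl(1-e^{-\gamma\norm*{z}^2}\bigr)$, so that, writing $r\coloneq\norm*{z}^2$, the condition in the second identity of Theorem~\ref{thm:k:stationary} becomes: $2\tau(1-e^{-\gamma r})\le r$ for every $r\ge\rho$. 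Setting $g(r)\coloneq r-2\tau(1-e^{-\gamma r})$, this is precisely ``$g\ge 0$ on $[\rho,\infty)$''.

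Next I would study $g$ on $[0,\infty)$. One has $g(0)=0$ and $g'(r)=1-2\tau\gamma e^{-\gamma r}$, which is non-decreasing with $g'(r)\to 1$ as $r\to\infty$; hence the sign of $g'(0)=1-2\tau\gamma$ is decisive. If $2\tau\gamma\le 1$, then $g'\ge 0$ throughout, so $g$ is non-decreasing and $g\ge g(0)=0$ on all of $[0,\infty)$; thus such $\eta$ lie in $\Delta^\rho_k$ for every $\rho$. If instead $2\tau\gamma>1$, then $g$ is strictly decreasing near $0$ and eventually strictly increasing with $g(r)\to\infty$; combined with $g(0)=0$ this forces exactly one further zero $r_\star>0$, with $g<0$ on $(0,r_\star)$ and $g>0$ on $(r_\star,\infty)$, so $\eta\in\Delta^\rho_k$ if and only if $r_\star\le\rho$. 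Specialising, for $\rho=\infty$ the constraint $r\ge\rho$ is vacuous, giving $\Delta^\infty_k=\Phi_k$; for $\rho=0$ both cases collapse to $2\tau\gamma\le 1$, giving the stated $\Delta^0_k$.

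The core computation is the closed form for $r_\star$ when $2\tau\gamma>1$. From $g(r_\star)=0$, with $x\coloneq\gamma r_\star$ and $c\coloneq 2\tau\gamma>1$, I get $x=c(1-e^{-x})$, i.e.\ $(x-c)e^x=-c$, and with $y\coloneq x-c$ this reads $ye^y=-ce^{-c}$. For $c>1$ the argument $-ce^{-c}$ lies in $(-1/e,0)$, so this equation has exactly two real roots, one below $-1$ and one in $(-1,0)$: the former is $y=-c$, corresponding to the spurious root $r=0$, and the latter is, by definition, $W(-ce^{-c})$. To confirm that the nontrivial root equals $W(-ce^{-c})$ I would check $x-c\in(-1,0)$ directly: $x<c$ since $x=c(1-e^{-x})<c$, and $x>c-1$ because $\phi(x)\coloneq c(1-e^{-x})-x$ equals $-\gamma\,g(x/\gamma)$ and is therefore positive precisely on $(0,\gamma r_\star)$, while $\phi(c-1)=1-ce^{1-c}>0$ (as $c\mapsto ce^{1-c}$ is strictly decreasing on $(1,\infty)$ with value $1$ at $c=1$). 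Undoing the substitutions gives $r_\star=2\tau+\gamma^{-1}W(-2\gamma\tau e^{-2\gamma\tau})=v(\tau,\gamma)$. Moreover, for $2\tau\gamma\le 1$ the identity $W\bigl((-t)e^{-t}\bigr)=-t$ valid on $t\in[0,1]$ gives $v(\tau,\gamma)=0$ (with the convention $v(\tau,0)=0$), so the single formula $\Delta^\rho_k=\{\eta:v(\tau,\gamma)\le\rho\}$ also covers the case $2\tau\gamma\le1$, matching the statement for $\rho\in(0,\infty)$.

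Finally, the inclusion $\Delta^\rho_k\supseteq[0,\tfrac{\rho}{2}]\times[0,\infty)^2$ is elementary and reuses $g$: if $2\tau\le\rho$, then for every $r\ge\rho$ one has $2\tau(1-e^{-\gamma r})\le 2\tau\le\rho\le r$, hence $g(r)\ge0$. I expect the only delicate point of the whole argument to be the Lambert-$W$ branch bookkeeping in the third step: one must discard the spurious root $r=0$, recognise that it is the root $y=-c<-1$ (hence not $W(-ce^{-c})$), and verify the strict inequality $x>c-1$ so that the nontrivial root genuinely lands in the range $(-1,0)$ of the principal branch. Everything else is routine monotonicity.
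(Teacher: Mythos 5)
Your proposal is correct and follows essentially the same route as the paper: both reduce to the scalar condition coming from Theorem~\ref{thm:k:stationary}, analyze the sign of the same function (up to sign) via its derivative and the threshold $2\tau\gamma\le 1$, and identify the unique positive zero with $v(\tau,\gamma)$ via the Lambert $W$ substitution. Your explicit branch bookkeeping for $W$ and the direct estimate $2\tau(1-e^{-\gamma r})\le 2\tau\le\rho\le r$ for the final inclusion merely fill in details the paper asserts (it instead invokes $W(z)\le 0$ for $z\le 0$), so there is no substantive difference.
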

\begin{proposition}[Mat\'{e}rn Kernel] \label{prop:k:matern} %
	Let $k$ be a stationary kernel structure such that $\Phi_k = [0, \infty)^3$ and
	$\bar{k}(z) = \tau \big( 1 + \gamma \sqrt{3} \norm*{z} \big) \exp \big(- \gamma \sqrt{3} \norm*{z} \big) + \sigma$,
	for all $z \in \R^{2m+1}$ and $\eta = (\tau, \gamma, \sigma) \in \Phi_k$.
	Then, for each $\rho \in [0, \infty]$, we have
	\begin{align*}
		\Theta^\rho_k &= \{ (\tau, \gamma, \sigma) \in \Phi_k \st \tau + \sigma \le \rho \} \, , \\
		\Delta^0_k &= \{ (\tau, \gamma, \sigma) \in \Phi_k \st 3 \tau \gamma^2 \le 1 \} \, .
	\end{align*}
\end{proposition}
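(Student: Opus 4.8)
The plan is to obtain both identities directly from Theorem~\ref{thm:k:stationary}, exploiting that the Mat\'{e}rn structure is stationary with radial profile $\bar{k}_\eta(z) = \tau\big(1 + \gamma\sqrt{3}\,\norm*{z}\big)\exp\big(-\gamma\sqrt{3}\,\norm*{z}\big) + \sigma$. The characterization of $\Theta^\rho_k$ is immediate: the first identity in Theorem~\ref{thm:k:stationary} reads $\Theta^\rho_k = \{\eta \in \Phi_k \st \bar{k}_\eta(0_{2m+1}) \le \rho\}$, and since $\bar{k}_\eta(0_{2m+1}) = \tau + \sigma$, this is exactly the claimed set.

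For $\Delta^0_k$ I would specialize the second identity of Theorem~\ref{thm:k:stationary} to $\rho = 0$. The guard $\norm*{z}^2 \ge 0$ is then vacuous, so $\eta \in \Delta^0_k$ if and only if $2\bar{k}_\eta(0_{2m+1}) - 2\bar{k}_\eta(z) \le \norm*{z}^2$ for every $z \in \R^{2m+1}$. Substituting the profile makes the additive constant $\sigma$ cancel; writing $r \coloneq \norm*{z}$, which ranges over all of $[0,\infty)$, and $a \coloneq \gamma\sqrt{3}$, the condition becomes $g(r) \le r^2$ for every $r \ge 0$, where $g(r) \coloneq 2\tau\big(1 - (1+ar)e^{-ar}\big)$. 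It then suffices to show that this scalar inequality holds precisely when $\tau a^2 \le 1$, i.e.\ when $3\tau\gamma^2 \le 1$.

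The last step I would carry out by a monotonicity argument on $\phi(r) \coloneq r^2 - g(r)$. One has $\phi(0) = 0$, and a direct computation gives $g'(r) = 2\tau a^2 r e^{-ar}$, hence $\phi'(r) = 2r\big(1 - \tau a^2 e^{-ar}\big)$. If $\tau a^2 \le 1$, then $\tau a^2 e^{-ar} \le 1$ for all $r \ge 0$, so $\phi' \ge 0$ on $[0,\infty)$ and therefore $\phi \ge \phi(0) = 0$, yielding one inclusion. Conversely, if $\tau a^2 > 1$, then $\phi'(r) < 0$ for all sufficiently small $r > 0$ (since $e^{-ar} \to 1$ as $r \to 0^+$), so $\phi$ dips below $0$ near the origin and the inequality fails, giving the reverse inclusion. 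I expect no genuine obstacle here; the only points deserving care are verifying that $\sigma$ drops out of the $\Delta^0_k$ condition and tracking the factor $\sqrt{3}$ so that $\tau a^2 = 3\tau\gamma^2$ appears in the final answer. The analogous characterizations for higher-degree Mat\'{e}rn kernels would follow by the same route with a longer polynomial prefactor.
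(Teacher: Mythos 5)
Your proposal is correct and follows essentially the same route as the paper: both parts are reduced to Theorem~\ref{thm:k:stationary}, and the $\Delta^0_k$ characterization is settled by the same scalar derivative computation (your $\phi'(r)=2r\bigl(1-3\tau\gamma^2 e^{-\gamma\sqrt{3}r}\bigr)$ is exactly the paper's $-g'(\zeta)$), with the same monotonicity argument for sufficiency and the same local-behavior-at-the-origin argument for necessity.
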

Finally, Proposition~\ref{prop:k:pillo} below (proved in Appendix~\ref{prop:k:pillo::proof}) analyzes the kernel proposed in~\cite{Pillonetto2011a}, which is specifically designed for the identification of predictors for NARX systems.
Loosely speaking, this kernel is designed so as older measurements weight less than recent ones in the computation of the predicted value (with the parameter $\xi$ playing the role of a forgetting factor).
Since this is a stationary kernel, according to Corollary~\ref{cor:k:nonstationary_ISS} it is not suitable for learning ISS systems, but it guarantees that the learned model is $\delta$BIBS and BIBS stable without constraining the hyperparameters' selection optimization problem, i.e.~\eqref{eq:hp_sel_gen:classic} and~\eqref{eq:hp_sel_gen:stable} are equivalent.
\begin{proposition} \label{prop:k:pillo} %
	Let $k$ be a stationary kernel structure such that $\Phi_k = [0, \infty)^3 \times \{1,\ldots,m\}$ and, for all $z = (y_1,\ldots,y_m, u_1,\ldots,u_{m+1}) \in \R^{2m+1}$ and $\eta = (\tau, \gamma, \sigma,p) \in \Phi_k$,
	\begin{equation*}
		\bar{k}_\eta(z) = \tau \sum_{t=0}^{m-p} \exp( -\xi t -\gamma \norm*{z_t}^2 ),
	\end{equation*}
	in which $z_t = (y_{t+1},\ldots,y_{t+p}, u_{t+1},\ldots,u_{t+p}) \in \R^{2p}$.
	Then
	\begin{align*}
		\Theta^0_k &= \{ (\tau, \gamma, \xi, p) \in \Phi_k \st \tau = 0 \} \, , \\
		\Delta^0_k &= \{(\tau, \gamma, \xi, p) \in \Phi_k \st 2 \gamma\tau \pi(\xi, p) \le 1 \} \, ,
	\end{align*}
	where $\pi(\xi, p)\coloneq	m - p + 1$ if $\xi=0$ and  $\pi(\xi, p)\coloneq\frac{ 1-e^{-(m-p+1)\xi} }{ 1-e^{-\xi} }$ if $\xi>0$.
\end{proposition}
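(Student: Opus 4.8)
The kernel here is stationary with $\bar{k}_\eta$ given in closed form, so the plan is to reduce everything to Theorem~\ref{thm:k:stationary}. First I would verify that every $\eta\in\Phi_k$ yields a valid kernel: each summand $\exp(-\xi t-\gamma\norm*{a_t-b_t}^2)$ is a Gaussian kernel on the coordinate block picked out by the window $\{t+1,\dots,t+p\}$, and a nonnegative combination of positive semidefinite kernels is positive semidefinite. Evaluating $\bar{k}_\eta$ at the origin gives $\bar{k}_\eta(0_{2m+1})=\tau\sum_{t=0}^{m-p}e^{-\xi t}=\tau\pi(\xi,p)$, the two cases in the definition of $\pi$ being the geometric-sum formula for $\xi>0$ and its $\xi\to0$ limit. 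By the first identity in Theorem~\ref{thm:k:stationary}, $\Theta^0_k=\{\eta\in\Phi_k\st\bar{k}_\eta(0_{2m+1})\le0\}$; since $\pi(\xi,p)>0$ and $\tau\ge0$, this set is exactly $\{\tau=0\}$, which is the first claim.

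For $\Delta^0_k$, Theorem~\ref{thm:k:stationary} with $\rho=0$ characterises $\eta\in\Delta^0_k$ by $2\bar{k}_\eta(0_{2m+1})-2\bar{k}_\eta(z)\le\norm*{z}^2$ for all $z\in\R^{2m+1}$, i.e.\ $2\tau\sum_{t=0}^{m-p}e^{-\xi t}\bigl(1-e^{-\gamma\norm*{z_t}^2}\bigr)\le\norm*{z}^2$. For the inclusion $\{2\gamma\tau\pi(\xi,p)\le1\}\subseteq\Delta^0_k$ I would use $1-e^{-x}\le x$ to bound the left-hand side by $2\tau\gamma\sum_{t=0}^{m-p}e^{-\xi t}\norm*{z_t}^2$ and then expand coordinatewise: a fixed entry of $z$ contributes to $\norm*{z_t}^2$ precisely for the indices $t$ whose window contains it, and the total weight $\sum e^{-\xi t}$ over that set of windows is at most $\pi(\xi,p)$ since those $t$ all lie in $\{0,\dots,m-p\}$; hence $\sum_t e^{-\xi t}\norm*{z_t}^2\le\pi(\xi,p)\norm*{z}^2$ and the inequality follows. (The cruder sufficient condition $4\bar{k}_\eta(0_{2m+1})\le\rho$ from Theorem~\ref{thm:k:stationary} gives only $\tau=0$ at $\rho=0$, so the sharp characterisation must be used.)

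The reverse inclusion is the crux. Here the plan is to test the viability inequality on vectors $z$ concentrated on a single (or sparse) coordinate of magnitude $\epsilon$ and let $\epsilon\to0$: then $1-e^{-\gamma\norm*{z_t}^2}$ is asymptotically $\gamma\norm*{z_t}^2$, so the inequality reduces in the limit to $2\gamma\tau\,w\le1$ for each coordinate, where $w$ is the $e^{-\xi t}$-weighted number of windows containing that coordinate; the strongest such constraint is obtained at the coordinate maximising $w$, and the two directions then match once one shows that this maximal weighted window-count equals $\pi(\xi,p)$. I expect this last combinatorial step to be the main obstacle: because consecutive windows $z_t$ overlap in $p-1$ of their entries, one must track carefully which entry of $z$ sits in the largest weighted collection of windows and confirm that the resulting sum is $\pi(\xi,p)$ rather than a shorter partial geometric sum; everything else is the $1-e^{-x}\le x$ estimate together with the geometric-sum evaluation already used for $\Theta^0_k$. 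One may additionally note that each slice $k_\eta^z$ is uniformly continuous, so Theorem~\ref{thm:δBIBS-δISS} upgrades any feasible $\eta$ to $\delta$ISS, although this is not part of the statement.
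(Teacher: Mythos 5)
The parts you completed are exactly the paper's own argument. For $\Theta^0_k$ the paper evaluates $\bar{k}_\eta(0_{2m+1})=\tau\pi(\xi,p)$ and invokes Theorem~\ref{thm:k:stationary}, as you do. For $\Delta^0_k$ it rewrites $\bar{k}_\eta(z)=2\gamma\tau\sum_{t=0}^{m-p}e^{-\xi t}\,w_{\tilde{\eta}}(z_t)$ with $w_{\tilde{\eta}}$ the Gaussian kernel of Proposition~\ref{prop:k:gaussian} with parameter $\tilde{\eta}=((2\gamma)^{-1},\gamma,0)\in\Delta^0_w$, and uses $2w_{\tilde{\eta}}(0_{2m+1})-2w_{\tilde{\eta}}(z_t)\le\norm*{z_t}^2\le\norm*{z}^2$ before summing; this is precisely your $1-e^{-x}\le x$ estimate, except that the paper uses the cruder bound $\norm*{z_t}^2\le\norm*{z}^2$ where you use the sharper per-coordinate window count. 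Either way, what is obtained is the inclusion $\{2\gamma\tau\pi(\xi,p)\le1\}\subseteq\Delta^0_k$ --- and that is \emph{all} the paper proves: the reverse inclusion, which you correctly single out as the crux, is not addressed in the paper's proof at all.

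Your hesitation about the final combinatorial step is well founded, because that step cannot be carried out: the $e^{-\xi t}$-weighted number of windows containing a fixed coordinate is maximized at $y_p$ (equivalently $u_p$) and equals $\sum_{t=0}^{\min(p-1,\,m-p)}e^{-\xi t}$, which coincides with $\pi(\xi,p)$ only when $p\ge(m+1)/2$. Combining your two arguments (the coordinate-count bound for sufficiency, and the test vectors $z=\epsilon e_j$ with $\epsilon\to0$ for necessity) in fact yields the exact characterization $\Delta^0_k=\{\eta\in\Phi_k \st 2\gamma\tau\sum_{t=0}^{\min(p-1,\,m-p)}e^{-\xi t}\le1\}$, so the stated equality fails whenever $p<(m+1)/2$ and $\tau\gamma>0$. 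A concrete instance is $p=1$, $m\ge2$: the blocks $z_t$ are pairwise disjoint, and the condition reduces blockwise to $2\gamma\tau\le1$, a strictly larger set than $\{2\gamma\tau\pi(\xi,1)\le1\}$. In summary, only the ``$\supseteq$'' half of the displayed identity for $\Delta^0_k$ is true in general (the half that matters for the stability guarantee, since it merely makes the hyperparameter constraint conservative); your proposal proves exactly as much as the paper does, and the missing reverse inclusion is a gap in the statement itself rather than a defect of your strategy.
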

We emphasize that Proposition~\ref{prop:k:pillo} validates the rationale used in~\cite{Pillonetto2011a} behind the construction of the kernel.
Firstly, notice that $\pi(\cdot, p)$ is a continuous non-increasing function in $[0, \infty)$ whose image is $(1, m - p + 1]$.
Therefore, a necessary condition for $0$-$\delta$viability is that $2 \gamma \tau \le 1$.
If this is the case, we need to select $\xi$ large enough to guarantee that $2 \gamma\tau \pi(\xi, p) \le 1$.
Thus, in qualitative terms, the predictor must lead to a system forgetting past initial conditions sufficiently fast.

\subsection{Composition of viable kernels}
\label{sec:kernels:composition}

It is well known that the sum of symmetric positive semidefinite kernels is a positive semidefinite kernel~\cite[Prop.~13.1]{Scholkopf2018a}.
Proposition~\ref{prop:k:sum} below (proved in Appendix~\ref{prop:k:sum::proof}) gives sufficient conditions under which the linear combination of viable kernels is again viable.
\begin{proposition}[Sum of kernels] \label{prop:k:sum} %
	Let $(w_1)_{i=1}^q$ be $q \in \N$ kernel structures and $k$ be the kernel structure such that $\Phi_k = (0,\infty)^q \times \Phi_{w_1} \times \cdots \times \Phi_{w_q}$ and
	\begin{align*}
		\forall a,b \in \R^{2m+1}, & & k_\eta (a, b) &\coloneq \sum_{i=1}^q \tau_i w_{i,\eta_i} (a, b) \, ,
	\end{align*}
	for all $\eta = ( \tau_1, \cdots, \tau_q, \eta_1 \cdots, \eta_q )\in \Phi_k$.
	Then, for each $\rho \in [0, \infty]$, the following inclusions hold
	\begin{align*}
		\Theta_k^\rho
		&
		\supseteq
		\left\{
			\big( (\tau_i)^q_{i=1}, (\eta_i)^q_{i=1}\big) \in (0, \infty)^q\times \Theta^\rho_w \st \sum^q_{i=1} \tau_i \le 1
		\right\} \, ,
		\\
		\Delta_k^\rho
		&
		\supseteq
		\left\{
			\big( (\tau_i)^q_{i=1}, (\eta_i)^q_{i=1}\big) \in (0, \infty)^q\times \Delta^\rho_w \st \sum^q_{i=1} \tau_i \le 1
		\right\} \, ,
	\end{align*}
	where $\Theta^\rho_w = \Theta^\rho_{w_1} \times \cdots \times \Theta^\rho_{w_q}$ and $\Delta^\rho_w = \Delta^\rho_{w_1} \times \cdots \times \Delta^\rho_{w_q}$.
\end{proposition}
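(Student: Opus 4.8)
The plan is to exploit that both the composite kernel and its squared kernel metric are $\tau$-weighted sums of the corresponding quantities of the components, and then to combine the viability certificates of the components term by term. First I would record the two identities
\begin{align*}
	k_\eta(a,b) &= \sum_{i=1}^q \tau_i\, w_{i,\eta_i}(a,b), &
	h_\eta(a,b) &= \sum_{i=1}^q \tau_i\, h_{i,\eta_i}(a,b),
\end{align*}
where $h_{i,\eta_i}(a,b) \coloneq w_{i,\eta_i}(a,a) - 2 w_{i,\eta_i}(a,b) + w_{i,\eta_i}(b,b)$ is the kernel metric of $w_{i,\eta_i}$; the second identity follows from the first by linearity. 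Since the $\Theta_k^\rho$ and $\Delta_k^\rho$ inclusions are proved along identical lines, with $k_\eta(a,a)$ and $\norm*{a}^2$ replaced by $h_\eta(a,b)$ and $\norm*{a-b}^2$, I would carry out the argument once.

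Fix $\rho \in [0,\infty]$, weights $\tau_i \in (0,\infty)$ with $\sum_{i=1}^q \tau_i \le 1$, and parameters $\eta_i \in \Theta_{w_i}^\rho$ (resp. $\eta_i \in \Delta_{w_i}^\rho$). By Definition~\ref{def:ρ-viability} (resp. Definition~\ref{def:ρ-δviability}), for each $i$ there exist $\nu_i \in [0,\rho]\cap\R$ and $s_i \in [0,\infty)$ witnessing~\eqref{eq:stability:kernel:cond} for $w_{i,\eta_i}$ (resp.~\eqref{eq:δstability:kernel:cond} for $h_{i,\eta_i}$). Set $\nu \coloneq \max_{1\le i\le q} \nu_i$, which is finite and still belongs to $[0,\rho]$. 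The claim is that this $\nu$, together with a suitable $s$, witnesses viability of $k_\eta$.

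I would then split on the size of the argument. If $\norm*{a}^2 \ge \nu$ (resp. $\norm*{a-b}^2 \ge \nu$), then $\norm*{a}^2 \ge \nu_i$ for every $i$, so each summand obeys the non-expansiveness bound $w_{i,\eta_i}(a,a) \le \norm*{a}^2$; multiplying by $\tau_i>0$ and summing yields $k_\eta(a,a) \le \big(\sum_{i=1}^q \tau_i\big)\norm*{a}^2 \le \norm*{a}^2$, which is~\eqref{eq:stability:kernel:cond:contractive}. If instead $\norm*{a}^2 < \nu$, the components need not lie in the same regime: for indices with $\nu_i \le \norm*{a}^2$ one has $w_{i,\eta_i}(a,a) \le \norm*{a}^2 < \nu$, while for the remaining indices $w_{i,\eta_i}(a,a) \le s_i$; in either case $w_{i,\eta_i}(a,a) \le \max(\nu, s_i)$, hence $k_\eta(a,a) \le \sum_{i=1}^q \tau_i \max(\nu,s_i) < \infty$, and taking $s \coloneq \sum_{i=1}^q \tau_i \max(\nu,s_i)$ establishes~\eqref{eq:stability:kernel:cond:boundedness}. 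This proves $\eta \in \Theta_k^\rho$; the inclusion for $\Delta_k^\rho$ is verbatim after substituting $h_\eta$, $h_{i,\eta_i}$ and $\norm*{a-b}^2$.

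The only step requiring a little care — and the one I would flag as the main (minor) obstacle — is the boundedness case $\norm*{a}^2 < \nu$: because $\nu$ is the \emph{maximum} of the $\nu_i$, a fixed argument may force some summands into their contractive regime and others into their bounded regime simultaneously, so one must dominate every summand by the common constant $\max(\nu,s_i)$ rather than by a single uniform bound. Everything else is elementary bookkeeping, and neither positive semidefiniteness nor continuity of the individual kernels is used beyond membership of $\eta_i$ in the component viability sets.
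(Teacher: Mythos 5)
Your proof is correct and follows essentially the same route as the paper's: take $\nu=\max_i\nu_i$, use $\sum_{i}\tau_i\le 1$ in the contractive regime, handle the mixed regimes in the bounded case by dominating each summand by a common constant, and transfer the argument to $\Delta_k^\rho$ via the linearity $h_\eta=\sum_i\tau_i h_{i,\eta_i}$. The only (immaterial) difference is your choice $s=\sum_i\tau_i\max(\nu,s_i)$ versus the paper's $s=\max_i\max\{\nu_i,s_i\}$.
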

As implied by Corollary~\ref{cor:k:nonstationary_ISS}, stationary kernels are not suitable for learning ISS systems.
To address this problem, Proposition~\ref{prop:k:prod_stationary} below (proved in Appendix~\ref{prop:k:prod_stationary::proof}) provides a way to modify a stationary kernel in order to obtain a non-trivial $0$-viable kernel structure, which is therefore compatible with the conditions for ISS.
\begin{proposition} \label{prop:k:prod_stationary} %
	Let $\ell$ and $w$ be kernel structures such that $w$ is stationary.
	Let $k$ be a kernel structure such that $\Phi_k = \Phi_\ell \times \Phi_w$ and $k_\eta(a,b) = \ell_{\eta_1}(a,b) w_{\eta_2}(a,b)$ for all $a,b \in \R^{2m+1}$ and all $\eta = (\eta_1, \eta_2) \in \Phi_k$.
	Then, for each $\rho \in [0, \infty]$, $\Theta_k^\rho \supseteq \left\{ (\eta_{1},\eta_2)\in\Theta_{\ell}^{\rho}\times\Phi_{w} \st \bar{w}_{\eta_{2}}(0_{2m+1}) \le 1 \right\}$.
\end{proposition}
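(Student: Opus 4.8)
The plan is to show directly that every $\eta=(\eta_1,\eta_2)$ belonging to the right-hand side of the claimed inclusion lies in $\Theta_k^\rho$, by verifying Condition~\eqref{eq:stability:kernel:cond} for $k_\eta$ with exactly the same witnesses $\nu$ and $s$ that certify $\eta_1\in\Theta_\ell^\rho$ according to Definition~\ref{def:ρ-viability}. So, fix such an $\eta$.

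The first step is to compute the diagonal of $k_\eta$. Setting $b=a$ in the product formula gives $k_\eta(a,a)=\ell_{\eta_1}(a,a)\,w_{\eta_2}(a,a)$, and since $w$ is stationary, $w_{\eta_2}(a,a)=\bar{w}_{\eta_2}(a-a)=\bar{w}_{\eta_2}(0_{2m+1})$ is a constant independent of $a$. Hence $k_\eta(a,a)=\bar{w}_{\eta_2}(0_{2m+1})\,\ell_{\eta_1}(a,a)$ for all $a\in\R^{2m+1}$. Because $\ell_{\eta_1}$ and $w_{\eta_2}$ are positive semidefinite, $\ell_{\eta_1}(a,a)\ge 0$ and $\bar{w}_{\eta_2}(0_{2m+1})=w_{\eta_2}(a,a)\ge 0$; combining this with the hypothesis $\bar{w}_{\eta_2}(0_{2m+1})\le 1$ yields the pointwise domination $k_\eta(a,a)\le\ell_{\eta_1}(a,a)$ for every $a$. (Validity of $k_\eta$ as a kernel function, i.e.\ that $\eta\in\Phi_k$, is already built into the hypothesis $\Phi_k=\Phi_\ell\times\Phi_w$, which is consistent with the Schur product theorem~\cite[Prop.~13.1]{Scholkopf2018a}, so nothing further is needed there.)

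The second step is to transfer the $\rho$-viability conditions. Since $\eta_1\in\Theta_\ell^\rho$, Definition~\ref{def:ρ-viability} provides $\nu\in[0,\rho]\cap\R$ and $s\in[0,\infty)$ for which~\eqref{eq:stability:kernel:cond} holds for $\ell_{\eta_1}$. For $k_\eta$ I then chain the inequalities from the first step: if $\norm*{a}^2\ge\nu$, then $k_\eta(a,a)\le\ell_{\eta_1}(a,a)\le\norm*{a}^2$, which is~\eqref{eq:stability:kernel:cond:contractive}; and if $\norm*{a}^2<\nu$, then $k_\eta(a,a)\le\ell_{\eta_1}(a,a)\le s$, which is~\eqref{eq:stability:kernel:cond:boundedness}. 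Thus the same pair $(\nu,s)$ certifies $\eta\in\Theta_k^\rho$, and since $\eta$ was an arbitrary element of the claimed set, the inclusion follows. There is no serious obstacle here; the only points that require a moment's care are that the diagonal of a stationary kernel is genuinely constant — so the product diagonal is a scalar multiple of $\ell_{\eta_1}$'s diagonal rather than something more intricate — and that nonnegativity of the diagonals is precisely what makes $\bar{w}_{\eta_2}(0_{2m+1})\le 1$ push in the useful direction, namely toward $k_\eta(a,a)\le\ell_{\eta_1}(a,a)$ rather than the reverse. No matching lower bound or $\Delta_k^\rho$ analogue is asserted, so nothing more needs to be addressed.
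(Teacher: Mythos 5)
Your proof is correct and follows essentially the same route as the paper: exploit that the diagonal of the stationary factor is the constant $\bar{w}_{\eta_2}(0_{2m+1})\le 1$, so $k_\eta(a,a)\le\ell_{\eta_1}(a,a)$, and then reuse the witnesses $\nu,s$ certifying $\eta_1\in\Theta_\ell^\rho$ (the paper takes $s_1\bar{w}_{\eta_2}(0_{2m+1})$ instead of $s_1$ for the bound, an immaterial difference).
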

An example of application of Proposition~\ref{prop:k:prod_stationary} can be easily constructed by taking $\ell$ as a degenerate kernel, and $w$ as one of the stationary kernels analyzed in Section~\ref{sec:kernels:stationary}.
In particular, let $p \in \N$ and $\Gamma : \mathbb{R}^{2m+1} \to \R^p$ be such that $\norm*{\Gamma(a)}^2 \le \norm*{a}^2$, for all $a \in \R^{2n+1}$.
Then, the kernel
\begin{equation} \label{eq:k:fea_gaussian}
	k_\eta(a,b) \coloneq \Gamma(a)^\top \Gamma(b) ( \tau \exp(- \gamma \norm*{a - b}^2) + \sigma)
\end{equation}
is $0$-viable with $\Theta^\rho_k \supseteq \{ (\tau, \gamma, \sigma) \in \Phi_k \st \tau + \sigma \le 1 \}$, where we have used the properties of the Gaussian Kernel analyzed in Proposition~\ref{prop:k:gaussian}.

\section{Numerical Experiments}
\label{sec:sim}

This section provides some numerical experiments validating the theoretical findings.
Section~\ref{sec:sim:toy} presents a Monte Carlo analysis on two models that satisfy the assumptions of Proposition~\ref{prop:stability:f} and~\ref{prop:δstability:f}, respectively.
This experiment allows us to validate the proposed method in the case where the dataset is generated by a system having the same structure of the identified one.
Section~\ref{sec:sim:hodgkin}, instead, reports a more realistic application concerning the identification of the potassium ion gate's dynamics in the \emph{Hodgkin-Huxley}'s neuron model~\cite{Hodgkin1952a}.

\subsection{Learning stable systems}
\label{sec:sim:toy}

We consider two discrete-time models $\mathsf{A}$ and $\mathsf{B}$ with input $u$ and output $y$ that obey the two difference equations
\begin{align*}
	\mathsf{A}: \ y_t & = 0.2 \norm*{p_t} \sqrt{\sin(\norm*{p_t})+1},
	&
	\mathsf{B}: \ y_t & = 0.2 \sin(\norm*{p_t})^2
\end{align*}
where $p_t\coloneq(y_{t-2},y_{t-1},u_{t-2},u_{t-1})$.
Using Propositions~\ref{prop:stability:f} and~\ref{prop:δstability:f}, we can show that System $\mathsf{A}$ is ISS and $\mathsf{B}$ is $\delta$ISS.
For the identification of model $i \in \{\mathsf{A}, \mathsf{B}\}$ we employ a dataset $\mathcal{D}_i$ containing $200$ samples of the input and output measurements.
In particular, for every $t \in \{1, \ldots, 200\}$, the pairs $(\bar{u}_t, \bar{y}_t) \in \mathcal{D}_i$ are obtained by sampling the random variables
\begin{align*}
	\bar{u}_t & = u_{t+2} \, ,
	&
	\bar{y}_t & = y_{t+2} + w^i_{t+2} \, ,
\end{align*}
where $w^{\mathsf{A}}_t \sim \mathcal{N}_0(0.05)$, $w^{\mathsf{B}}_t \sim \mathcal{N}_0(0.02)$ ($\mathcal{N}_0(S)$ denotes the $0$-mean Normal distribution with covariance matrix $S$); the signal-to-noise ratio is close to $10$ in both cases.
Furthermore, $y_0 \sim \mathcal{N}_0(1)$, $y_1 \sim \mathcal{N}_0(1)$, and the input satisfies $u_t \sim \mathcal{N}_0(1)$ for every $t \in \N$.
All the introduced random variables are mutually independent.
For every model $i \in \{\mathsf{A}, \mathsf{B}\}$, the performances are evaluated on a validation dataset $\mathcal{D}^v_i$ sampled from the same distribution of the dataset $\mathcal{D}_i$.

To better validate the model, we consider both the performance in $1$-step-ahead prediction and in simulation.
In particular, the $1$-step-ahead prediction sequence $\hat{y}^{\mathrm{pre}}$ and the simulated output sequence $\hat{y}$ are as defined in Section~\ref{sec:problem} using the inputs of the validation dataset $\mathcal{D}^v_i$.
Both sequences are equal to the output of the validation datasets for the first $m$ samples to guarantee a sensible initial condition for the simulation.
Prediction and simulation are evaluated in terms of their adherence to the output sequence $\bar{y}_t$.
In particular, we define the performance indexes
$q_{\mathrm{pre}} \coloneq \frac{1}{n-m} \sum_{t=m+1}^n \norm*{\bar{y}_t - \hat{y}^{\mathrm{pre}}_t}$
and
$q_{\mathrm{sim}} \coloneq \frac{1}{n-m} \sum_{t=m+1}^n \norm*{\bar{y}_t - \hat{y}_t}$.
The analysis is carried out using Monte Carlo simulations with $501$ runs.
For the learning procedure, we select $m=2$, $\iota=10^{-10}$, $\chi = 0.99$, and the hyperparameters are computed using the EB method, suitably modified to account for the stability constraint for the proposed method.
After selecting the hyperparameters, we employ the procedure presented in~\cite{Scandella2021a} to reduce the computational complexity of the estimated model by enforcing sparsity of the solution.

Since in the context of this validation we do know that $\mathsf{A}$ is ISS, the developed theory suggests using kernel~\eqref{eq:k:fea_gaussian} with $\Gamma(a) = a$ for all $a \in \R^{2m+1}$.
We thus consider the two predictors:
\begin{enumerate}[label={\textbf{A\alph*}.}, ref={\textbf{A\alph*}}]
	\item \label{en:sim:A:sta}
	$f$ obtained by solving~\eqref{eq:opt_c:classic}-\eqref{eq:hp_sel_gen:classic},
	\item \label{en:sim:A:iss}
	$f$ obtained by solving~\eqref{eq:hp_sel_gen:stable}-\eqref{eq:opt_c:stable} with $\Omega_k = \Theta_k^0$.
\end{enumerate}
Therefore,~\ref{en:sim:A:iss} guarantees that the estimated model is ISS while~\ref{en:sim:A:sta} does not.
As Figure~\ref{fig:perf_A} shows, imposing the right stability notion, ISS, yields similar performance to the standard predictor in both prediction and simulation.
However, the models estimated with~\ref{en:sim:A:iss} are ISS as proven in Theorem~\ref{thm:BIBS-ISS}, while~\ref{en:sim:A:sta} gives no guarantees of ISS.
Similar conclusions can also be obtained for model $\mathsf{B}$.
Since $\mathsf{B}$ is $\delta$ISS, the learning procedure is carried out using the Gaussian Kernel analyzed in Proposition~\ref{prop:k:gaussian}, and we consider two predictors:
\begin{enumerate}[label={\textbf{B\alph*}.}, ref={\textbf{B\alph*}}]
	\item \label{en:sim:B:sta}
	$f$ obtained by solving~\eqref{eq:opt_c:classic}-\eqref{eq:hp_sel_gen:classic},
	\item \label{en:sim:B:δiss}
	$f$ obtained by solving~\eqref{eq:hp_sel_gen:stable}-\eqref{eq:opt_c:stable} with $\Omega_k = \Delta_k^0$.
\end{enumerate}
In particular,~\ref{en:sim:B:δiss} guarantees that the estimated model is $\delta$ISS while~\ref{en:sim:B:sta} does not.
As shown in Figure~\ref{fig:perf_B}, we can draw the same observations as in the previously analyzed case.

These two examples show that, when the underlying system can be modeled by $\Sigma(f)$ for some predictor $f$ that satisfies the assumptions of the Propositions~\ref{prop:stability:f} and~\ref{prop:δstability:f}, the proposed method is able to guarantee that the estimated models have a desired stability property without significantly deteriorate the prediction and simulation capabilities.

\subsection{Potassium channel of an excitable cell}
\label{sec:sim:hodgkin}

We consider the $\mathrm{K^+}$ channel dynamics of the Hodgkin-Huxley's neuron model, which is given by the following differential equation~\cite[Eq.~(7,26)]{Hodgkin1952a}
\begin{equation*}
	\mathsf{H} :
	\left\{
		\begin{aligned}
			\dot{\kappa} & = \frac{(V+10) (1-\kappa)}{100 \left(e^{\frac{V}{10}+1}-1\right)}-\frac{e^{\frac{V}{80}}}{8} \kappa\\
			I & = 36 (V-12) \kappa^4
		\end{aligned}
	\right.
\end{equation*}
where $V$ is the input (the neuron's membrane potential) and $I$ is the output (the $\mathrm{K^+}$ current flowing across the membrane).
The objective is to learn a discrete-time predictor of the output $I$.
For the learning procedure, we consider a dataset $\mathcal{D}_{\mathsf{H}}$ generated by sampling (with sampling time $0.1\mathrm{s}$), for each $t \in \{1, \ldots, n \}$, the random signals
\begin{align*}
	\bar{u}_t & = V(49.9 + 0.1 t), &
	\bar{y}_t & = r(49.9 + 0.1 t, u, \kappa_0) \, ,
\end{align*}
where $n=201$, $\kappa_0 \sim \mathcal{N}_0 (1)$, $r(h, u, \kappa_0)$ is the forced output of system $\mathsf{H}$ at time $h$ given the initial condition $\kappa_0$, and $V(t)$ is chosen as the random variable
\begin{equation*}
	V(t) = \sum_{i=1}^{50} A_i \sin(2 \pi \nu_i t + \varphi_i),
\end{equation*}
where, for every $i \in \{1, \ldots, 50 \}$, $A_i \sim \mathcal{U}( 0.1, 0.5 )$, $\nu_i \sim \mathcal{U}( 0, 1 )$, $\varphi_i \sim \mathcal{U}( 0, 2\pi )$ ($\mathcal{U}(a,b)$ denotes the uniform distribution on $[a,b] \subseteq \R$).
Hence, $\bar{u}$ and $\bar{y}$ are measurements of $V$ and $I$, respectively.
The random variables $\{\kappa_0\} \cup \{A_i, \nu_i, \phi_i\}_{i=1}^{50}$ are all mutually independent.
For this experiment, we select $m=2$, $\iota=10^{-10}$, $\chi = 0.99$, and the hyperparameters are computed using the EB method modified to account for the stability constraint for the proposed method.
After selecting the hyperparameters, we employ the procedure presented in~\cite{Scandella2021a} to reduce the computational complexity of the estimated model by enforcing sparsity of the solution.
The identification algorithm is carried out using the Gaussian Kernel (described in Proposition~\ref{prop:k:gaussian}), and we consider three predictors:
\begin{enumerate}[label={\textbf{H\alph*}.}, ref={\textbf{H\alph*}}]
	\item \label{en:sim:H:sta}
	$f$ obtained by solving~\eqref{eq:opt_c:classic}-\eqref{eq:hp_sel_gen:classic},
	\item \label{en:sim:H:δbibs}
	$f$ obtained by solving~\eqref{eq:hp_sel_gen:stable}-\eqref{eq:opt_c:stable} with $\Omega_k = \Delta^\infty_k$,
	\item \label{en:sim:H:δiss}
	$f$ obtained by solving~\eqref{eq:hp_sel_gen:stable}-\eqref{eq:opt_c:stable} with $\Omega_k = \Delta^0_k$.
\end{enumerate}
The obtained predictors are evaluated using a validation dataset sampled from the same distribution from the dataset $\mathcal{D}_{\mathsf{H}}$ with $n = 5001$.
The analysis is carried out using Monte Carlo simulations with $501$ runs.
The performance in prediction and simulation of the estimated models are reported in Figures~\ref{fig:perf_hodgkin_pre} and~\ref{fig:perf_hodgkin_sim}, respectively.
We notice that~\ref{en:sim:H:sta} and~\ref{en:sim:H:δbibs} have similar prediction performance.
As explained in Remark~\ref{rem:iss:BIBS}, since the Gaussian kernel is bounded, the two methods generate BIBS stable models.
However,~\ref{en:sim:H:δbibs} also guarantees that the system is $\delta$BIBS and that it has the wAG and w$\delta$AG properties without performance deterioration.
Instead,~\ref{en:sim:H:δiss} causes a small decrease of predictive performance.
However, the estimated models are guaranteed to be $\delta$ISS.
Thus, as explained in Section~\ref{sec:problem}, the simulation error remains bounded by a class $\mathcal{K}$ function of the infinity norm of the past prediction error (see Definition~\ref{def:δISS}).
The beneficial effect of this property can be seen in the simulation performance shown in Figure~\ref{fig:perf_hodgkin_sim} where we can notice that the simulation error of~\ref{en:sim:H:δiss} remains of the same order of magnitude of the prediction error.
Instead,~\ref{en:sim:H:sta} presents a significantly larger simulation error that renders the models unsuitable for simulation purposes.
Similar conclusions can also be drawn from Figure~\ref{fig:ts_hodgkin} where it is possible to see that the prediction error is small for both models, but the simulation error of~\ref{en:sim:H:sta} is significantly larger in magnitude.

Finally, case~\ref{en:sim:H:δbibs} shows that only guaranteeing the weaker stability property of $\delta$BIBS with w$\delta$AG allows for better simulation performance without significantly impacting the prediction capability of the estimated model.
Thus, it provides a middle ground between the lack of guarantees of~\ref{en:sim:H:sta} and the strong ones of~\ref{en:sim:H:δiss}.

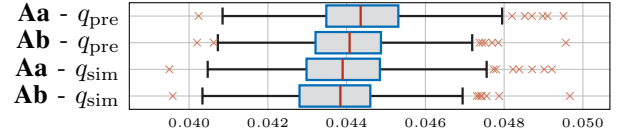
\begin{figure}[t]
	\centering
	\begin{tikzpicture}
		\begin{axis}[
			width=0.9\linewidth,
			height=3cm,
			ymin=0.5,
			ymax=4.5,
			boxplot/draw direction=x,
			xmajorgrids,
			ymajorgrids,
			xtick style = {draw=none},
			xticklabel style={
				font=\tiny,
				/pgf/number format/fixed,
				/pgf/number format/zerofill,
				/pgf/number format/precision=3,
			},
			scaled x ticks=false,
			ytick style = {draw=none},
			ytick={1,...,4},
			yticklabels={%
				\ref{en:sim:A:iss} - $q_{\mathrm{sim}}$,
				\ref{en:sim:A:sta} - $q_{\mathrm{sim}}$,
				\ref{en:sim:A:iss} - $q_{\mathrm{pre}}$,
				\ref{en:sim:A:sta} - $q_{\mathrm{pre}}$,
			},
		]
			\addplot+ [my boxplot] table [y=iss] {latex_data/bp_narx_iss_sim};
			\addplot+ [my boxplot] table [y=sta] {latex_data/bp_narx_iss_sim};
			\addplot+ [my boxplot] table [y=iss] {latex_data/bp_narx_iss_pre};
			\addplot+ [my boxplot] table [y=sta] {latex_data/bp_narx_iss_pre};
		\end{axis}
	\end{tikzpicture}
	\caption{Box plots of the performance indexes for system $\mathsf{A}$.}
	\label{fig:perf_A}
\end{figure}
\begin{figure}[t]
	\centering
	\begin{tikzpicture}
		\begin{axis}[
			width=0.9\linewidth,
			height=3cm,
			ymin=0.5,
			ymax=4.5,
			boxplot/draw direction=x,
			xmajorgrids,
			ymajorgrids,
			xtick style = {draw=none},
			xticklabel style={
				font=\tiny,
				/pgf/number format/fixed,
				/pgf/number format/zerofill,
				/pgf/number format/precision=3,
			},
			scaled x ticks=false,
			ytick style = {draw=none},
			ytick={1,...,4},
			yticklabels={%
				\ref{en:sim:B:δiss} - $q_{\mathrm{sim}}$,
				\ref{en:sim:B:sta} - $q_{\mathrm{sim}}$,
				\ref{en:sim:B:δiss} - $q_{\mathrm{pre}}$,
				\ref{en:sim:B:sta} - $q_{\mathrm{pre}}$,
			},
		]
			\addplot+ [my boxplot] table [y=diss] {latex_data/bp_narx_diss_sim};
			\addplot+ [my boxplot] table [y=sta] {latex_data/bp_narx_diss_sim};
			\addplot+ [my boxplot] table [y=diss] {latex_data/bp_narx_diss_pre};
			\addplot+ [my boxplot] table [y=sta] {latex_data/bp_narx_diss_pre};
		\end{axis}
	\end{tikzpicture}
	\caption{Box plots of the performance indexes for system $\mathsf{B}$.}
	\label{fig:perf_B}
\end{figure}
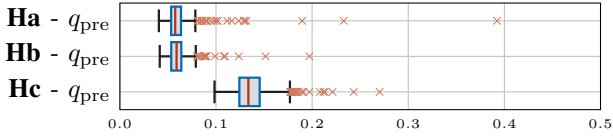
\begin{figure}[t]
	\centering
	\begin{tikzpicture}
		\begin{axis}[
			width=0.9\linewidth,
			height=3cm,
			xmin=0,
			xmax=0.5,
			ymin=0.5,
			ymax=3.5,
			boxplot/draw direction=x,
			xmajorgrids,
			ymajorgrids,
			xtick style = {draw=none},
			xticklabel style={
				font=\tiny,
				/pgf/number format/fixed,
				/pgf/number format/zerofill,
				/pgf/number format/precision=1,
			},
			scaled x ticks=false,
			ytick style = {draw=none},
			ytick={1,...,3},
			yticklabels={%
				\ref{en:sim:H:δiss} - $q_{\mathrm{pre}}$,
				\ref{en:sim:H:δbibs} - $q_{\mathrm{pre}}$,
				\ref{en:sim:H:sta} - $q_{\mathrm{pre}}$,
			},
		]
			\addplot+ [my boxplot] table [y=diss] {latex_data/bp_hodgkin_pre};
			\addplot+ [my boxplot] table [y=dbibs] {latex_data/bp_hodgkin_pre};
			\addplot+ [my boxplot] table [y=sta] {latex_data/bp_hodgkin_pre};

		\end{axis}
	\end{tikzpicture}
	\caption{Box plots of the performance indexes in prediction of system $\mathsf{H}$.}
	\label{fig:perf_hodgkin_pre}
\end{figure}
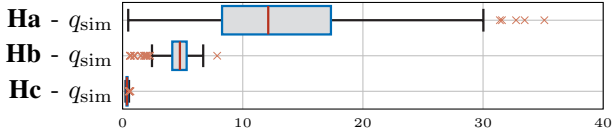
\begin{figure}[t]
	\centering
	\begin{tikzpicture}
		\begin{axis}[
			width=0.9\linewidth,
			height=3cm,
			xmin=0,
			xmax=40,
			ymin=0.5,
			ymax=3.5,
			boxplot/draw direction=x,
			xmajorgrids,
			ymajorgrids,
			xtick style = {draw=none},
			xticklabel style={
				font=\tiny,
				/pgf/number format/fixed,
				/pgf/number format/zerofill,
				/pgf/number format/precision=0,
			},
			scaled x ticks=false,
			ytick style = {draw=none},
			ytick={1,...,3},
			yticklabels={%
				\ref{en:sim:H:δiss} - $q_{\mathrm{sim}}$,
				\ref{en:sim:H:δbibs} - $q_{\mathrm{sim}}$,
				\ref{en:sim:H:sta} - $q_{\mathrm{sim}}$,
			},
		]
			\addplot+ [my boxplot] table [y=diss] {latex_data/bp_hodgkin_sim};
			\addplot+ [my boxplot] table [y=dbibs] {latex_data/bp_hodgkin_sim};
			\addplot+ [my boxplot] table [y=sta] {latex_data/bp_hodgkin_sim};

		\end{axis}
	\end{tikzpicture}
	\caption{Box plots of the performance indexes in simulation of system $\mathsf{H}$.}
	\label{fig:perf_hodgkin_sim}
\end{figure}
\begin{figure}[t]
	\centering
	\begin{tikzpicture}
		\begin{axis}[
			name=pre,
			width=\linewidth,
			height=3cm,
			xmin=0,
			xmax=50,
			xmajorgrids,
			ymajorgrids,
			xticklabels=none,
			y tick label style={font=\tiny,},
		]

		\addplot+[mark=none, solid, smooth, semithick, color=Black] table [x=t, y=true] {latex_data/ts_hodgkin_y};
		\addplot+[mark=none, solid, smooth, semithick, color=BrickRed] table [x=t, y=sta_pre] {latex_data/ts_hodgkin_y};
		\addplot+[mark=none, solid, smooth, semithick, color=NavyBlue] table [x=t, y=diss_pre] {latex_data/ts_hodgkin_y};

		\end{axis}
		\begin{axis}[
			name=sim,
			at=(pre.south),
			anchor=north,
			yshift=-1pt,
			width=\linewidth,
			height=3cm,
			xmin=0,
			xmax=50,
			xmajorgrids,
			ymajorgrids,
			xlabel = {$t\,[\mathrm{s}]$},
			x label style={font=\scriptsize,},
			x tick label style={font=\tiny,},
			y tick label style={font=\tiny,},
		]

		\addplot+[mark=none, solid, smooth, semithick, color=Black] table [x=t, y=true] {latex_data/ts_hodgkin_y};
		\addplot+[mark=none, solid, smooth, semithick, color=BrickRed] table [x=t, y=sta_sim] {latex_data/ts_hodgkin_y};
		\addplot+[mark=none, solid, smooth, semithick, color=NavyBlue] table [x=t, y=diss_sim] {latex_data/ts_hodgkin_y};

		\end{axis}
	\end{tikzpicture}
	\caption{
		The top (bottom) plot shows the comparison between the prediction (simulation) sequence of~\ref{en:sim:H:sta} (solid red line), the prediction (simulation) sequence of~\ref{en:sim:H:δiss} (solid blue line), and the output of System $\mathsf{H}$ (solid black line).
		All the sequences are computed by taking the estimated model from the Monte Carlo run with median performance index using a newly generated validation dataset.
	}
	\label{fig:ts_hodgkin}
\end{figure}

\section{Concluding Remarks}
\label{sec:end}

Identifying nonlinear models possessing certain desired stability properties is of crucial importance in applications.
Yet, existing results are limited in scope, and a systematic procedure to learn stable nonlinear systems is still a challenging problem.
In this article, we make a step towards the solution of this problem showing how stability properties can be effectively included in a kernel-based learning procedure.
In particular, we embed these properties in the hyperparameters' selection algorithm, and we guarantee stability by constraining the usual Tikhonov regularization optimization problem.

The proposed approach has the merit of offering a systematic non-parametric procedure to guarantee stability of the learned models.
It is also a flexible solution allowing different kernel functions to be used.
Moreover, the stability guarantees that can be enforced refer to both the input to the system and to the prediction error.
In turn, stability with respect to the prediction error is shown to have considerable benefits when the model is used for simulation.
As a drawback, however, the stability conditions enforced (formally given in Propositions~\ref{prop:stability:f} and~\ref{prop:δstability:f}) may be quite conservative in general, and their restrictiveness increases with the model dimension.
In practice, this issue can have a detrimental impact on the data-fitting performances in certain cases.
Future research efforts will focus on quantifying this detriment and on addressing with this problem, which may require continuity conditions weaker than Lipschitz-continuity, and non-uniform contraction of the predictor.

\appendix

\subsection{Proof of Theorem~\ref{thm:representer:stable_c}}
\label{thm:representer:stable_c::proof}

For every $c\in\R^{n-m}$, define $\phi_c\coloneq \sum_{i=m+1}^n c_i k_\eta^{\bar z_i} \in \mathcal{H}(k_\eta)$ as in~\eqref{eq:def:f_c}, and let $\mathcal{H}_0 = \{ \phi_c \st c \in \R^{n-m} \} \subseteq \mathcal{H}(k_\eta)$ and $\mathcal{H}_\bot = \{ g \in \mathcal{H}(k_\eta) \st \forall f \in \mathcal{H}_0, \inner*{f}{g}_{k_\eta} = 0 \}$.
Since $\mathcal{H}_0$ is finite-dimensional, and therefore closed, for each $f \in \mathcal{H}(k_\eta)$ there exists $c \in \R^{n-m}$ and $g \in \mathcal{H}_\bot$ such that $f = \phi_c + g$.
Also, since $\mathcal{H}_0$ and $\mathcal{H}_\bot$ are orthogonal, $\norm{f}_{k_\eta}^2 = \norm{\phi_c}_{k_\eta}^2 + \norm{g}_{k_\eta}^2$.
Furthermore, from the definition of $\mathcal{H}_\bot$, $g(\bar{z}_i) = 0$ for all $i \in \{m+1, \ldots, n \}$.
From these facts, we obtain that $f \in \mathcal{H}(k_\eta)$ solves~\eqref{eq:opt_c:stable} if and only if there exists $(c, g) \in \R^{n-m} \times \mathcal{H}_\bot$ such that $f = \phi_c + g$, and $(c, g)$ solves
\begin{equation} \label{eq:opt_c:stable:divided}
	\min_{c \in \R^{n-m}, g \in \mathcal{H}_\bot}
	B(c, g)
	\hspace{3ex}
	\textnormal{ s.t. }
	\hspace{3ex}
	C(c, g) \le 0,
\end{equation}
where
\begin{align*}
	B(c, g)
	&
	=
	\sum_{i=m+1}^n \big( \phi_c(\bar{z}_i) - \bar{y}_i \big)^2 
	+ \beta \norm{\phi_c}^2_{k_\eta}
	+ \beta \norm{g}^2_{k_\eta}
	\\
	C(c, g)
	&
	=
	m \norm{\phi_c}^2_{k_\eta}
	+ m \norm{g}^2_{k_\eta}
	- \chi.
\end{align*}
Since, for all $g\in\mathcal{H}_\bot$, $B(c, 0) \le B(c, g)$ and $C(c, 0) \le C(c, g)$, we then conclude that, if there exists a solution of~\eqref{eq:opt_c:stable:divided} of the form $(c, 0)$, then $f=\phi_c$ solves~\eqref{eq:opt_c:stable}.
The reminder of the proof shows that $c$ given in~\eqref{eq:vec_c:stable} is such that this is the case, thereby concluding the proof.

First, notice that $(c, 0)$ is a solution of~\eqref{eq:opt_c:stable:divided} if and only if $c$ solves
\begin{equation} \label{eq:opt_c:stable:divided-bar}
	\min_{c \in \R^{n-m}}
	\bar{B}(c)
	\hspace{3ex}
	\textnormal{ s.t. }
	\hspace{3ex}
	\bar{C}(c) \le 0,
\end{equation}
where $\bar{B}(c) = B(c, 0)$ and $\bar{C}(c) = C(c, 0)$.
By definition of $K_\eta$, and since $k_\eta(a,b) = \inner{k_\eta^{a}}{k_\eta^{b}}_{k_\eta}$ \cite[Lem.~4.19]{Steinwart2008a}, for every $c\in\R^{n-m}$ we have
\begin{equation} \label{eq:opt_c:stable:norm}
	\norm{\phi_c}_{k_\eta}^2
	=
	\sum_{i=m+1}^n \sum_{j=m+1}^n c_i c_j \inner*{k_\eta^{\bar{z}_i}}{k_\eta^{\bar{z}_j}}_{k_\eta}
	=
	c^\top K_\eta c.
\end{equation}
In view of~\eqref{eq:opt_c:stable:norm}, we thus obtain $\bar{B}(c) = c^\top ( K_\eta + \beta I_n ) K_\eta c - 2 \bar{y}^\top K_\eta c + \bar{y}^\top \bar{y}$ and $\bar{C}(c) = m c^\top K_\eta c - \chi$.
Additionally, $\nabla \bar{B}(c) = 2c^\top ( K_\eta + \beta I_n ) K_\eta - 2 \bar{y}^\top K_\eta$ and $\nabla \bar{C}(c) = 2 m c^\top K_\eta$.
As a consequence, $\bar{B}$ and $\bar{C}$ are smooth, convex functions and, hence, the \emph{KKT conditions} are sufficient for optimality~\cite[Sec.~5.5.3]{Boyd2019a}.
Namely, if $(c,\lambda)\in\R^{n-m}\times [0,\infty)$ satisfies
\begin{equation} \label{eq:opt_c:stable:KKT}
	\nabla\bar{J}(c) + \lambda \nabla\bar{C}(c) = 0_n^\top,
	\hspace{2ex}
	\lambda \bar{C}(c) = 0,
	\hspace{2ex}
	\bar{C}(c) \le 0,
\end{equation}
then $c$ is an optimal solution of~\eqref{eq:opt_c:stable:divided-bar}.
The first of~\eqref{eq:opt_c:stable:KKT} is satisfied by $c = ( K_\eta + (\beta + \lambda m) I_{n-m} )^{-1} \bar{y}$.
Hence, if there exists $\alpha \in [\beta, \infty)$ such that
\begin{equation*}
	\mathsf{C2} \,:\, \frac{\alpha - \beta}{m} \gamma(\alpha) = 0,
	\hspace{3ex}
	\mathsf{C3} \,:\,\gamma(\alpha) \le 0,
\end{equation*}
where $\gamma$ is defined in~\eqref{eq:def:gamma}, then~\eqref{eq:opt_c:stable:KKT} are satisfied by $c = ( K_\eta + \alpha I_{n-m} )^{-1} \bar{y}$ and $\lambda = \frac{\alpha - \beta}{m}$.

We now find $\alpha$ satisfying $\mathsf{C2}$ and $\mathsf{C3}$, thereby completing the proof.
First, notice that $\gamma$ is non-increasing and $\lim_{\alpha\to\infty} \gamma(\alpha) = - \chi$.
Hence, if $\gamma$ is not always negative, there exists a $\bar{\alpha} \in [0,\infty)$ such that $\gamma(\bar{\alpha}) = 0$.
Otherwise, we let $\bar{\alpha} = 0$.
If $\beta \le \bar{\alpha}$, then $\mathsf{C2}$, $\mathsf{C3}$, and $\alpha \ge \beta$ are all satisfied by the choice $\alpha = \bar{\alpha}$ since $\gamma(\bar{\alpha}) \le 0$ by definition of $\bar{\alpha}$.
Otherwise, if $\beta > \bar{\alpha}$, we notice that $\alpha = \beta$ trivially fulfills $\mathsf{C2}$.
Moreover, $\mathsf{C3}$ is also satisfied because $\gamma$ is a non-increasing function and $\gamma(\bar{\alpha}) \le 0$.
Hence, we conclude the proof by noticing that, in all cases, $\alpha = \max(\bar{\alpha}, \beta)$.

\subsection{Proof of Proposition~\ref{prop:stability:kernel}}
\label{prop:stability:kernel::proof}

Pick $f\in\mathcal{H}(k_\eta)$.
Using the Cauchy-Schwarz inequality, for every $a\in\R^{2m+1}$, we obtain
\begin{equation*}
	\norm*{f(a)}^2
	=
	\norm*{\inner{f}{k_\eta^a}_{k_\eta}}^2
	=
	\inner[\big]{f}{k_\eta^a}_{k_\eta}^2
	\le
	\norm*{f}_{k_\eta}^2 \inner*{k_\eta^a}{k_\eta^a}_{k_\eta}.
\end{equation*}
Since $\inner{k_\eta^a}{k_\eta^a}_{k_\eta}=k_\eta(a,a)$, then $\norm*{f(a)}^2 \le \norm*{f}_{k_\eta}^2 k_\eta(a, a).$
Therefore, Conditions~\eqref{eq:stability:kernel:cond} lead to the implications
\begin{align*}
	\norm*{a}^2 \ge \nu
	&
	\Rightarrow \norm*{f(a)}^2 \le \norm*{f}_{k_\eta}^2 \norm*{a}^2,
	\\
	\norm*{a}^2 < \nu
	&
	\Rightarrow \norm*{f(a)} \le \norm*{f}_{k_\eta} \sqrt{k_\eta(a, a)} \le \norm*{f}_{k_\eta} \sqrt{s},
\end{align*}
which coincide with~\eqref{eq:stability:cond} for $a = \col(x,u)$ ($x \in \R^{2m}$ and $u \in \R$), $\mu = m \norm*{f}_{k_\eta}^2$, $\rho = \nu$, $d = \norm*{f}_{k_\eta} \sqrt{s}$ and $\omega(b) = \norm*{f}_{k_\eta}^2 b^2$.

\subsection{Proof of Proposition~\ref{prop:δstability:f}}
\label{prop:δstability:f::proof}

We divide the proof in several subsections for clarity of exposition.

\subsubsection{$\delta$ISS Lyapunov Candidate}

With $R\coloneq \diag(1,\ldots,m)$, let $P \coloneq \diag(R,R)$ and define the function $V:\R^{2m} \to [0,\infty]$ as $V(x)\coloneq x^\top P x$. Let $A$, $G$ and $H$ be the matrices defined in Section~\ref{sec:problem}.
Then, $P$ satisfies
\begin{equation} \label{eq:identities_P}
	\begin{gathered}
		A^\top P A - P = -I_{2m},
		\hspace{3ex}
		G^\top P G = H^\top P H = m, \\
		G^\top P H = 0,
		\hspace{3ex}
		A^\top P G = A^\top P H = 0_{2m},
	\end{gathered}
\end{equation}
and
\begin{equation} \label{eq:sandwitch_V}
	\forall x\in\R^{2m},
	\hspace{3ex}
	\norm*{x}^2 \le V(x) \le m \norm*{x}^2.
\end{equation}
Fix $x^a, x^b \in \R^{2m}$, $v^a, v^b, \varepsilon^a, \varepsilon^b \in \R$, and let
\begin{equation} \label{eq:def:tilde}
	\begin{aligned}
		\tilde{x} & \coloneq x^a - x^b,
		&
		\tilde{v} & \coloneq v^a - v^b,
		\\
		\tilde{\varepsilon} & \coloneq diff{\varepsilon}^a - diff{\varepsilon}^b,
		&
		\tilde{f} & \coloneq f(x^a,v^a) - f(x^b,v^b).
	\end{aligned}
\end{equation}
By using the Young's inequality $2\tilde{f}\tilde{\varepsilon} \le (1-\mu)\tilde{f}^2 + (1-\mu)^{-1}\tilde{\varepsilon}^2$~\cite{Young1912a} and~\eqref{eq:identities_P} we obtain
\begin{equation} \label{eq:DV}
	\begin{aligned}
		&V\big( A \tilde{x} + G \tilde{f} + G \tilde{\varepsilon} + H \tilde{v} \big) - V( \tilde{x} ) 
		=
		\\
		& \quad =
		-| \tilde{x} |^2 + m\tilde{f}^2 + 2m\tilde{f}\tilde{\varepsilon} + m\tilde{\varepsilon}^2 + m\tilde{v}^2
		\\
		& \quad \le
		-| \tilde{x} |^2 + m(2-\mu)\tilde{f}^{2} + m \frac{2-\mu}{1-\mu} \tilde{\varepsilon}^{2} + m \tilde{v}^{2} .
	\end{aligned}
\end{equation}
Using~\eqref{eq:δstability:cond:contractive} and~\eqref{eq:sandwitch_V}, we thus obtain
\begin{multline} \label{eq:DV_ge_rho}
	\norm{\tilde{x}}^2 + \norm{\tilde{v}}^2 \ge \rho
	\Rightarrow 
	V\big( A \tilde{x} + G \tilde{f} + G \tilde{\varepsilon} + H \tilde{v} \big)
	\\
	\le \lambda V(\tilde{x}) + \varrho_{\mathrm{u}}(\norm*{\tilde{v}}) + \varrho_{\mathrm{e}}(\norm*{\tilde{\varepsilon}}) ,
\end{multline}
with $\lambda \coloneq 1 - \frac{(1-\mu)^2}{m} \in [0,1)$, $\varrho_{\mathrm{u}}(s) \coloneq m ( 2 - \mu ) \omega(s) + m s^{2}$, and $\varrho_{\mathrm{e}}(s) \coloneq m \frac{2-\mu}{1-\mu} s^{2}$.
Notice that $\varrho_{\mathrm{u}}$ and $\varrho_{\mathrm{e}}$ are of class-$\mathcal{K}$.

\subsubsection{Proof of~\ref{prop:δstability:f:BIBS}}
\label{prop:δstability:f::proof:BIBS}

Fix $q \in [0,\infty)$ and pick arbitrarily two solutions $(x^a,\tilde{v}^a,\tilde{\varepsilon}^a)$ and $(x^b,\tilde{v}^b,\tilde{\varepsilon}^b)$ of $\Sigma(f)$ such that $\norm*{x^a_0-x^b_0} \le q$, $\norm*{\tilde{v}^a-\tilde{v}^b}_\infty \le q$, and $\norm*{\tilde{\varepsilon}^a - \tilde{\varepsilon}^b}_\infty \le q$.
For simplicity, we use the error variables~\eqref{eq:def:tilde}.
Moreover, define
\begin{align*} %
	c_1 & \coloneq \sqrt{\frac{2(\varrho_{\mathrm{u}}(q) + \varrho_{\mathrm{e}}(q))}{(1-\lambda)m}},
	\hspace{2ex}
	c_2 \coloneq \max \left\{ q , c_1, \sqrt{\rho} \right\}, \nonumber
	\\
	c_3 & \coloneq m c_2^2,
	\hspace{2ex}
	c_4 \coloneq \sup_{\norm*{\tilde{x}} \le c_2, \, \norm*{\tilde{v}} \le q,\, \norm*{\tilde{\varepsilon}} \le q} \norm*{\tilde{f}^2 + 2\tilde{f}\tilde{\varepsilon}} + 2q^2, \nonumber
	\\
	c_5 & \coloneq c_3 + 2m c_4.
\end{align*}
As $\lambda\in[0,1)$, then $c_1>0$.
Moreover, $c_4$ exists since, in view of~\eqref{eq:δstability:cond}, $\tilde f$ is bounded whenever so are $\tilde{x}$ and $\tilde{v}$.
Also, in view of the first equality in~\eqref{eq:DV}, we have
\begin{equation} \label{eq:V_le_c3_implies_vplus_le_c5}
	\forall t\in\N,\ V(\tilde{x}_t)\le c_3 \Rightarrow V(\tilde{x}_{t+1}) \le c_3 + m c_4 \le c_5.
\end{equation}
The proof is carried out by contradiction.
Suppose that there exists no $w\in[0,\infty)$ such that $\norm*{\tilde{x}}_\infty \le w$.
Then, in view of~\eqref{eq:sandwitch_V}, there exists $\bar{t}$ such that $V(\tilde{x}_{\bar{t}}) \ge \norm*{\tilde{x}_{\bar{t}}}^2 > c_5$ (otherwise $\norm*{\tilde{x}}_\infty\le w$ with $w=\sqrt{c_5}$).
In view of~\eqref{eq:sandwitch_V}, for every $t \in \N$, $|\tilde{x}_t| \le c_2$ implies $V(\tilde{x}_t) \le m |\tilde{x}_t|^2 \le m c_2^2 = c_3$ and thus, in view of~\eqref{eq:V_le_c3_implies_vplus_le_c5}, $V(\tilde{x}_{t+1}) \le c_5$.
Therefore, since $|\tilde{x}_0| \le q \le c_2$, we have $V(\tilde{x}_0) \le c_3$ and $V(\tilde{x}_1) \le c_5$.
Thus, since $V(\tilde{x}_{\bar{t}}) > c_5$, there exist $t_0, t_1 \in \N$ satisfying $1 \le t_0 < t_1 \le \bar{t}$ and such that $V(\tilde{x}_t) \in [c_3, c_5]$ for every $t \in \{t_0, \ldots, t_1-1\}$ and $V(\tilde{x}_{t_1}) > c_5$.
In view of~\eqref{eq:V_le_c3_implies_vplus_le_c5}, $V(\tilde{x}_{t_1 - 1}) \ge c_3$. Hence,
\begin{equation*}
	|\tilde{x}_{t_1 - 1}|^2 + |\tilde{v}_{t_1 - 1}|^2
	\ge |\tilde{x}_{t_1 - 1}|^2
	\ge \frac{V(\tilde{x}_{t_1-1})}{m}
	\ge \frac{c_3}{m}
	= c_2^2
	\ge \rho.
\end{equation*}
Therefore, from~\eqref{eq:DV_ge_rho}, we obtain $V(\tilde{x}_{t_1}) \le \lambda V(\tilde{x}_{t_1 - 1}) + \varrho_{\mathrm{u}}(q) + \varrho_{\mathrm{e}}(q)$.
Since $c_3 \le V(\tilde{x}_{t_1 - 1}) \le c_5$ and $c_3 = m c_2^2 \ge m c_1^2$, we can write
\begin{align*}
	V(\tilde{x}_{t_1})
	& \le V(\tilde{x}_{t_1 - 1}) - (1-\lambda) V(\tilde{x}_{t_1 - 1}) + \varrho_{\mathrm{u}}(q) + \varrho_{\mathrm{e}}(q)
	\\
	& \le c_5 - (1-\lambda) c_3 + \varrho_{\mathrm{u}}(q) + \varrho_{\mathrm{e}}(q)
	\\
	&
	\le c_5 - m(1-\lambda) c_1^2 + m \frac{1-\lambda}{2} c_1^2
	\le c_5 - \frac{m(1-\lambda)}{2} c_1^2.
\end{align*}
Since $c_1 \ge 0$, we obtain $V(\tilde{x}_{t_1}) \le c_5$, a contradiction.

\subsubsection{Preliminaries to the proof of~\ref{prop:δstability:f:wAG}}
\label{prop:δstability:f::proof:unif-cont}

Before proving~\ref{prop:δstability:f:wAG}, we state and prove two technical lemmas dealing with uniformly continuous functions.
\begin{lemma} \label{lem:f_unif_continuous_incr_bounded}
	Let $p \in \N$, $r \in [0, \infty)$, and $\varphi : \R^p \to \R$ be a uniformly continuous function.
	Then $(x,y) \mapsto \varphi(x)-\varphi(y)$ is bounded on $\{(x,y) \in \R^p \times \R^p \st |x-y| \le r \}$.
\end{lemma}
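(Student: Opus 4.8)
The plan is to exploit uniform continuity to get a single modulus that works everywhere, and then to ``walk'' from $x$ to $y$ in finitely many small steps along the segment joining them, summing the increments by the triangle inequality. The key point is that the number of steps needed depends only on $r$ and the modulus, not on the particular pair $(x,y)$, which yields the claimed uniform bound.

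Concretely, first I would apply the definition of uniform continuity with $\varepsilon = 1$: there exists $\delta > 0$ such that, for all $z,z'\in\R^p$, $\norm*{z-z'}\le\delta$ implies $\norm*{\varphi(z)-\varphi(z')}\le 1$. Next, set $N \coloneq \max\big\{1,\lceil r/\delta\rceil\big\}\in\N$, so that $r/N \le \delta$. Given any $(x,y)$ with $\norm*{x-y}\le r$, define the chain $z_j \coloneq x + \tfrac{j}{N}(y-x)$ for $j\in\{0,\dots,N\}$, so that $z_0=x$, $z_N=y$, and $\norm*{z_j - z_{j-1}} = \tfrac{1}{N}\norm*{y-x} \le r/N \le \delta$ for every $j$. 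By the choice of $\delta$, each consecutive increment satisfies $\norm*{\varphi(z_j)-\varphi(z_{j-1})}\le 1$.

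Then a telescoping sum and the triangle inequality give
\begin{equation*}
	\norm*{\varphi(x)-\varphi(y)}
	= \Bignorm*{\sum_{j=1}^{N}\big(\varphi(z_j)-\varphi(z_{j-1})\big)}
	\le \sum_{j=1}^{N}\norm*{\varphi(z_j)-\varphi(z_{j-1})}
	\le N.
\end{equation*}
Since $N$ depends only on $r$ and $\delta$ (hence only on $\varphi$ and $r$), and not on the chosen pair $(x,y)$, the map $(x,y)\mapsto\varphi(x)-\varphi(y)$ is bounded by $N$ on the set $\{(x,y)\in\R^p\times\R^p \st \norm*{x-y}\le r\}$, as claimed.

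I do not expect any real obstacle here: the only mild care needed is the degenerate case $r=0$ (or $x=y$), which is handled by taking $N\ge 1$ in the definition above so that the chain and the estimate remain valid; everything else is a routine application of uniform continuity plus a finite triangle inequality.
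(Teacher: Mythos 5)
Your proof is correct and follows essentially the same route as the paper's: fix $\delta$ from uniform continuity with $\varepsilon=1$, chain $N\ge r/\delta$ equally spaced points along the segment from $x$ to $y$, and telescope to obtain the bound $N$, which is independent of the pair $(x,y)$. The only cosmetic difference is your explicit handling of the degenerate case $r=0$ via $N\ge 1$, which the paper covers by taking $N\in\N_{>0}$.
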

\begin{proof}
	Since $\varphi$ is uniformly continuous, there exists $\delta > 0$ such that, for every $a,b\in\R^p$, $|a-b| \le \delta \Rightarrow |\varphi(a)-\varphi(b)| \le 1$.
	Let $N\in\N_{>0}$ be such that $N\ge \frac{r}{\delta}$.
	Pick $x,y \in \R^p$ satisfying $|x-y| \le r$.
	For each $k \in \{ 0, \ldots, N \}$, let $z_k \coloneq \frac{N-k}{N} x + \frac{k}{N} y$.
	Then, $| z_{k+1} - z_{k} | = \frac{1}{N} |x-y| \le \frac{r}{N} \le \delta$.
	Since $x = z_0$ and $y = z_N$, we then obtain
	$| \varphi(x)-\varphi(y) |
		= \big| \varphi(z_0) \pm \textstyle\sum_{k=1}^{N-1} \varphi(z_k) - \varphi(z_N) \big|
 		\le \sum_{k=0}^{N-1}\left| \varphi(z_k) - \varphi(z_{k+1}) \right|
		\le N$.
\end{proof}
\begin{lemma} \label{lem:continuity_sup}
	Let $p \in \N$, $\varphi : \R^p \to \R$ be a uniformly continuous function, and $g : \R \to [0, \infty)$ be a continuous function such that $g(0)=0$.
	Then, the function%
	\footnote{Notice that $k$ is defined on $[0,\infty)$ in view of Lemma~\ref{lem:f_unif_continuous_incr_bounded}.}
	$k : [0, \infty) \to [0, \infty)$ defined as
	$k(s) \coloneq \sup_{|x-y| \le s} g( \varphi(x) - \varphi(y) )$
	is continuous, non-decreasing, and $k(0) = 0$.
\end{lemma}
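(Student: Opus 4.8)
The plan is to verify the three asserted properties in turn; everything is elementary except continuity, which I would split into right- and left-continuity, the former being where the uniform continuity hypothesis on $\varphi$ is really used.

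\emph{Easy properties.} The identity $k(0)=0$ is immediate, since $|x-y|\le 0$ forces $x=y$, hence $\varphi(x)-\varphi(y)=0$ and $g(0)=0$. Monotonicity follows at once from the inclusion $\{(x,y):|x-y|\le s_1\}\subseteq\{(x,y):|x-y|\le s_2\}$ for $s_1\le s_2$. Finally, $k(s)<\infty$ for every $s$: by Lemma~\ref{lem:f_unif_continuous_incr_bounded} the map $(x,y)\mapsto\varphi(x)-\varphi(y)$ takes values in a bounded interval on $\{|x-y|\le s\}$, and $g$ is bounded on each compact interval, so $k$ is genuinely $[0,\infty)$-valued.

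\emph{Right-continuity.} Fix $s\ge 0$ and $\epsilon>0$. First confine the relevant data to a compact set: by Lemma~\ref{lem:f_unif_continuous_incr_bounded} there is $M<\infty$ with $|\varphi(x)-\varphi(y)|\le M$ whenever $|x-y|\le s+1$, and since $g$ is uniformly continuous on $[-M,M]$ there is $\delta_g\in(0,1)$ with $|g(a)-g(b)|\le\epsilon$ for $a,b\in[-M,M]$ satisfying $|a-b|\le\delta_g$. Using uniform continuity of $\varphi$, pick $\delta\in(0,1)$ with $|\varphi(u)-\varphi(v)|\le\delta_g$ whenever $|u-v|\le\delta$. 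Now, given any pair $(x,y)$ with $|x-y|\le s+\delta$: if $|x-y|\le s$ the term $g(\varphi(x)-\varphi(y))$ is already $\le k(s)$; otherwise replace $y$ by its ``projection'' $y'=x+\frac{s}{|x-y|}(y-x)$, which satisfies $|x-y'|=s$ and $|y-y'|=|x-y|-s\le\delta$. Then $|\varphi(y)-\varphi(y')|\le\delta_g$, both arguments lie in $[-M,M]$, so $g(\varphi(x)-\varphi(y))\le g(\varphi(x)-\varphi(y'))+\epsilon\le k(s)+\epsilon$. Taking the supremum over all such pairs gives $k(s+\delta)\le k(s)+\epsilon$, and with monotonicity this yields $\lim_{t\to s^+}k(t)=k(s)$; in particular this covers right-continuity at $s=0$.

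\emph{Left-continuity and conclusion.} Fix $s>0$ and $\epsilon>0$. If $k(s)\le\epsilon$ then $k(t)\ge 0\ge k(s)-\epsilon$ for all $t\in(0,s)$; otherwise, choose by definition of the supremum a pair $(x,y)$ with $|x-y|\le s$ and $g(\varphi(x)-\varphi(y))>k(s)-\frac{\epsilon}{2}>0$, so that $x\ne y$. Scaling the segment, set $y_\lambda=x+\lambda(y-x)$ for $\lambda\in(0,1)$; then $|x-y_\lambda|=\lambda|x-y|\le\lambda s<s$, and $y_\lambda\to y$ as $\lambda\to 1^-$, so by continuity of $\varphi$ and then of $g$ we get $g(\varphi(x)-\varphi(y_\lambda))\to g(\varphi(x)-\varphi(y))$; picking $\lambda$ close enough to $1$ gives $k(\lambda s)\ge g(\varphi(x)-\varphi(y_\lambda))>k(s)-\epsilon$ with $\lambda s<s$. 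Hence $\lim_{t\to s^-}k(t)=k(s)$, and combining with the previous step, $k$ is continuous on $[0,\infty)$. I expect the only genuinely delicate point to be the right-continuity argument: one must first trap all the occurring values $\varphi(x)-\varphi(y)$ inside a single fixed compact interval so that one modulus of continuity for $g$ suffices, and then exploit the uniform continuity of $\varphi$ to control the perturbation of $\varphi$-values caused by the projection $y\mapsto y'$ --- without uniform continuity of $\varphi$ this last estimate would not be uniform over the pair $(x,y)$, which is exactly the gap this lemma is designed to close.
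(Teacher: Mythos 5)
Your proof is correct. The decisive step — right-continuity — uses exactly the paper's mechanism: trap all values $\varphi(x)-\varphi(y)$ for $|x-y|\le s+1$ in a fixed compact interval $[-M,M]$ via Lemma~\ref{lem:f_unif_continuous_incr_bounded}, use uniform continuity of $g$ on that interval together with uniform continuity of $\varphi$, and perturb a pair at radius at most $s+\delta$ into a pair at radius at most $s$ (you project one endpoint; the paper shrinks both endpoints of a near-maximizing pair symmetrically — an immaterial difference). Where you genuinely diverge is in making the left-continuity explicit: you handle it with a separate, simpler argument (scale a near-maximizing pair inward by $\lambda<1$, using only pointwise continuity of $\varphi$ and $g$), whereas the paper proves the single estimate $|k(s+\delta(\iota))-k(s)|\le\iota$ and appeals to the arbitrariness of $s$; that appeal covers the left side only because its modulus $\delta(\iota)$ is monotone usable for all base points below $s$, a point the paper leaves implicit. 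Your two-sided split is therefore slightly more complete and self-contained, at the cost of a second (easy) argument; the paper's version is more economical but terser on exactly this point. No gaps: the finiteness of $k$, the case $k(s)\le\epsilon$, and the deduction $x\ne y$ from $g(\varphi(x)-\varphi(y))>0=g(0)$ are all justified correctly.
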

\begin{proof}
	The only non-trivial claim to prove is continuity of $k$.
	In the following, for $h \in [0,\infty)$, we let $\Omega_h \coloneq \{ (x,y)\in\R^p\times \R^p \st |x-y| \le h\}$.
	Pick $s \in [0, \infty)$ and $\iota \in (0, \infty)$, and let $M_{s+1} \coloneq \sup_{(x,y)\in\Omega_{s+1} } | \varphi(x)-\varphi(y) | \ge 0$, which exists in view of Lemma~\ref{lem:f_unif_continuous_incr_bounded}.
	Since $g$ is continuous, it is uniformly continuous on $[-M_{s+1}, M_{s+1}]$.
	Hence, there exists $\delta^g(\iota) \in (0, \infty)$ such that, for every $a,b\in[-M_{s+1},M_{s+1}]$, $|a-b|\le\delta^g(\iota) \Rightarrow |g(a)-g(b)| \le \frac{\iota}{2}$.
	Since $\varphi$ is uniformly continuous, there exists $\delta^\varphi(\iota) \in (0, \infty)$ such that, for every $(x,y),(x',y')\in \Omega_{s+1}$, $|x-x'|+|y-y'| \le \delta^\varphi(\iota)\Rightarrow| (\varphi(x)-\varphi(y)) - (\varphi(x')-\varphi(y')) | \le | \varphi(x)-\varphi(x') | + | \varphi(y)-\varphi(y') | \le \delta^g(\iota)$, which in turn implies $| g(\varphi(x)-\varphi(y)) - g(\varphi(x')-\varphi(y'))| \le \frac{\iota}{2}$.

	With $\delta(\iota) \coloneq \min\{1,\delta^\varphi(\iota)\}$, let $(\bar{x},\bar{y}) \in \Omega_{s+\delta(\iota)}$ be such that $k(s+\delta(\iota)) \le g(\varphi(\bar x)-\varphi(\bar y)) + \frac{\iota}{2}$ (which exists by definition of $k$), and define
	$\bar{x}' \coloneq \bar{x} -\frac{\delta(\iota)}{2(s+\delta(\iota))} (\bar{x}-\bar{y})$ and $\bar{y}' \coloneq \bar{y} +\frac{\delta(\iota)}{2(s+\delta(\iota))} (\bar{x}-\bar{y})$.
	Then, $|\bar{x}'-\bar{y}'| = \frac{s}{s+\delta(\iota)} |\bar{x}-\bar{y}|\le s$.
	Hence, $(\bar{x}',\bar{y}')\in\Omega_{s}$.
	Moreover, $|\bar{x}-\bar{x}'|+|\bar{y}-\bar{y}'| = \frac{\delta(\iota)}{s+\delta(\iota)} |\bar{x}-\bar{y}| \le \delta(\iota)$.
	As $\delta(\iota)\le 1$ and $\delta(\iota)\le \delta^\varphi(\iota)$, then
	\begin{gather*}
		| \varphi(\bar{x}) -\varphi(\bar{y}) | \le M_{s+1},
		\hspace{1ex}
		| \varphi(\bar{x}')-\varphi(\bar{y}') | \le M_{s+1},
		\\
		| g( \varphi(\bar{x})-\varphi(\bar{y}) ) - g( \varphi(\bar{x}')-\varphi(\bar{y}') ) | \le \frac{\iota}{2} .
	\end{gather*}
	Therefore, since $k$ is non-decreasing, and by definition of $(\bar{x},\bar{y})$, we obtain
	\begin{align*}
		&
		| k(s+\delta(\iota)) - k(s) | = k(s+\delta(\iota)) - k(s) \\
		& \hspace{3ex}\le k(s+\delta(\iota)) - g( \varphi(\bar{x}')-\varphi(\bar{y}') )
		\\
		&
		\hspace{3ex} \le g( \varphi(\bar x)-\varphi(\bar y) ) + \frac{\iota}{2} - g( \varphi(\bar{x}')-\varphi(\bar{y}'))
		\\ &
		\hspace{3ex} \le | g( \varphi(\bar x)-\varphi(\bar y) ) - g( \varphi(\bar x')-\varphi(\bar y') ) | + \frac{\iota}{2}
		\le \iota .
	\end{align*}
	The proof then follows by the arbitrariness of $s$ and $\iota$.
\end{proof}

\subsubsection{Proof of~\ref{prop:δstability:f:wAG}}
Assume that $f$ is uniformly continuous.
Pick arbitrarily two solutions $(x^a,\tilde{v}^a,\tilde{\varepsilon}^a)$ and $(x^b,\tilde{v}^b,\tilde{\varepsilon}^b)$ of $\Sigma(f)$.
For simplicity, we use the error variables~\eqref{eq:def:tilde}.
Let $q'\coloneq \max\{ |\tilde{v}|_\infty, |\tilde{\varepsilon}|_\infty\}$ and define the quantities
\begin{multline*}
	c_1' \coloneq \sqrt{\frac{2(\varrho_{\mathrm{u}}(q') + \varrho_{\mathrm{e}}(q'))}{(1-\lambda)m}},
	\quad
	c_2' \coloneq \max \left\{ q' , c_1' ,\sqrt{\rho} \right\},
	\\
	c_3' \coloneq m (c_2')^2,
	\quad
	c_4' \coloneq \sup_{|\tilde{x}| \le c_2', \, |\tilde{v}| \le q',\, |\tilde{\varepsilon}| \le q'} \left| \tilde{f}^2 + 2\tilde{f}\tilde{\varepsilon}\right| + 2(q')^2.
\end{multline*}
As in Section~\ref{prop:δstability:f::proof:BIBS},~\eqref{eq:δstability:cond} guarantee that $c_4' \ge 0$ exists.
Moreover, proceeding as in Section~\ref{prop:δstability:f::proof:BIBS}, we find that, for all $t\in\N$, $V(\tilde x_{t}) \ge c_3'$ implies $|\tilde{x}_t| \ge (c_2')^2 \ge \rho$ and $V(\tilde{x}_{t+1}) \le V(\tilde x_t) - \frac{(1-\lambda) m}{2} (c_1')^2$.
Furthermore, the first equality of~\eqref{eq:DV} yields $V(\tilde x_{t}) \le c_3' \Rightarrow V(\tilde x_{t+1}) \le c_3' + m c_4'$.
Assuming without loss of generality $q'>0$, these two inequalities and~\eqref{eq:sandwitch_V} suffice to claim the existence of a $\bar{t} \in \N$ such that
\begin{align} \label{eq:proof-stability:beta}
	& \forall t \ge \bar t, & & | \tilde{x}_t | \le \sqrt{c_3'+m c_4'} \eqqcolon \beta(q',\rho) ,
\end{align}
where we highlighted the dependency of $\beta$ from $q'$ and $\rho$.

Next, we can write
\begin{align*}
	\sup_{
		\substack{
			|\tilde{x}| \le \sqrt{c_3'},\\
			|\tilde{v}| \le q'
		}} \tilde{f}^2
	\le \sup_{| (\tilde x,\tilde u) | \le 2\max\{ \sqrt{c_3'}, q'\}} \tilde f^2
	\eqqcolon \omega_1(\max\{\sqrt{c_3'},q'\}).
\end{align*}
Lemma~\ref{lem:continuity_sup} applied with $p=2m+1$, $\varphi=f$, and $g(s)=s^2$ implies that $\omega_1$ is continuous, non-decreasing, and $\omega_1(0)=0$.
We also have $\omega_1(\max\{\sqrt{c_3'},q'\}) \le \omega_1(\sqrt{c_3'}) + \omega_1(q')$.
As a consequence, we can bound $c_4'$ as
\begin{align*}
	c_4'
	\le 2 \cdot \sup_{ | \tilde{x} | \le \sqrt{c_3'},\,| \tilde{v} | \le q'} \tilde f^2 + 3 (q')^2
	= \frac{\omega_2(c_3') - c_3' + \omega_3(q')}{m}
\end{align*}
in which $\omega_2(c_3') \coloneq c_3'+2m\omega_1(\sqrt{c_3'})$ and $\omega_3(q')\coloneq 2m\omega_1(q')+3m(q')^2$ are both of class-$\mathcal{K}$ since they are the sum of a class-$\mathcal{K}$ function (the map $c_3'\mapsto c_3'$ and $q'\mapsto 3m(q')^2$, respectively) and the functions $2m\omega_1(\sqrt{\cdot})$ and $2m\omega_1$, with $\omega_1$ that has the previously-claimed properties.
As a consequence, we have $\beta(q',\rho)^2 \le \omega_2(c_3') + \omega_3(q')$.
Since $c_3' \le 2m((\max\{c_1',q'\})^2+\rho)$, and $c_1'=c_1'(q')$, seen as a function of $q'$, is of class-$\mathcal{K}$, then by using the triangle inequality $\omega_2(s+z)\le \omega_2(2s)+\omega_2(2z)$~\cite[Eq.~(12)]{sontag_smooth_1989}, we can finally write $\beta(q',\rho) \le b(\rho) + \kappa(q')$, where $\kappa(q') \coloneq \sqrt{ \omega_2(4m \max\{c_1'(q'),q'\}^2) + \omega_3(q') }$ and $b(\rho) \coloneq \sqrt{\omega_2(4m \rho)}$.
Using the definition of $q'$, from~\eqref{eq:proof-stability:beta} we get
\begin{equation*}
	\limsup_{t\to\infty} | \tilde{x}_t | \le b(\rho) + \kappa \left( \max\{|\tilde{v}|_\infty ,|\tilde{\varepsilon}|_\infty\} \right),
\end{equation*}
and the result follows by the same arguments of~\cite[Lem.~II.1]{Sontag1996a} and since $b$ is of class-$\mathcal{K}$.

\subsubsection{Proof of~\ref{prop:δstability:f:ISS}}

If $\rho=0$, then $\delta$ISS directly follows from~\eqref{eq:DV_ge_rho}, because the function $(x,y) \mapsto V(x-y)$ is a $\delta$ISS Lyapunov function.
See,~\cite[Thm.~8]{tran_incremental_2016} and~\cite{Scandella2023b} for more details.

\subsection{Proof of Proposition~\ref{prop:δstability:kernel}}
\label{prop:δstability:kernel::proof}

Pick $f\in\mathcal{H}(k_\eta)$.
For every $a,b\in\R^{2m+1}$, $| f(a) - f(b) |^2
	=
	| \inner{f}{k_\eta^a}_{k_\eta} - \inner{f}{k_\eta^{b}}_{k_\eta} |^2
	=
	\inner{f}{k_\eta^a - k_\eta^{b}}_{k_\eta}^2
	\le
	| f |_{k_\eta}^2 \inner*{k_\eta^a - k_\eta^{b}}{k_\eta^a - k_\eta^{b}}_{k_\eta}$,
where we used the Cauchy-Schwarz inequality.
Since, for every $a,b \in \mathbb{R}^{2m+1}$, $k_\eta(a,b) = \inner{k_\eta^a}{k_\eta^b}_{k_\eta}$, we obtain $| f(a) - f(b) |^2 \le | f |_{k_\eta}^2 h_\eta(a, b)$, in which, we recall, $h_\eta(a,b)\coloneq k_\eta(a,a)-2k_\eta(a,b)+k_\eta(b,b)$.
Then, in view of~\eqref{eq:δstability:kernel:cond}, the proof follows by the same arguments of that of Proposition~\ref{prop:stability:kernel} (see Appendix~\ref{prop:stability:kernel::proof}).

\subsection{Proof of Proposition~\ref{prop:k:linear}}
\label{prop:k:linear::proof}

Since, for every $\eta=(\tau, \sigma) \in \Phi_k$, $k_\eta$ is continuous, then, by Proposition~\ref{prop:k:cont-uni_cont}, $k$ is $\rho$-viable if Condition~\eqref{eq:stability:kernel:cond:contractive} holds.
Moreover, $k_\eta(a,a) = \tau \norm*{a}^2 + \sigma$.
Hence, Condition~\eqref{eq:stability:kernel:cond:contractive} is equivalent to the existence of $\nu \in [0, \rho] \cap \R$ such that, for every $z \in [\nu,\infty)$, $g(z) \coloneqq (\tau - 1) z + \sigma \le 0$.
If $\tau \in (1, \infty)$, $g$ is increasing and strictly positive in $(0,\infty)$.
Therefore, the condition is not satisfied for any $\rho \in [0, \infty]$.
Instead, if $\tau \in [0,1]$, $g(z)$ is non-increasing.
Therefore, its maximum in the interval $[\nu,\infty)$ is $g(\nu) = (\tau - 1) \nu + \sigma$.
As $g(\nu)\ge g(\rho)$ for all $\nu\le\rho$, then we conclude that $k$ is $\rho$-viable if and only if $\tau \in [0,1]$ and $\sigma \in [0,\rho (1-\tau)]$.

As for what concerns $\rho$-$\delta$viability, notice that, for every $(\tau, \sigma) \in \Phi_k$, $h_\eta(a,b) = \tau \norm*{a - b}^2$. Hence,~\eqref{eq:δstability:kernel:cond:boundedness} trivially holds for every $(\tau, \sigma) \in \Phi_k$ with any $s>\tau\nu$, whereas~\eqref{eq:δstability:kernel:cond:contractive} holds if and only if $\tau\le 1$.

\subsection{Proof of Proposition~\ref{prop:k:polynomial}}
\label{prop:k:polynomial::proof}

We recall that $k_\eta(a,b) = \Gamma_\eta(a)^\top \Gamma_\eta(b)$ for every $a,b \in \R^{2m+1}$ and $\eta \in \Phi_k$, where $\Gamma_\eta$ is the vector-valued function whose entries are all the possible $\eta$th degree ordered products of its argument~\cite[Prop.~2.1]{Scholkopf2018a}.
By definition, $k$ is $\rho$-viable only if~\eqref{eq:stability:kernel:cond:contractive} holds, namely, only if there exists $\nu \in [0, \rho]$ such that, for all $\norm*{a}\ge \nu$, $\norm*{\Gamma_\eta(a)}^2 \le \norm*{a}^2$, which is never satisfied because $\Gamma_\eta$ is a polynomial function with degree $\eta>1$.
Therefore, $\Theta_k^\rho = \varnothing$.
The equality $\Delta_k^\rho = \varnothing$ can be obtained by similar arguments.

\subsection{Proof of Theorem~\ref{thm:k:stationary}}
\label{thm:k:stationary::proof}

First, we focus on $\Theta^\rho_k$.
Since $k_\eta(a,a) = \bar{k}_\eta(0_{2m+1})$, Conditions~\eqref{eq:stability:kernel:cond} read
\begin{align*}
	\forall a \in \R^{2m+1},
	& & \norm*{a}^2 \ge \nu
	& \Rightarrow \bar{k}_\eta(0_{2m+1}) \le \norm*{a}^2,
	\\
	\forall a \in \R^{2m+1},
	& & \norm*{a}^2 < \nu
	& \Rightarrow \bar{k}_\eta(0_{2m+1}) \le s .
\end{align*}
The second condition is fulfilled with any $\nu\in[0, \infty)$ and with $s\coloneq\bar{k}_\eta(0_{2m+1})$.
Instead, the first condition holds if and only if $\bar{k}_\eta(0_{2m+1}) \le \nu$.
Therefore, the claim of the theorem holds since $\nu \in [0, \rho]\cap\R$ such that $\bar{k}_\eta(0_{2m+1}) \le \nu$ exists if and only if $\rho \ge \bar{k}_\eta(0_{2m+1})$.

Regarding $\Delta^\rho_k$, let $g(z) \coloneq 2\bar{k}_\eta(0_{2m+1}) - 2\bar{k}_\eta(z)$ for all $z \in \R^{2m+1}$.
Then, for all $a,b \in \R^{2m+1}$, $h_\eta(a,b) = g(a-b)$.
Therefore, Conditions~\eqref{eq:δstability:kernel:cond} are equivalent to
\begin{subequations} \label{eq:proof:k:stationary}
	\begin{align}
		\forall z \in \R^{2m+1},
		& & \norm*{z}^2 \ge \nu
		& \Rightarrow g(z) \le \norm*{z}^2, \label{eq:proof:k:stationary:c1}
		\\
		\forall z \in \R^{2m+1},
		& & \norm*{z}^2 < \nu
		& \Rightarrow g(z) \le s. \label{eq:proof:k:stationary:c2}
	\end{align}
\end{subequations}
Now, notice that, for every $z \in \R^{2m+1}$,
\begin{multline}
	\norm*{\bar{k}_\eta(z)}
	=
	\norm*{k_\eta(z,0_{2m+1})}
	=
	\norm*{\inner{k^z_\eta}{k^{0_{2m+1}}_\eta}_{k_\eta}}
	\\
	\le
	\sqrt{k_\eta(z,z) k_\eta(0_{2m+1},0_{2m+1})}
	=
	\bar{k}_\eta(0_{2m+1}).
	\label{eq:barkz_le_bark0}
\end{multline}
Therefore, $g(z) \in [0, 4\bar{k}_\eta(0_{2m+1})]$ for all $z \in \R^{2m+1}$.
Hence,~\eqref{eq:proof:k:stationary:c2} always holds with any $\nu\in[0,\infty)$ and with $s = 4\bar{k}(0_{2m+1})$.
Thus, since~\eqref{eq:proof:k:stationary:c1} holds for some $\nu$ only if it holds for all $\nu'\ge \nu$, then $\Delta^\rho_k = \{ \eta \in \Phi_k \st \forall z \in \R^{2m+1}, \norm*{z}^2 \ge \rho \Rightarrow 2\bar{k}_\eta(0_{2m+1}) - 2\bar{k}_\eta(z) \le \norm*{z}^2 \}$.
Finally, the fact that, for every $\rho\in[0,\infty]$, $\Delta^\rho_k \supseteq \{ \eta \in \Phi_k \st 4\bar{k}_\eta(0_{2m+1}) \le \rho \}$ directly follows from~\eqref{eq:proof:k:stationary:c1} and $g(z) \le 4\bar{k}_\eta(0_{2m+1})$.

\subsection{Proof of Proposition~\ref{prop:k:gaussian}}
\label{prop:k:gaussian::proof}

Since $k$ is a stationary kernel structure with $\bar{k}_\eta(0_{2m+1}) = \tau + \sigma$ for every $\eta \in \Phi_k$, the claimed expression of $\Theta_k^\rho$ directly follows from Theorem~\ref{thm:k:stationary}.
In view of Theorem~\ref{thm:k:stationary}, $\Delta^\rho_k$ is the set of $\eta \in \Phi_k$ such that, for every $z \in \R^{2m+1}$, $\norm*{z}^2 \ge \rho$ implies $2\tau - 2\tau \exp ( - \gamma \norm*{z}^2 ) \le \norm*{z}^2$
or, equivalently,
\begin{equation} \label{eq:proof:gaussian:c}
	\forall \zeta \in [\rho, \infty),
	\hspace{0.5ex}
	g(\zeta) \coloneq 2\tau - 2\tau \exp ( - \gamma \zeta ) - \zeta \le 0 .
\end{equation}
Notice that $g$ is smooth, $g(0) = 0$, and its derivative is $g'(\zeta) = 2 \tau \gamma \exp(-\gamma \zeta) - 1$.
First, pick $\rho=0$.
We have $g'(\zeta) \le 0$ for every $\zeta \ge 0$ if and only if $2 \tau \gamma \le 1$ and, in this case, $g(\zeta)\le 0$ for all $\zeta \ge 0$.
Otherwise, $g$ is increasing in $0$.
Hence, $\Delta_k^0 = \{ (\tau,\gamma,\sigma)\in\Phi_k \st 2\tau\gamma \le 1\}$.
Next, pick $\rho\in(0,\infty]$.
If $2\tau\gamma\le 1$, then $(\tau,\gamma,\sigma) \in \Delta_k^0 \subseteq \Delta_k^\rho$.
Instead, if $2 \tau \gamma > 1$, $g$ is increasing in the interval $[0, \gamma^{-1}\log(2 \gamma \tau))$ and decreasing in $[\gamma^{-1}\log(2 \gamma \tau), +\infty)$.
Therefore, $g(\zeta)\le 0$ for all $\zeta\ge \bar\zeta$, in which $\bar\zeta$ is the unique strictly positive solution of $g(\bar\zeta)=0$, which is given by $v(\gamma, \tau)$.
Then,~\eqref{eq:proof:gaussian:c} holds if and only if $\rho\ge v(\gamma, \tau)$, implying $\Delta_k^\rho = \{ (\tau,\gamma,\sigma)\in\Phi_k\st v(\gamma, \tau) \le \rho \}$. 
For $\rho=\infty$, $\Delta_k^\infty = \Phi_k$ is a direct consequence of Theorem~\ref{thm:k:stationary}.
Finally, since $W(z) \le 0$, for all $z \le 0$, we obtain
$\Delta^\rho_k \supseteq
	\left\{
	(\tau, \gamma, \sigma) \in \Phi_k \st
	2\tau \le \rho
	\right\}
	=
	\left[0, \frac{\rho}{2} \right] \times [0, \infty)^2$.

\subsection{Proof of Proposition~\ref{prop:k:matern}}
\label{prop:k:matern::proof}

Since $k$ is a stationary kernel structure with $\bar{k}_\eta(0_{2m+1}) = \tau + \sigma$ for every $\eta \in \Phi_k$, the claimed expression of $\Theta_k^\rho$ directly follows from Theorem~\ref{thm:k:stationary}.
In view of Theorem~\ref{thm:k:stationary}, $\Delta^0_k$ is the set of $\eta \in \Phi_k$ such that, for every $z \in \R^{2m+1}$,
$2\tau - 2 \tau \big( 1 + \gamma \sqrt{3} \norm*{z} \big) \exp \big(- \gamma \sqrt{3} \norm*{z} \big) \le \norm*{z}^2$
or, equivalently, for every $\zeta\in [0, \infty)$,
$g(\zeta) \coloneq 2\tau - 2 \tau \big( 1 + \gamma \sqrt{3} \zeta \big) \exp \big(- \gamma \sqrt{3} \zeta \big) - \zeta^2 \le 0$.
Notice that $g$ is smooth, $g(0)=0$, and its derivative is $g'(\zeta) = 2\zeta \big( 3 \gamma^2 \tau \exp(- \gamma \sqrt{3} \zeta) - 1 \big)$.
We have $g'(\zeta) \le 0$ for every $\zeta \ge 0$ if and only if $3 \gamma^2 \tau \le 1$ and, in this case, $g(\zeta)\le 0$ for all $\zeta \ge 0$.
Instead, if $3\gamma^2 \tau>1$, $g$ is increasing in $0$.
Hence, $\Delta_k^0 = \{ (\tau,\gamma,\sigma) \in \Phi_k \st 3 \gamma^2 \tau \le 1\}$.

\subsection{Proof of Proposition~\ref{prop:k:pillo}}
\label{prop:k:pillo::proof}

Since $k$ is a stationary kernel structure with $\bar{k}_\eta(0_{2m+1}) = \tau \sum_{t=0}^{m-p} e^{-\xi t}$ for every $\eta \in \Phi_k$, the claimed expression of $\Theta_k^\rho$ directly follows from Theorem~\ref{thm:k:stationary}.
Regarding $\Delta_k^\rho$, notice that we can write
$\bar{k}_\eta(z) = 2\gamma\tau \sum_{t=0}^{m-p} e^{-\xi t} w_{\tilde{\eta}}(z_t)$
where $w_{\tilde{\eta}}$ is the Gaussian Kernel (described in Proposition~\ref{prop:k:gaussian}) with parameter $\tilde{\eta} = ((2\gamma)^{-1}, \gamma, 0) \in \Phi_w$.
Since $2 (2\gamma)^{-1} \gamma=1$, then $\tilde{\eta} \in \Delta^0_w$.
Therefore, according to Theorem~\ref{thm:k:stationary}, we have $2 w_{\tilde{\eta}}(0_{2m+1}) - 2 w_{\tilde{\eta}}(z_t) \le \norm*{z_t}^2 \le \norm*{z}^2$.
Then, we obtain
\begin{align*}
	2 \bar{k}_\eta(0_{2m+1}) - 2 \bar{k}_\eta(z)
	&
	=
	2 \gamma\tau \sum_{t=0}^{m-p} e^{-\xi t} \big( 2 w_{\tilde{\eta}}(0_{2m+1}) - 2 w_{\tilde{\eta}}(z_t) \big)
	\\
	&
	\le
	\norm*{z}^2 \cdot 2 \gamma\tau \sum_{t=0}^{m-p} e^{-\xi t},
\end{align*}
from which the claim follows by Theorem~\ref{thm:k:stationary}, since $\sum_{t=0}^{m-p} e^{-\xi t}= \pi(\xi, p)$.

\subsection{Proof of Proposition~\ref{prop:k:sum}}
\label{prop:k:sum::proof}

First, let us focus on $\Theta^\rho_k$. Conditions~\eqref{eq:stability:kernel:cond} read
\begin{align}
	\forall a \in \R^{2m+1},
	& & |a|^2 \ge \nu
	& \Rightarrow \sum_{i=1}^q \tau_i w_{i,\eta_i} (a, a) \le |a|^2, \label{eq:proof:ksum:c1}
	\\
	\forall a \in \R^{2m+1},
	& & |a|^2 < \nu
	& \Rightarrow \sum_{i=1}^q \tau_i w_{i,\eta_i} (a, a) \le s. \label{eq:proof:ksum:c2}
\end{align}
Pick $\rho\in[0,\infty]$.
For every $i \in \{1, \ldots, q\}$, $\eta_i \in \Theta^\rho_{w_i}$ implies that there exist $\nu_i\in[0,\rho]\cap\R$ and $s_i \in [0, \infty)$ such that
\begin{align}
	\forall a \in \R^{2m+1},
	& & |a|^2 \ge \nu_i
	& \Rightarrow w_{i,\eta_i} (a, a) \le |a|^2, \label{eq:proof:ksum:c1i}
	\\
	\forall a \in \R^{2m+1},
	& & |a|^2 < \nu_i &\Rightarrow w_{i,\eta_i} (a, a) \le s_i. \label{eq:proof:ksum:c2i}
\end{align}
Let $\nu\coloneq\max_{i=1,\dots,q}\nu_i$ and $s\coloneq \max_{i=1,\dots,q}\max\{\nu_i,s_i\}$.
If $a\in\R^{2m+1}$ satisfies $|a|^2\ge \nu$, then $|a|^2\ge \nu_i$ for all $i=1,\dots,q$.
Hence, by~\eqref{eq:proof:ksum:c1i}, if $\sum_{i=1}^q\tau_i\le 1$, we obtain $\sum_{i=1}^q\tau_iw_{i,\eta_i}(a,a)\le |a|^2$.
Thus,~\eqref{eq:proof:ksum:c1} holds.
Consider now the case in which $|a|^2<\nu$.
For each $i$ such that $|a|^2<\nu_i$,~\eqref{eq:proof:ksum:c2i} yields $w_{i,\eta_i}(a,a)\le s_i\le s$; otherwise,~\eqref{eq:proof:ksum:c1i} yields $w_{i,\eta_i}(a,a)\le |a|^2 \le \nu_i\le \nu\le s$.
In both cases, $w_{i,\eta_i}(a,a)\le s$.
Then,~\eqref{eq:proof:ksum:c2} holds.

The claim regarding $\Delta^\rho_k$ follows by similar arguments.

\subsection{Proof of Proposition~\ref{prop:k:prod_stationary}}
\label{prop:k:prod_stationary::proof}

Let $\eta_1 \in \Theta^\rho_\ell$.
Then, there exist $s_1 \in (0, \infty)$ and $\nu_1 \in [0, \rho]$ such that
\begin{align*}
	\forall a \in \R^{2m+1},
	& & |a|^2 \ge \nu_1
	& \Rightarrow \ell_{\eta_1} (a,a) \le |a|^2,
	\\
	\forall a \in \R^{2m+1},
	& & |a|^2 < \nu_1
	& \Rightarrow \ell_{\eta_1} (a,a) \le s_1.
\end{align*}
Since $w_{\eta_2} (a,a) = \bar{w}_{\eta_2} (0_{2m+1})$, Conditions~\eqref{eq:stability:kernel:cond} read as
\begin{align*}
	\forall a \in \R^{2m+1},
	& & |a|^2 \ge \nu
	& \Rightarrow \ell_{\eta_1} (a,a) \bar{w}_{\eta_2} (0_{2m+1}) \le |a|^2,
	\\
	\forall a \in \R^{2m+1},
	& & |a|^2 < \nu
	& \Rightarrow \ell_{\eta_1} (a,a) \bar{w}_{\eta_2} (0_{2m+1}) \le s.
\end{align*}
Therefore, the second condition is satisfied with $s = s_1 \bar{w}_{\eta_2} (0_{2m+1})$ and $\nu = \nu_1$.
Instead, with $\nu = \nu_1$, the first condition is satisfied if $\bar{w}_{\eta_2}(0_{2m+1}) \le 1$.

\section*{References}
\bibliographystyle{IEEEtran}
\bibliography{ref}

% Generated by IEEEtran.bst, version: 1.14 (2015/08/26)
\begin{thebibliography}{10}
\providecommand{\url}[1]{#1}
\csname url@samestyle\endcsname
\providecommand{\newblock}{\relax}
\providecommand{\bibinfo}[2]{#2}
\providecommand{\BIBentrySTDinterwordspacing}{\spaceskip=0pt\relax}
\providecommand{\BIBentryALTinterwordstretchfactor}{4}
\providecommand{\BIBentryALTinterwordspacing}{\spaceskip=\fontdimen2\font plus
\BIBentryALTinterwordstretchfactor\fontdimen3\font minus
  \fontdimen4\font\relax}
\providecommand{\BIBforeignlanguage}[2]{{%
\expandafter\ifx\csname l@#1\endcsname\relax
\typeout{** WARNING: IEEEtran.bst: No hyphenation pattern has been}%
\typeout{** loaded for the language `#1'. Using the pattern for}%
\typeout{** the default language instead.}%
\else
\language=\csname l@#1\endcsname
\fi
#2}}
\providecommand{\BIBdecl}{\relax}
\BIBdecl

\bibitem{Ljung2010a}
L.~Ljung, ``Perspectives on system identification,'' \emph{Annu Rev Control},
  vol.~34, no.~1, pp. 1--12, 2010.

\bibitem{Sjoeberg1995a}
J.~Sj{\"{o}}berg, Q.~Zhang, L.~Ljung, A.~Benveniste, B.~Delyon, P.-Y.
  Glorennec, H.~Hjalmarsson, and A.~Juditsky, ``Nonlinear black-box modeling in
  system identification: A unified overview,'' \emph{Automatica}, vol.~31,
  no.~12, pp. 1691--1724, 1995.

\bibitem{Box2015a}
G.~E.~P. Box, G.~M. Jenkins, G.~C. Reinsel, and G.~M. Ljung, \emph{Time Series
  Analysis: Forecasting and Control}.\hskip 1em plus 0.5em minus 0.4em\relax
  Wiley, 2015.

\bibitem{Pintelon2012a}
R.~Pintelon and J.~Schoukens, \emph{System Identification}.\hskip 1em plus
  0.5em minus 0.4em\relax Wiley, 2012.

\bibitem{vanOverschee2012a}
P.~{van Overschee} and B.~L. {de Moor}, \emph{Subspace Identification for
  Linear Systems}.\hskip 1em plus 0.5em minus 0.4em\relax Springer, 1996.

\bibitem{Soderstrom1981a}
T.~S{\"{o}}derstr{\"{o}}m and P.~Stoica, ``On the stability of dynamic models
  obtained by least-squares identification,'' \emph{IEEE Trans Autom Control},
  vol.~26, no.~2, pp. 575--577, 1981.

\bibitem{Nallasivam2011a}
U.~Nallasivam, B.~Srinivasan, V.~Kuppuraj, M.~N. Karim, and R.~Rengaswamy,
  ``Computationally efficient identification of global {ARX} parameters with
  guaranteed stability,'' \emph{IEEE Trans Autom Control}, vol.~56, no.~6, pp.
  1406--1411, 2011.

\bibitem{Lacy2003a}
S.~L. Lacy and D.~S. Bernstein, ``Subspace identification with guaranteed
  stability using constrained optimization,'' \emph{IEEE Trans Autom Control},
  vol.~48, no.~7, pp. 1259--1263, 2003.

\bibitem{Umenberger2016a}
J.~Umenberger and I.~R. Manchester, ``Scalable identification of stable
  positive systems,'' in \emph{IEEE CDC}, 2016, pp. 4630--4635.

\bibitem{Umenberger2018a}
J.~Umenberger, J.~W{\aa}gberg, I.~R. Manchester, and T.~B. Sch{\"{o}}n,
  ``Maximum likelihood identification of stable linear dynamical systems,''
  \emph{Automatica}, vol.~96, pp. 280--292, 2018.

\bibitem{Cerone2011a}
V.~Cerone, D.~Piga, and D.~Regruto, ``Enforcing stability constraints in
  set-membership identification of linear dynamic systems,'' \emph{Automatica},
  vol.~47, no.~11, pp. 2488--2494, 2011.

\bibitem{Lauricella2020a}
M.~Lauricella and L.~Fagiano, ``Set membership identification of linear systems
  with guaranteed simulation accuracy,'' \emph{IEEE Trans Autom Control},
  vol.~65, no.~12, pp. 5189--5204, 2020.

\bibitem{Lataire2016a}
J.~Lataire and T.~Chen, ``Transfer function and transient estimation by
  {Gaussian} process regression in the frequency domain,'' \emph{Automatica},
  vol.~72, pp. 217--229, 2016.

\bibitem{Pillonetto2014a}
G.~Pillonetto, F.~Dinuzzo, T.~Chen, G.~{De Nicolao}, and L.~Ljung, ``Kernel
  methods in system identification, machine learning and function estimation: A
  survey,'' \emph{Automatica}, vol.~50, no.~3, pp. 657--682, 2014.

\bibitem{Scandella2022a}
M.~Scandella, M.~Mazzoleni, S.~Formentin, and F.~Previdi, ``Kernel-based
  identification of asymptotically stable continuous-time linear dynamical
  systems,'' \emph{Int J Control}, vol.~95, no.~6, pp. 1668--1681, 2022.

\bibitem{Darwish2018a}
M.~A.~H. Darwish, P.~B. Cox, I.~Proimadis, G.~Pillonetto, and R.~Tóth,
  ``Prediction-error identification of {LPV} systems: A nonparametric
  {Gaussian} regression approach,'' \emph{Automatica}, vol.~97, pp. 92--103,
  2018.

\bibitem{Cerone2012a}
V.~Cerone, D.~Piga, D.~Regruto, and R.~Tóth, ``Input-output {LPV} model
  identification with guaranteed quadratic stability,'' in \emph{IFAC SYSID},
  vol.~45, no.~16, 2012, pp. 1767--1772.

\bibitem{Qin1996a}
S.~J. Qin and T.~J. McAvoy, ``Nonlinear {FIR} modeling via a neural net {PLS}
  approach,'' \emph{Comput Chem Eng}, vol.~20, no.~2, pp. 147--159, 1996.

\bibitem{Pillonetto2018a}
G.~Pillonetto, ``System identification using kernel-based regularization: New
  insights on stability and consistency issues,'' \emph{Automatica}, vol.~93,
  pp. 321--332, 2018.

\bibitem{DeIuliis2022a}
V.~{De Iuliis}, F.~Smarra, C.~Manes, and A.~{D’Innocenzo}, ``Stability
  analysis of switched {ARX} models and application to learning with
  guarantees,'' \emph{Nonlinear Anal Hybrid Syst}, vol.~46, p. 101250, 2022.

\bibitem{Rubio2007a}
J.~de~Jesus~Rubio and W.~Yu, ``Stability analysis of nonlinear system
  identification via delayed neural networks,'' \emph{IEEE Trans Circuits Syst
  II: Express Br}, vol.~54, no.~2, pp. 161--165, 2007.

\bibitem{Bonassi2021a}
F.~Bonassi, M.~Farina, and R.~Scattolini, ``On the stability properties of
  gated recurrent units neural networks,'' \emph{Syst Control Lett}, vol. 157,
  p. 105049, 2021.

\bibitem{Bonassi2022a}
F.~Bonassi and R.~Scattolini, ``Recurrent neural network-based internal model
  control design for stable nonlinear systems,'' \emph{Eur J Control}, vol.~65,
  p. 100632, 2022.

\bibitem{Bevanda2022a}
P.~Bevanda, M.~Beier, S.~Kerz, A.~Lederer, S.~Sosnowski, and S.~Hirche,
  ``Diffeomorphically learning stable {Koopman} operators,'' \emph{IEEE Control
  Syst Lett}, vol.~6, pp. 3427--3432, 2022.

\bibitem{Khosravi2023c}
M.~Khosravi, ``{Representer} theorem for learning {Koopman} operators,''
  \emph{IEEE Trans Autom Control}, vol.~68, no.~5, pp. 2995--3010, 2023.

\bibitem{Umenberger2019a}
J.~Umenberger and I.~R. Manchester, ``Specialized interior-point algorithm for
  stable nonlinear system identification,'' \emph{IEEE Trans Autom Control},
  vol.~64, no.~6, pp. 2442--2456, 2019.

\bibitem{Shakib2023a}
M.~F. Shakib, R.~Tóth, A.~Y. Pogromsky, A.~Pavlov, and N.~{van de Wouw},
  ``Kernel-based learning of stable nonlinear state-space models,'' in
  \emph{IEEE CDC}, 2023, pp. 2897--2902.

\bibitem{Umlauft2020a}
J.~Umlauft and S.~Hirche, ``Learning stochastically stable {Gaussian} process
  state-space models,'' \emph{IFAC J Syst Control}, vol.~12, p. 100079, 2020.

\bibitem{Xiao2020a}
W.~Xiao, A.~Lederer, and S.~Hirche, ``Learning stable nonparametric dynamical
  systems with {Gaussian} process regression,'' in \emph{IFAC World Congr},
  vol.~53, 2020, pp. 1194--1199.

\bibitem{Beckers2016a}
T.~Beckers and S.~Hirche, ``Stability of {Gaussian} process state space
  models,'' in \emph{ECC}, 2016, pp. 2275--2281.

\bibitem{vanWaarde2022a}
H.~J. {van Waarde} and R.~Sepulchre, ``Training {Lipschitz} continuous
  operators using reproducing kernels,'' in \emph{L4DC}, ser. Proc. Mach.
  Learn. Res., vol. 168, 2022, pp. 221--233.

\bibitem{vanWaarde2023a}
------, ``Kernel-based models for system analysis,'' \emph{IEEE Trans Autom
  Control}, vol.~68, no.~9, pp. 5317--5332, 2023.

\bibitem{Andriano1997a}
V.~Andriano, A.~Bacciotti, and G.~Beccari, ``Global stability and external
  stability of dynamical systems,'' \emph{Nonlinear Anal Theory Methods Appl},
  vol.~28, no.~7, pp. 1167--1185, 1997.

\bibitem{Sontag1996a}
E.~D. Sontag and Y.~Wang, ``New characterizations of input-to-state
  stability,'' \emph{IEEE Trans Autom Control}, vol.~41, no.~9, pp. 1283--1294,
  1996.

\bibitem{sontag_smooth_1989}
E.~D. Sontag, ``Smooth stabilization implies coprime factorization,''
  \emph{IEEE Trans Autom Control}, vol.~34, no.~4, pp. 435--443, 1989.

\bibitem{jiang_input--state_2001}
Z.-P. Jiang and Y.~Wang, ``Input-to-state stability for discrete-time nonlinear
  systems,'' \emph{Automatica}, vol.~37, no.~6, pp. 857--869, 2001.

\bibitem{Angeli2002a}
D.~Angeli, ``A {Lyapunov} approach to incremental stability properties,''
  \emph{IEEE Trans Autom Control}, vol.~47, no.~3, pp. 410--421, 2002.

\bibitem{Dinuzzo2015a}
F.~Dinuzzo, ``Kernels for linear time invariant system identification,''
  \emph{SIAM J Control Optim}, vol.~53, no.~5, pp. 3299--3317, 2015.

\bibitem{Micchelli2006a}
C.~A. Micchelli, Y.~Xu, and H.~Zhang, ``Universal kernels,'' \emph{J Mach Learn
  Res}, vol.~7, no.~95, pp. 2651--2667, 2006.

\bibitem{Pillonetto2011a}
G.~Pillonetto, M.~H. Quang, and A.~Chiuso, ``A new kernel-based approach for
  nonlinearsystem identification,'' \emph{IEEE Trans Autom Control}, vol.~56,
  no.~12, pp. 2825--2840, 2011.

\bibitem{Mazzoleni2019b}
M.~Mazzoleni, M.~Scandella, and F.~Previdi, ``A comparison of manifold
  regularization approaches for kernel-based system identification,'' in
  \emph{IFAC ALCOS}, vol.~52, no.~29, 2019, pp. 180--185.

\bibitem{Formentin2019a}
S.~Formentin, M.~Mazzoleni, M.~Scandella, and F.~Previdi, ``Nonlinear system
  identification via data augmentation,'' \emph{Syst Control Lett}, vol. 128,
  pp. 56--63, 2019.

\bibitem{Formentin2021a}
S.~Formentin and A.~Chiuso, ``Control-oriented regularization for linear system
  identification,'' \emph{Automatica}, vol. 127, p. 109539, 2021.

\bibitem{Pillonetto2010a}
G.~Pillonetto and G.~{De Nicolao}, ``A new kernel-based approach for linear
  system identification,'' \emph{Automatica}, vol.~46, no.~1, pp. 81--93, 2010.

\bibitem{DallaLibera2021a}
A.~{Dalla Libera}, R.~Carli, and G.~Pillonetto, ``Kernel-based methods for
  {Volterra} series identification,'' \emph{Automatica}, vol. 129, p. 109686,
  2021.

\bibitem{Mazzoleni2020b}
M.~Mazzoleni, M.~Scandella, S.~Formentin, and F.~Previdi, ``Enhanced kernels
  for nonparametric identification of a class of nonlinear systems,'' in
  \emph{ECC}, 2020, pp. 540--545.

\bibitem{Bhujwalla2016a}
Y.~Bhujwalla, V.~Laurain, and M.~Gilson, ``An {RKHS} approach to systematic
  kernel selection in nonlinear system identification,'' in \emph{IEEE CDC},
  2016, pp. 3898--3903.

\bibitem{MacKay1992a}
D.~J.~C. MacKay, ``{Bayesian} interpolation,'' \emph{Neural Comput}, vol.~4,
  no.~3, pp. 415--447, 1992.

\bibitem{Golub1979a}
G.~H. Golub, M.~Heath, and G.~Wahba, ``Generalized cross-validation as a method
  for choosing a good ridge parameter,'' \emph{Technometrics}, vol.~21, no.~2,
  pp. 215--223, 1979.

\bibitem{Allen1974a}
D.~M. Allen, ``The relationship between variable selection and data
  agumentation and a method for prediction,'' \emph{Technometrics}, vol.~16,
  no.~1, pp. 125--127, 1974.

\bibitem{Stone1974a}
M.~Stone, ``Cross-validatory choice and assessment of statistical
  predictions,'' \emph{J R Stat Soc B}, vol.~36, no.~2, pp. 111--147, 1974.

\bibitem{Pillonetto2025a}
G.~Pillonetto, A.~Aravkin, D.~Gedon, L.~Ljung, A.~H. Ribeiro, and T.~B. Schön,
  ``Deep networks for system identification: A survey,'' \emph{Automatica},
  vol. 171, p. 111907, 2025.

\bibitem{Forgione2021a}
M.~Forgione and D.~Piga, ``Continuous-time system identification with neural
  networks: Model structures and fitting criteria,'' \emph{Eur J Control},
  vol.~59, pp. 69--81, 2021.

\bibitem{Piroddi2008a}
L.~Piroddi, ``Simulation error minimisation methods for {NARX} model
  identification,'' \emph{Int J Model Identif Control}, vol.~3, no.~4, p. 392,
  2008.

\bibitem{Krikelis2024a}
K.~Krikelis, J.-S. Pei, K.~{van Berkel}, and M.~Schoukens, ``Identification of
  structured nonlinear state-space models for hysteretic systems using neural
  network hysteresis operators,'' \emph{Measurement}, vol. 224, p. 113966,
  2024.

\bibitem{Scholkopf2018a}
B.~Scholkopf and A.~J. Smola, \emph{Learning with Kernels: Support Vector
  Machines, Regularization, Optimization, and Beyond}.\hskip 1em plus 0.5em
  minus 0.4em\relax MIT Press, 2018.

\bibitem{Vapnik1998a}
V.~N. Vapnik, \emph{Statistical learning theory}.\hskip 1em plus 0.5em minus
  0.4em\relax Wiley, 1998.

\bibitem{Wahba1990a}
G.~Wahba, \emph{Spline Models for Observational Data}.\hskip 1em plus 0.5em
  minus 0.4em\relax SIAM, 1990.

\bibitem{Lataire2017a}
J.~Lataire, R.~Pintelon, D.~Piga, and R.~Tóth, ``Continuous-time linear
  time-varying system identification with a frequency-domain kernel-based
  estimator,'' \emph{IET Control Theory Appl}, vol.~11, no.~4, pp. 457--465,
  2017.

\bibitem{Pillonetto2022b}
G.~Pillonetto, T.~Chen, A.~Chiuso, G.~D. Nicolao, and L.~Ljung,
  \emph{Regularized System Identification}.\hskip 1em plus 0.5em minus
  0.4em\relax Springer, 2022.

\bibitem{Chen2012a}
T.~Chen, H.~Ohlsson, and L.~Ljung, ``On the estimation of transfer functions,
  regularizations and {Gaussian} processes — revisited,'' \emph{Automatica},
  vol.~48, no.~8, pp. 1525--1535, 2012.

\bibitem{Aronszajn1950a}
N.~Aronszajn, ``Theory of reproducing kernels,'' \emph{Trans Am Math Soc},
  vol.~68, no.~3, pp. 337--404, 1950.

\bibitem{Saitoh2016a}
S.~Saitoh and Y.~Sawano, \emph{Theory of Reproducing Kernels and
  Applications}.\hskip 1em plus 0.5em minus 0.4em\relax Springer, 2016.

\bibitem{Rasmussen2006a}
C.~E. Rasmussen and C.~K.~I. Williams, \emph{{Gaussian} Processes for Machine
  Learning}.\hskip 1em plus 0.5em minus 0.4em\relax MIT Press, 2006.

\bibitem{Mazzoleni2022a}
M.~Mazzoleni, A.~Chiuso, M.~Scandella, S.~Formentin, and F.~Previdi,
  ``Kernel-based system identification with manifold regularization: A
  {Bayesian} perspective,'' \emph{Automatica}, vol. 142, p. 110419, 2022.

\bibitem{DeVito2004a}
E.~{De Vito}, L.~Rosasco, A.~Caponnetto, M.~Piana, and A.~Verri, ``Some
  properties of regularized kernel methods,'' \emph{J Mach Learn Res}, vol.~5,
  pp. 1363--1390, 2004.

\bibitem{Argyriou2014a}
A.~Argyriou and F.~Dinuzzo, ``A unifying view of {Representer} {Theorems},'' in
  \emph{ICML Proceedings}, vol.~32, no.~2, 2014, pp. 748--756.

\bibitem{Herzberg1969a}
P.~A. Herzberg, ``The parameters of cross-validation,'' \emph{Monogr Suppl
  Psychometrika}, vol.~34, pp. 1--70, 1969.

\bibitem{Pillonetto2015b}
G.~Pillonetto and A.~Chiuso, ``Tuning complexity in regularized kernel-based
  regression and linear system identification: The robustness of the marginal
  likelihood estimator,'' \emph{Automatica}, vol.~58, pp. 106--117, 2015.

\bibitem{Mu2018a}
B.~Mu, T.~Chen, and L.~Ljung, ``Asymptotic properties of generalized cross
  validation estimators for regularized system identification,'' in \emph{IFAC
  SYSID}, vol.~51, no.~15, 2018, pp. 203--208.

\bibitem{Mu2018b}
------, ``On asymptotic properties of hyperparameter estimators for
  kernel-based regularization methods,'' \emph{Automatica}, vol.~94, pp.
  381--395, 2018.

\bibitem{bard_practical_1998}
J.~F. Bard, \emph{Practical Bilevel Optimization}.\hskip 1em plus 0.5em minus
  0.4em\relax Springer, 1998.

\bibitem{bertsekas_nonlinear_2016}
D.~P. Bertsekas, \emph{Nonlinear programming}, 3rd~ed.\hskip 1em plus 0.5em
  minus 0.4em\relax Athena Sci., 2016.

\bibitem{Scandella2023b}
M.~Scandella, M.~Bin, and T.~Parisini, ``Kernel-based identification of
  incrementally input-to-state stable nonlinear systems,'' in \emph{IFAC World
  Congress}, vol.~56, 2023, pp. 5127--5132.

\bibitem{Steinwart2008a}
I.~Steinwart and A.~Christmann, \emph{Support vector machines}.\hskip 1em plus
  0.5em minus 0.4em\relax Springer, 2008.

\bibitem{Aguirre2002a}
L.~A. Aguirre, M.~V. Corêa, and C.~C.~S. Cassini, ``Nonlinearities in {NARX}
  polynomial models: representation and estimation,'' \emph{IEE Proc Control
  Theory Appl}, vol. 149, no.~4, pp. 343--348, 2002.

\bibitem{Karami2021a}
K.~Karami, D.~Westwick, and J.~Schoukens, ``Applying polynomial decoupling
  methods to the polynomial {NARX} model,'' \emph{Mech Syst Signal Process},
  vol. 148, p. 107134, 2021.

\bibitem{Farina2012a}
M.~Farina and L.~Piroddi, ``Identification of polynomial input/output recursive
  models with simulation error minimisation methods,'' \emph{Int J Syst Sci},
  vol.~43, no.~2, pp. 319--333, 2012.

\bibitem{Steinwart2006a}
I.~Steinwart, D.~Hush, and C.~Scovel, ``An explicit description of the
  reproducing kernel {Hilbert} spaces of {Gaussian} {RBF} kernels,'' \emph{IEEE
  Trans Inf Theory}, vol.~52, no.~10, pp. 4635--4643, 2006.

\bibitem{Stein1999a}
M.~L. Stein, \emph{Interpolation of spatial data: Some theory for
  kriging}.\hskip 1em plus 0.5em minus 0.4em\relax Springer, 1999.

\bibitem{Corless1996a}
R.~M. Corless, G.~H. Gonnet, D.~E.~G. Hare, D.~J. Jeffrey, and D.~E. Knuth,
  ``On the {Lambert-W} function,'' \emph{Adv Comput Math}, vol.~5, no.~1, pp.
  329--359, 1996.

\bibitem{Hodgkin1952a}
A.~L. Hodgkin and A.~F. Huxley, ``A quantitative description of membrane
  current and its application to conduction and excitation in nerve,'' \emph{J
  Physiol}, vol. 117, no.~4, pp. 500--544, 1952.

\bibitem{Scandella2021a}
M.~Scandella, M.~Mazzoleni, S.~Formentin, and F.~Previdi, ``A note on the
  numerical solutions of kernel-based learning problems,'' \emph{IEEE Trans
  Autom Control}, vol.~66, no.~2, pp. 940--947, 2021.

\bibitem{Boyd2019a}
S.~Boyd and L.~Vandenberghe, \emph{Convex Optimization}.\hskip 1em plus 0.5em
  minus 0.4em\relax Camb. Univ. Press, 2019.

\bibitem{Young1912a}
W.~H. Young, ``On classes of summable functions and their fourier series,''
  \emph{Proc R Soc Lond, Contain Pap Math Phys Character}, vol.~87, no. 594,
  pp. 225--229, 1912.

\bibitem{tran_incremental_2016}
D.~N. Tran, B.~S. R{\"{u}}ffer, and C.~M. Kellett, ``Incremental stability
  properties for discrete-time systems,'' in \emph{IEEE CDC}, 2016, pp.
  477--482.

\end{thebibliography}

\begin{IEEEbiography}[{\includegraphics[width=1in,height=1.25in,clip,keepaspectratio]{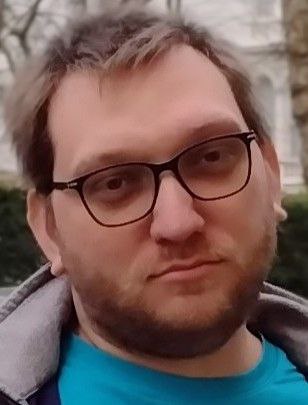}}]{Matteo Scandella} received the Ph.D. degree in engineering and applied science in 2020 from the University of Bergamo, Italy. Since February 2024, he has been with the Department of Management, Information and Production Engineering, University of Bergamo, Italy. From 2020 to 2024, he was with the Department of Electrical and Electronic Engineering, Imperial College London, UK. He serves as Associate Editor of European Journal of Control. His research interests include system identification, health monitoring and Bayesian methods.
\end{IEEEbiography}

\begin{IEEEbiography}[{\includegraphics[width=1in,height=1.25in,clip,keepaspectratio]{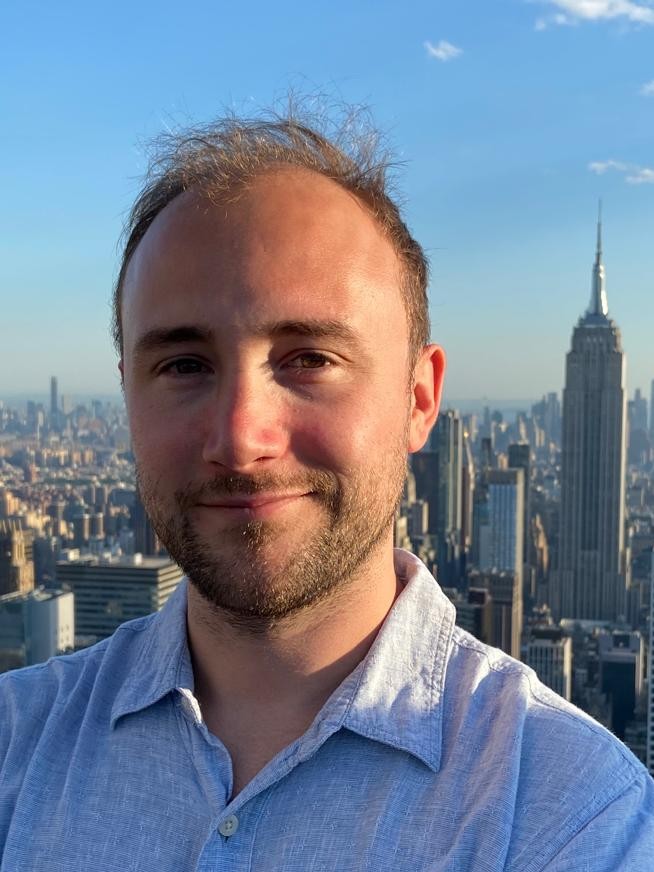}}]{Michelangelo Bin} (Member, IEEE) received the Ph.D. degree in Control Theory in 2019 from the University of Bologna, Italy. Since October 2022, he has been with the Department of Electrical, Electronic and Information Engineering, University of Bologna, Italy. From 2019 to 2023, he was with the Department of Electrical and Electronic Engineering, Imperial College London, UK. He serves as Associate Editor of Systems \& Control Letters. His research interests include systems theory, nonlinear control and regulation, and adaptive systems.
\end{IEEEbiography}

\begin{IEEEbiography}[{\includegraphics[width=1in,height=1.25in,clip,keepaspectratio]{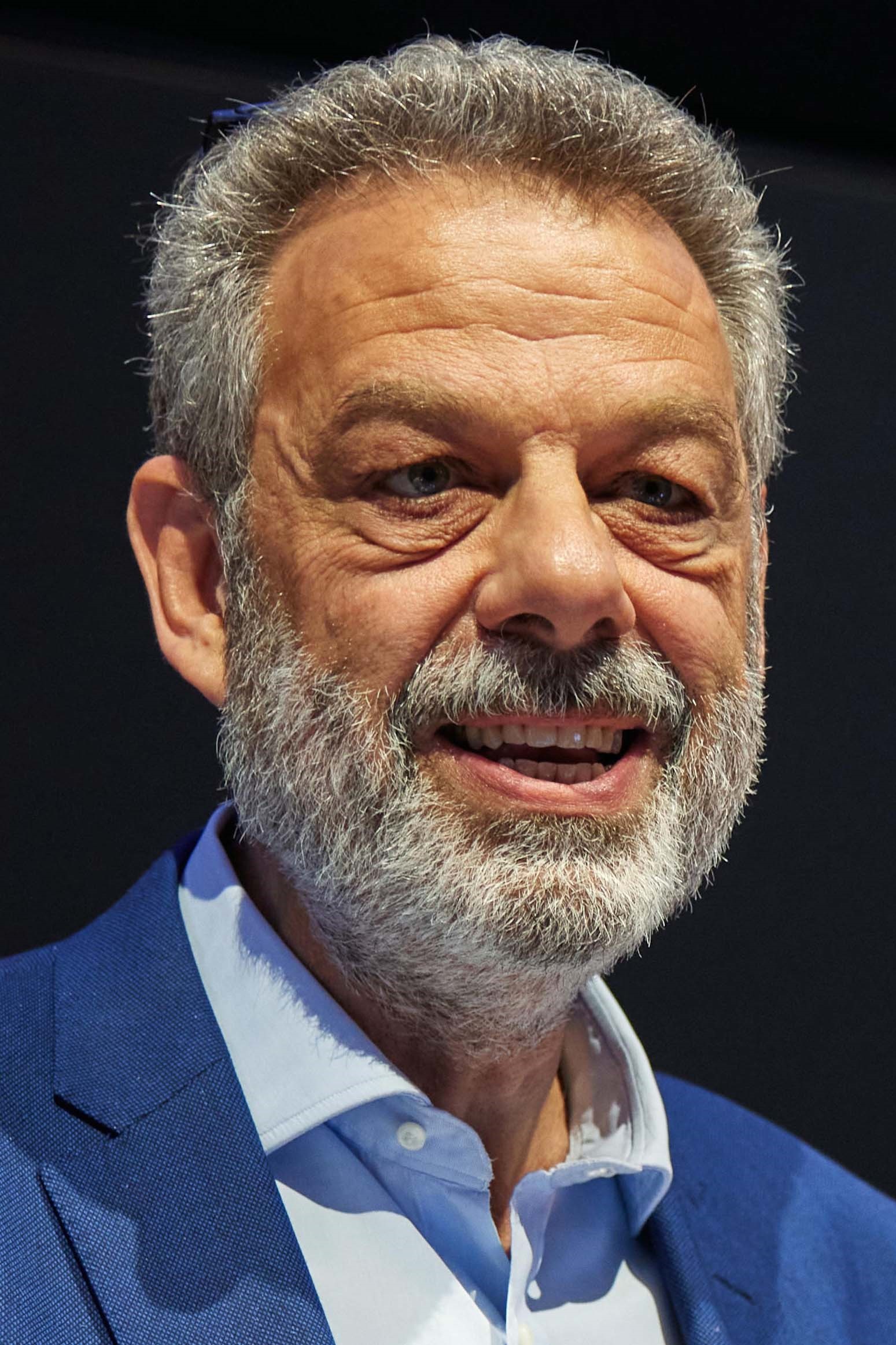}}]{Thomas Parisini} (Fellow, IEEE) received the Ph.D. degree in electronic engineering and computer science from the University of Genoa, Italy, in 1993. He was an Associate Professor with Politecnico di Milano, Milano, Italy. He currently holds the Chair of industrial control and is the Head of the Control and Power Research Group, Imperial College London, London, U.K. He also holds a Distinguished Professorship at Aalborg University, Denmark. Since 2001, he has been the Danieli Endowed Chair of automation engineering with the University of Trieste, Trieste, Italy, where from 2009 to 2012, he was the Deputy Rector. In 2023, he held a “Scholar-in-Residence”visiting position with Digital Futures-KTH, Stockholm, Sweden. He has authored or coauthored a research monograph in the Communication and Control Series, Springer Nature, and more than 400 research papers in archival journals, book chapters, and international conference proceedings. Dr. Parisini was the recipient of the Knighthood of the Order of Merit of the Italian Republic for scientific achievements abroad awarded by the Italian President of the Republic in 2023. In 2018 he received the Honorary Doctorate from the University of Aalborg, Denmark and in 2024, the IEEE CSS Transition to Practice Award. Moreover, he was awarded the 2007 IEEE Distinguished Member Award, and was co-recipient of the IFAC Best Application Paper Prize of the Journal of Process Control, Elsevier, for the three-year period 2011-2013 and of the 2004 Outstanding Paper Award of IEEE TRANSACTIONS ON NEURAL NETWORKS. In 2016, he was awarded as Principal Investigator with Imperial of the H2020 European Union flagship Teaming Project KIOS Research and Innovation Centre of Excellence led by the University of Cyprus with an overall budget of over 40 million Euros. He was the 2021-2022 President of the IEEE Control Systems Society and he was the Editor-in-Chief of IEEE TRANSACTIONS ON CONTROL SYSTEMS TECHNOLOGY (2009-2016). He was the Chair of the IEEE CSS Conference Editorial Board (2013-2019). Also, he was the associate editor of several journals including the IEEE TRANSACTIONS ON AUTOMATIC CONTROL and the IEEE TRANSACTIONS ON NEURAL NETWORKS. He is currently an Editor of Automatica and the Editor-in-Chief of the European Journal of Control. He was the Program Chair of the 2008 IEEE Conference on Decision and Control and General Co-Chair of the 2013 IEEE Conference on Decision and Control. He is a Fellow of IFAC. He is a Member of IEEE TAB Periodicals Review and Advisory Committee.
\end{IEEEbiography}

\end{document}